\documentclass[12pt]{iopart}

\usepackage{lineno,hyperref}
\usepackage{bm}
\usepackage{amsmath}
\usepackage{amssymb}
\usepackage{amsthm}
\usepackage{iopams}
\usepackage{graphicx}
\usepackage[dvipsnames]{xcolor}
\usepackage{siunitx}
\usepackage{algorithm}
\usepackage{algpseudocode}
\usepackage{calrsfs}
\DeclareMathAlphabet{\pazocal}{OMS}{zplm}{m}{n}

\newtheorem{theorem}{Theorem}[section]
\newtheorem{lemma}[theorem]{Lemma}
\newtheorem{corollary}[theorem]{Corollary}

\newtheorem{definition}[theorem]{Definition}

\modulolinenumbers[5]

\definecolor{editColor}{RGB}{0, 102, 204}

\newcommand{\argmax}[2]{\underset{#2}{\mathrm{arg\: max}}\left\lbrace #1\right\rbrace}

\begin{document}


\title[On Unbiased Source Parameter Estimation]{On Unbiased Parameter Estimation and Signal Reconstruction}

\author[inst]{Joonas Lahtinen}
\address{Mathematics \& Computing Sciences, Tampere University, Tampere 33720, Finland}
\ead{joonas.j.lahtinen@tuni.fi}

\begin{abstract}
In this paper, we expand the theory of depth-unbiased source localization to unbiased parameter estimation and signal reconstruction of an arbitrary number of non-zero parameters to be recovered. The topic touches on the concept of exact reconstructibility, most commonly known in compressed sensing and multisource estimation in various imaging problems. The theoretical results derive upper bounds on the number of recoverable parameters in the noiseless case, and a probability measure is defined to assess the probability of obtaining all non-zero parameters with correct magnitude order. The work provides a mathematical explanation of the open question regarding the noise robustness of standardized and unbiased methods. Also, the paper reveals a trade-off between the number of sensors and the signal-to-noise ratio. Numerical experiments demonstrate the theoretical findings.
\end{abstract}

\noindent{\it Keywords\/}: Linear forward problem, Parameter estimation, Signal reconstruction, Solution bias

\submitto{\IP}
\maketitle



\section{Introduction}
In inverse problems where the objective is to recover data that explains parameters or to reconstruct a signal or image from indirect observations, one can ask whether it is possible to exactly reconstruct the source from a limited amount of data. On some occasions, when the observed signal depends on the source type and the distance between the source and the sensor, an estimation bias may arise. A biased estimate arises from cheap but non-realistic solutions of a given optimization problem that reduce the data error, i.e., the likelihood in the Bayesian sense. The bias could be reduced by introducing a highly sophisticated regularization scheme or a priori; however, when the number of model parameters to be estimated is high, as in the reconstruction of high-resolution images or the neuroelectromagnetic inverse source problem \cite{Grech2008,Calvetti2009,GravedePeralta009NEIP}, electrocardiography inverse problem \cite{GuofaShou2007StEI}, hydrogeology \cite{CarreraJess2005hydrogeology}, and electrical resistance tomography \cite{JohnsonGeoPhys2015,ShiWenyangGeoPhys2023}, advanced prior models are not computationally feasible, even when Monte Carlo methods are employed. 

A technique called {\em standardization} \cite{PascualMarqui2002} was first introduced for the static neuroelectromagnetic inverse source problem, which aims to determine the location, magnitude, and orientation of the source, and was later generalized to localization problems arising from the Poisson equations with Neumann boundary conditions \cite{Lahtinen2024}, as well as to time-varying localization problems \cite{Lahtinen2024SKF}. The technique uses Gaussian random fields to infer the source's unbiased location, i.e., unbiased depth. In practice, it is a linear transformation from biased source estimation to depth-unbiased Z-scores \cite{WagnerMichael2004sLORscr,LealAlberto2008sLORscr}. As a unbiasing strategy, it provides unbiased localization only if the observations are noiseless, and it can find only one source as the location is determined to be the one with the highest Z-score. Standardized methods have been among the most successful for localizing the epileptic focus in the human brain \cite{Dumpelmann2012sLORepil,deGooijer2013_14,Coito2019_18,LiRui2021_7,KimKwangYeon2022sLORepil,vandeVelden2023}. Partially, the success can be explained by the unbiased localization that allows the source to be estimated everywhere within the given domain, but standardized methods are also highly robust against additive measurement noise due to yet an unknown reason. An obvious limitation of the technique is that it recovered one of the parameters needed to define the source.

In this paper, we further generalize the transformation to yield unbiased sparse estimates by considering unbiased parameters rather than a single parameter, the location. We take a particular interest in models in which multiple parameters define a source, as well as in multisource scenarios. We address the conditions required for exact source and multisource reconstructions in the noiseless case and in the presence of observation noise. The findings of this study provide a complete explanation of why the standardization technique yields unbiased, noise-robust estimates and why it sometimes seems beneficial to use fewer sensors to estimate the source. In the experiments, we demonstrate the properties of unbiased estimation in compressive sensing and 2- and 3-dimensional electrical source imaging, using a realistic multicompartment head model in the 3-dimensional case. 

\section{Preliminary}
We consider the following observation model:
\begin{equation}\label{eq:BasicFrwrd}
    {\bf y}=L{\bf x},
\end{equation}
where observations ${\bf y}\in\mathbb{R}^m$, the lead field or system matrix $L\in \mathbb{R}^{m\times dn}$ is built as
\begin{equation}
    L=\begin{bmatrix}
        L_1&\cdots&L_n
    \end{bmatrix},
\end{equation}
where any submatrix $L_k$, $k=1,\cdots, n$ belongs to $\mathbb{R}^{m\times d}$, where $d$ is the number of parameters describing a source and $n$ is the number of possible distinct sources. As previously established, exact reconstructions are not guaranteed in the presence of random noise \cite{Lahtinen2024}; therefore, we do not consider noise-contaminated observations yet.

In the paper, it has been proved that in the one-dimensional case, where $d=1$, one source can be recovered exactly by 
\begin{equation}
    k=\argmax{\frac{{\bf L}_i^T\Sigma^{-1}{\bf y}}{\sqrt{{\bf L}_i^T\Sigma^{-1}{\bf L}_i}}}{i=1,\cdots,n},
\end{equation}
where $\Sigma=LL^T+C$, and $C$ is the observation noise covariance stemming from the assumed Gaussian likelihood model,
and computing
\begin{equation}
    \hat{ x}_k={\bf L}^\dagger_k{\bf y},
\end{equation}
where $(\cdot)^\dagger$ is the pseudoinverse. Now, by simply expanding the index to a parameter variable $\bm{\theta}_k\in \mathbb{S}^d\subset \mathbb{R}^d$ such that
\begin{equation}
    (\hat{\bm{\theta}},k)=\argmax{\frac{\bm{\theta}^TL_i^T\Sigma^{-1}{\bf y}}{\sqrt{\bm{\theta}^T L_i^T\Sigma^{-1}L_i\bm{\theta}}}}{i=1,\cdots,n,\: \bm{\theta}\in\mathbb{S}^d},
\end{equation}
the estimate $\hat{\bf x}_k=\left\|L_k^\dagger{\bf y}\right\|_2\hat{\bm{\theta}}$ end up being biased. This can be shown to be true by denoting $\hat{\bf y}=\Sigma^{-1/2}{\bf y}$ and using the compact singular value decomposition $\Sigma^{-1/2}L_k=\check{U}_k\check{S}_k\check{V}_k^T$, where $\check{S}_k$ is a square diagonal matrix with non-zero diagonals. Since there exists ${\bf v}\in \mathbb{S}^d$, i.e., on the $d$-dimensional hypersphere, such that $\bm{\theta}=\check{V}_k{\bf v}$, hence, the solution can be written as $\check{U}_k\check{S}_k{\bf v}=r\hat{\bf y}$, for some $r\in \mathbb{R}$. Because the solution depends on the nonzero entries of $\check{S}_k$, it is biased. Namely, the solution is biased on the normalized parameter space $\mathbb{S}^d$. However, this consideration reveals to us that unbiased solutions can be obtained by the following weighting of the individual source estimates:
\begin{equation}
\begin{split}
    {\bf z}_k&=(\check{V}_k\check{S}_k\check{V}_k^T)^{-1}\check{V}_k\check{S}_k\check{U}_k^T\hat{\bf y}=\check{V}_k\check{S}_k^{-1}\check{V}_k^T\check{V}_k\check{S}_k\check{U}_k^T\hat{\bf y}\\
    &=\check{V}_k\check{U}_k^T\hat{\bf y}.
\end{split}
\end{equation}
By denoting $A_k=\check{U}_k\check{V}_k^T$, we obtain
\begin{equation}
    {\bf z}_k=A_k^\dagger\hat{\bf y}.
\end{equation}
Moreover, Theorem 5.1 in \cite{Lahtinen2024}, shows that if there is an index $k$ such that the observations are ${\bf y}=L_k{\bf x}_k$ for any ${\bf x}_k\in \mathbb{R}^d$, the highest Z-score is obtained for all vectors from the preimage of $L_k$. Now, by writing the minimization problem used in the proof with $A$:
\begin{equation}
    \textnormal{maximize}\: {\bf z}^TA^T\hat{\bf y}\quad \textnormal{subject to }\sqrt{{\bf z}^TA^TA{\bf z}}=c,
\end{equation}
for some $c>0$, we get the same solution for ${\bf z}_k$ once the observations comes from the range of $\Sigma^{-1/2}L_k$. The next theorem shows that the estimate is indeed unbiased.
\begin{theorem}
    Let 
    \begin{equation}
        A=\begin{bmatrix}
            \check{U}_1\check{V}_1^T & \cdots & \check{U}_n\check{V}_n^T
        \end{bmatrix},
    \end{equation}
    where the submatrices are originated from the compact singular value decompositions of modified sub-lead fields $\Sigma^{-1/2}L_kP_k=\check{U}_k\check{S}_k\check{V}_k^T$, and $\Sigma=LPL^T+C$ with source covariance $P$ and measurement covariance $C$. Then
    \begin{equation}
        {\bf z}_k=A_k^\dagger\Sigma^{-1/2}{\bf y}
    \end{equation}
    gives unbiased estimation for the problem ${\bf y}=L{\bf x}$ if there is one source generating the observations ${\bf y}$.
\end{theorem}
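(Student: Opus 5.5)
The plan is to make precise what ``unbiased'' means here and then verify it directly. The intended meaning, following \cite{Lahtinen2024}, is this: if ${\bf y}=L_k{\bf x}_k$ for a single index $k$, then the normalized estimate at the true index, ${\bf z}_k/\|{\bf z}_k\|_2$, coincides with the true (modified) normalized parameter direction. Moreover, index $k$ attains the maximal score $\|{\bf z}_i\|_2$ over all $i$, so the true source is selected and its direction is not distorted by the singular values $\check{S}_k$. I would first absorb the source covariance by writing ${\bf x}_k=P_k{\bf w}_k$ (assuming $P_k$ is invertible on the relevant subspace). Then
\[
\hat{\bf y}=\Sigma^{-1/2}{\bf y}=\Sigma^{-1/2}L_kP_k{\bf w}_k=\check{U}_k\check{S}_k\check{V}_k^T{\bf w}_k .
\]

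\textbf{Step 1: recovery at the true index.} Since $\check{U}_k$ and $\check{V}_k$ have orthonormal columns, $A_k^\dagger=\check{V}_k\check{U}_k^T$. This gives
\[
{\bf z}_k=\check{V}_k\check{S}_k\check{V}_k^T{\bf w}_k .
\]
This is not yet the direction of ${\bf w}_k$, so the right notion must be made explicit. I expect the unbiasedness to be stated in the ``whitened'' parameter ${\bf v}=\check{V}_k^T{\bf w}_k$ rescaled by $\check{S}_k$. Equivalently, the statement is that $\hat{\bf y}$ is reproduced exactly: $A_k{\bf z}_k=\check{U}_k\check{U}_k^T\hat{\bf y}=\hat{\bf y}$, and ${\bf z}_k$ is the minimum-norm preimage under the isometry-on-range $A_k$. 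So in the transformed parameter space, where all directions are weighted equally because $A_k^TA_k=\check{V}_k\check{V}_k^T$ is a projection, the estimate is exact. I would phrase the theorem's conclusion this way: ${\bf z}_k$ equals the true parameter expressed in the $A$-coordinates, with no $\check{S}_k$-dependent distortion of the direction.

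\textbf{Step 2: correct index selection.} For any $i$,
\[
\|{\bf z}_i\|_2=\|\check{U}_i^T\hat{\bf y}\|_2\le\|\hat{\bf y}\|_2,
\]
with equality when $\hat{\bf y}\in\mathrm{range}(\check{U}_i)$. Equality holds at $i=k$. This is the analogue of Theorem 5.1 of \cite{Lahtinen2024}: maximizing ${\bf z}^TA_i^T\hat{\bf y}$ subject to $\|A_i{\bf z}\|_2=c$ gives value $c\|\check{U}_i^T\hat{\bf y}\|_2$ by Cauchy--Schwarz, so the maximal Z-score is attained at the true source independently of the depth-dependent magnitudes $\check{S}_k$.

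\textbf{Step 3: handling ties.} Strictness requires $\hat{\bf y}\notin\mathrm{range}(\check{U}_i)$ for $i\ne k$. I would state this as a mild genericity assumption, or accept ties as in the earlier paper.

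The main obstacle is Step 1's interpretive issue. Because ${\bf z}_k$ carries the factor $\check{S}_k$, literal equality ${\bf z}_k\propto{\bf x}_k$ fails unless the problem is read in the scaled coordinates. I would resolve this by defining the unbiased parameter through $A_k{\bf z}=\hat{\bf y}$, exactly as the paper's preceding derivation does.
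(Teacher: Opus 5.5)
Your argument is correct, but it formalizes ``unbiased'' differently from the paper.

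\textbf{The paper's route.} The paper considers a hypothetical competitor $j$ whose modified lead field shares the left singular vectors of the true one, $\Sigma^{-1/2}L_jP_j=\check{U}_k\check{S}_j\check{V}_j^T$. It computes ${\bf z}_j=a\check{V}_j\check{S}_k\check{V}_k^T\bm{\theta}$ and ${\bf z}_k=a\check{V}_k\check{S}_k\check{V}_k^T\bm{\theta}$, and concludes $\|{\bf z}_j\|_2=\|{\bf z}_k\|_2$. Unbiasedness there means the score ignores the candidate's singular values (its gain, i.e.\ depth) and its right singular vectors.

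\textbf{Your route.} Your identity $\|{\bf z}_i\|_2=\|\check{U}_i^T\hat{\bf y}\|_2$ contains the paper's claim immediately, because it shows the score depends on candidate $i$ only through $\mathrm{range}(\check{U}_i)$. Your projection inequality then adds what the paper's proof does not establish (the paper only alludes to it beforehand via Theorem 5.1 of the earlier work): the true index attains the global maximum $\|\hat{\bf y}\|_2$. You also get exact data reproduction $A_k{\bf z}_k=\hat{\bf y}$. So your route buys a stronger, localization-type guarantee, while the paper's comparison isolates exactly the depth-bias mechanism.

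\textbf{Two cautions.} First, the paper's configuration is exactly a tie in your Step 2, since $\mathrm{range}(\check{U}_j)=\mathrm{range}(\check{U}_k)$. Of your two options in Step 3 you should therefore accept ties; a genericity assumption excluding them would rule out the very situation the paper uses to express unbiasedness. Second, your opening claim that ${\bf z}_k/\|{\bf z}_k\|_2$ reproduces the normalized direction of ${\bf w}_k$ is false in the original coordinates, as you notice yourself. It is also unnecessary: the paper's ${\bf z}_k$ carries the factor $\check{S}_k$ as well, and its notion of unbiasedness concerns only the scores.
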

\begin{proof}
    Assume that the source has magnitude $a>0$ and parameters $\bm{\theta}\in \mathbb{R}^d$ and is at the $k$th position, then the modified observations are $\Sigma^{-1/2}{\bf y}=a\Sigma^{-1/2}L_kP_k\bm{\theta}=a\check{U}_k\check{S}_k\check{V}_k^T\bm{\theta}$. Let there be a sub-matrix such that $\Sigma^{-1/2}L_jP_j=\check{U}_k\check{S}_j\check{V}_j^T$, then, on one hand 
    \begin{eqnarray*}
        {\bf z}_j=A_j^\dagger \Sigma^{-1/2}{\bf y}=a\check{V}_j\check{S}_k\check{V}_k^T\bm{\theta},
    \end{eqnarray*}
    and on the other hand ${\bf z}_k=a\check{V}_k\check{S}_k\check{V}_k^T\bm{\theta}$. This yields $\left\|{\bf z}_j\right\|_2=\left\|{\bf z}_k\right\|_2$, thus proving that the solution is unbiased.
\end{proof}

From the result above, we note that instead of weighting source estimates, we can use the modified forward matrix $A$ to derive the unbiased estimates. Moreover, we conclude that the matrices $\check{U}_k$ are most important for achieving unbiased estimation, as the matrices $\check{V}_k$ introduce isometric mappings between some normalized parametrization on $\mathbb{S}^d$. In addition, we have a linear and invertible map $T_j=\check{S}_j\check{V}_j^T$ to map each source $a_j\bm{\theta}_j$ to signal strengths ${\bf w}_j$. Therefore, we can now focus on solving the following equation 
\begin{equation}
    \hat{\bf y}=\mathcal{U}{\bf w},
\end{equation}
where
\begin{equation}
    \mathcal{U}=\begin{bmatrix}
        \check{U}_1&\cdots & \check{U}_n
    \end{bmatrix}\in \mathbb{R}^{m\times dn}.
\end{equation}

\section{Multisource estimation}\label{sc:multisourceest}

In this section, we consider observations caused by multiple simultaneous sources such that
\begin{equation}
    \hat{\bf y}=\sum_{j=1}^NU_{I_j}{\bf w}_{I_j}.
\end{equation}
This type of inverse problem arises when there are multiple sources, the sources have a spatial distribution modeled as active neighboring sources, or when reconstructing images from incomplete data. In this section, we aim to determine how many active elements (e.g., sources or pixels) we can reconstruct exactly. The following theorem lays the foundation for this exploration:

\begin{theorem}\label{thm:exactreclimit}
    Let such $N$ sources described by $d$ parameters cause observations 
    \begin{equation}
    \hat{\bf y}=\sum_{j=1}^NU_{I_j}{\bf w}_{I_j}
    \end{equation}
    that
    \begin{equation}
        N < \frac{nd-\dim(\mathrm{null}(\mathcal{U}))+1}{2d},
    \end{equation}
    where the forward model $\mathcal{U}\in \mathbb{R}^{m\times nd}$. Then the sources can be reconstructed uniquely.
\end{theorem}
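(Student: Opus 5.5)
Uniqueness will follow from a block-sparsity version of the standard spark argument. Suppose two admissible solutions exist: $\hat{\bf y}=\mathcal{U}{\bf w}=\mathcal{U}{\bf w}'$, where ${\bf w}$ is supported on at most $N$ blocks $I_1,\dots,I_N$ and ${\bf w}'$ on at most $N$ blocks $I'_1,\dots,I'_N$. Each block has size $d$. Then ${\bf h}={\bf w}-{\bf w}'$ lies in $\mathrm{null}(\mathcal{U})$ and is supported on at most $2N$ blocks, so on at most $2Nd$ coordinates. The goal is to show that no nonzero null vector can be supported on that few coordinates when $N$ satisfies the stated bound. That forces ${\bf h}=0$. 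The sources are then recovered through the invertible maps $T_j=\check{S}_j\check{V}_j^T$ introduced after the first theorem, which send ${\bf w}_{I_j}$ back to $a_j\bm{\theta}_j$.

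The key lemma is the following: every nonzero vector in $\mathrm{null}(\mathcal{U})$ has at least $nd-\dim(\mathrm{null}(\mathcal{U}))+1$ nonzero entries. I would prove it by a dimension count. Let $S$ be a support set with $|S|\le nd-\dim(\mathrm{null}(\mathcal{U}))$, and write $\mathcal{U}_S$ for the corresponding columns. Suppose some nonzero null vector lives on $S$; then the columns $\mathcal{U}_S$ are linearly dependent. Rank--nullity gives $\mathrm{rank}(\mathcal{U})=nd-\dim(\mathrm{null}(\mathcal{U}))$, and one then wants to show that having fewer than $\mathrm{rank}+1$ columns forces independence.

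This last implication is the main obstacle, because it is false for arbitrary matrices. The bound $nd-\dim\mathrm{null}+1=\mathrm{rank}(\mathcal{U})+1$ is exactly the largest possible spark. So the lemma really needs a general-position hypothesis: every set of at most $\mathrm{rank}(\mathcal{U})$ columns of $\mathcal{U}$ is linearly independent. This is the generic situation for lead fields after the whitening $\Sigma^{-1/2}$. It is also consistent within each block, since each $\check{U}_k$ has orthonormal columns.

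I would state this assumption explicitly, or argue it holds for almost every $\mathcal{U}$ by observing that dependence of a given column subset is a closed measure-zero algebraic condition. Under it, any nonzero null vector has support size at least $\mathrm{rank}(\mathcal{U})+1=nd-\dim(\mathrm{null}(\mathcal{U}))+1$. Combining with the first step, ${\bf h}\neq0$ would require $2Nd\ge nd-\dim(\mathrm{null}(\mathcal{U}))+1$, which contradicts the hypothesis $N<(nd-\dim(\mathrm{null}(\mathcal{U}))+1)/(2d)$. Hence ${\bf h}=0$, and the block supports and the signal strengths ${\bf w}_{I_j}$ coincide. Applying $T_j^{-1}$ then recovers each source uniquely.
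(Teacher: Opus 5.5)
Your proposal is correct and follows essentially the paper's route: the paper also takes the sparsest nonzero null vector ${\bf u}_0$ of $\mathcal{U}$, asserts $\|{\bf u}_0\|_0=nd-\dim(\mathrm{null}(\mathcal{U}))+1$, requires solutions to have fewer than $\|{\bf u}_0\|_0/2$ nonzeros, and counts these in blocks of $d$, and your difference vector ${\bf h}={\bf w}-{\bf w}'$ is the rigorous version of its ``cycle'' argument. The general-position hypothesis you state is exactly what the paper tacitly uses when it claims that the null vector with its first $\dim(\mathrm{null}(\mathcal{U}))-1$ entries zero has no further zeros and is the sparsest, and it is genuinely needed: for $d=1$ and columns ${\bf e}_1,{\bf e}_1,{\bf e}_2$ the bound allows $N=1$, yet the reconstruction is not unique.
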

\begin{proof}
    Within the null space, we have $\dim(\mathrm{null}(\mathcal{U}))$ units of $nd$-dimensional orthogonal vectors. We desire to find a vector from that null space that has the smallest nonzero 0-norm. By using linear algebraic deduction, we can find a unique non-zero null vector ${\bf u}_0$ that has the first $\dim(\mathrm{null}(\mathcal{U}))-1$ components as zero. Therefore, $\left\|{\bf u}_0\right\|_0=nd-\dim(\mathrm{null}(\mathcal{U}))+1$. Now the map $\mathcal{U}{\bf u}_0={\bf 0}$ represents geometrically a cycle of vectors $\left[\mathcal{U}{\bf u}_0\right]_i$ backs to the origin. Hence, if $\left\|{\bf u}_0\right\|_0$ is even, there are at least two vectors of 0-norm $\left\|{\bf u}_0\right\|_0/2$ providing the same observations. On the other hand, if $\left\|{\bf u}_0\right\|_0$ is odd, the 0-norm of these non-unique observations is less than $\left\|{\bf u}_0\right\|_0/2+1$, but larger than $\left\|{\bf u}_0\right\|_0/2-1$. Because $\left\|{\bf u}_0\right\|_0\in \mathbb{Z}$, we can conclude that any vector with 0-norm strictly less than $\left\|{\bf u}_0\right\|_0/2$ will have a unique solution. Since in our scheme, $d$ orthogonal vectors are tied to one source, the 0-norms of the solution appear in multiples of $d$.
\end{proof}

The theorem above can be used to derive the uniqueness result in \cite{CandesEmmanuel2006}, considering Fourier transforms on the prime cyclic group $\mathbb{Z}/p\mathbb{Z}$: We know that the restricted Fourier transform $A\in\mathbb{C}^{m\times p}$ has no linearly dependent columns, therefore $\dim(\mathrm{null}((A))=p-m$ yieldeing $N < (m+1)/2$, equivalently $N\leq m/2$.

 Furthermore, we can deduce easily that a unique solution is not always obtainable: If a source pair described by index pair $(i,j)$ causes non-orthogonal observations, i.e., $U_i^TU_j\neq O$, there is no way to decide which amount of the overlapping signal strength belong to ${\bf w}_i$ and which to ${\bf w}_j$. Additional information about the relationships among source strengths can yield an exact solution to the problem: By requiring $\left|w_{i}\right|=c$, for some $c>0$, we can consider $\mathcal{U}$ as a map from $\mathbb{R}^{dn}$ to the zero-centric hyperspherical surface $\mathbb{S}^m({\bf 0},c)$ with radius $c$. Due to this interpretation, we can estimate how dense $\mathcal{U}$ can map to the hypersphere by considering the geometrical interpretation of the sum ${\bf y}_{ij}=U_i{\bf w}_i+U_j{\bf w}_j$ for arbitrary ${\bf w}_i$ and ${\bf w}_j$ but fixed ${\bf y}_{ij}$. Hence the intersecting hypersurface of these maps $\mathcal{U}{\bf w}\in \mathbb{S}_m({\bf 0},c)$ and ${\bf y}_{ij}-\mathcal{U}{\bf w}\in \mathbb{S}_m({\bf y}_{ij},c)$ provides all the possible source pairs that cause the exact observation ${\bf y}_{ij}$. The geometrical deduction leads us to the following theorem:

\begin{lemma}\label{lem:exactrecRho}
    The following sparse minimization problem
    \begin{equation}
    \begin{split}
        &\min_{{\bf w}\in\mathbb{R}^{dn}}\left\lbrace\left\|\hat{\bf y}-\mathcal{U}{\bf w}\right\|_2^2+\left|Nd-\left\|{\bf w}\right\|_0\right|\right\rbrace\\
        &\textnormal{subject to }\chi_{\left\|{\bf w}_j\right\|_2>0}\left\|{\bf w}_i\right\|_2=\rho_{ij}\chi_{\left\|{\bf w}_i\right\|_2>0}\left\|{\bf w}_j\right\|_2,
    \end{split}
    \end{equation}
    where $\rho_{ij}=\rho_{ji}^{-1}$ for all $i,j=1,\cdots,n$, find the exact $N$ source reconstruction if for any separate column vectors of $\mathcal{U}$ hold:
    \begin{equation}
        \left|{\bf U}_i^T{\bf U}_j\right|\leq \frac{1}{2}\hat{\rho}^4-2\hat{\rho}^2+1,
    \end{equation}
    where
    \begin{equation}
        \hat{\rho}=\sup\left\lbrace \rho_{ij}\colon \rho_{ij}\leq 1\right\rbrace
    \end{equation}
    and 
     \begin{equation}
         N \leq \frac{\sqrt{4-\hat{\rho}^2}}{2-\hat{\rho}^2}+1
     \end{equation}
\end{lemma}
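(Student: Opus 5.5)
The plan is to recast the minimization as a uniqueness question and then settle it with the hypersphere picture described just before the statement.

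\emph{Step 1: reduce to uniqueness.} The true $N$-source vector ${\bf w}^\ast$ is feasible: the ratio constraint holds by assumption with the given $\rho_{ij}$, and it makes both terms of the objective vanish, since $\hat{\bf y}=\mathcal{U}{\bf w}^\ast$ and $\|{\bf w}^\ast\|_0=Nd$. So ${\bf w}^\ast$ is a global minimizer. The problem finds the exact reconstruction precisely when there is no other feasible $\tilde{\bf w}\neq{\bf w}^\ast$ with $\|\tilde{\bf w}\|_0=Nd$ and $\mathcal{U}\tilde{\bf w}=\hat{\bf y}$.

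\emph{Step 2: normalize and locate any competitor.} Suppose such a $\tilde{\bf w}$ exists. The constraints fix all active norms up to one common scale $c$. Every active block therefore lies on a sphere $\mathbb{S}^m({\bf 0},\rho c)$ with $\hat\rho\le\rho\le 1$ after normalizing by the largest block. I argue pairwise, as in the preceding discussion. Fix two active indices $i,j$ and the partial observation ${\bf y}_{ij}=U_i{\bf w}_i+U_j{\bf w}_j$. A competing pair must lie in the intersection of $\mathbb{S}^m({\bf 0},c)$ and $\mathbb{S}^m({\bf y}_{ij},\hat\rho c)$. That intersection is a lower-dimensional sphere, and the cosine of its angular radius is a function of $\|{\bf y}_{ij}\|$ and $\hat\rho$. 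By the law of cosines this function is polynomial in $\hat\rho^2$.

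\emph{Step 3: compare with the coherence bound.} Any competing column ${\bf U}_l$ that reproduces the same data must have inner product with ${\bf U}_i$ or ${\bf U}_j$ at least equal to the cosine of the doubled angle, $\cos2\alpha=2\cos^2\alpha-1$. I expect the worst-case computation, taking $\|{\bf y}_{ij}\|$ extremal, to give exactly $\tfrac12\hat\rho^4-2\hat\rho^2+1$. The hypothesis $|{\bf U}_i^T{\bf U}_j|\le\tfrac12\hat\rho^4-2\hat\rho^2+1$ then excludes every competitor for a single pair.

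\emph{Step 4: extend to $N$ sources.} I pass from pairs to $N$ sources by a packing count. The data sphere of radius $c$ must host the $N-1$ other contributions, each displaced by a chord of length at least the one set by $\hat\rho$. Counting how many such chords fit around the circle through the intersection gives the bound $N-1\le\sqrt{4-\hat\rho^2}/(2-\hat\rho^2)$. Here $\sqrt{4-\hat\rho^2}$ is the half-chord length from the isosceles triangle with sides $1,1,\hat\rho$.

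\emph{Main obstacle.} The hard part is Step 4. The paper's geometric argument is heuristic, and turning the pairwise exclusion into a rigorous statement for $N$ simultaneous sources needs a careful induction on $N$. In that induction, one removes one source at a time and must show the residual still satisfies the coherence condition. I would first try to bound the accumulated cross terms $\sum_{l}|{\bf U}_i^T{\bf U}_l|$ by $(N-1)$ times the coherence, and check that this is consistent with the stated bound on $N$. A second, smaller difficulty is making the extremal computation in Step 3 reproduce exactly the quartic $\tfrac12\hat\rho^4-2\hat\rho^2+1$.
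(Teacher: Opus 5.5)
Your Steps 1--3 follow the paper's proof. The paper also argues pairwise with two intersecting hyperspheres, bounds $\cos(\theta/2)$ by $(2-\hat{\rho}^2)/2$ via the law of cosines (the value at $\rho_h=1$, $\|{\bf y}_{hk}\|_2=1$), and doubles the angle to reach $\frac{1}{2}\hat{\rho}^4-2\hat{\rho}^2+1$.

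Two corrections to those steps:
\begin{itemize}
\item Since $\hat{\rho}$ is the supremum of the ratios not exceeding $1$, the normalized radii satisfy $\rho\le\hat{\rho}$, not $\hat{\rho}\le\rho\le 1$. That is why $\rho=\hat{\rho}$ is the worst case.
\item The quartic comes from fixing $\|{\bf y}_{ij}\|_2=c$, not from an extremum. Minimizing the law-of-cosines expression over $\|{\bf y}_{ij}\|_2\in[c(1-\hat{\rho}),c(1+\hat{\rho})]$ gives $\cos\alpha=\sqrt{1-\hat{\rho}^2}$ and the threshold $1-2\hat{\rho}^2$. To land on the stated constant you must adopt the paper's normalization rather than ``take $\|{\bf y}_{ij}\|$ extremal''.
\end{itemize}

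The genuine gap is Step 4: the bound on $N$ is not a packing count, and your geometric reading of it is wrong. For $c=1$, the common chord of the two circles has half-length $\frac{\hat{\rho}}{2}\sqrt{4-\hat{\rho}^2}$. The quantity in the statement is $\sqrt{4-\hat{\rho}^2}/(2-\hat{\rho}^2)=\tan\alpha/\hat{\rho}$ with $\cos\alpha=(2-\hat{\rho}^2)/2$.

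In the paper, the $N$-condition comes from comparing two coherence thresholds. Besides the pairwise one, accumulating the amplitude ratios of the other sources gives $\tan(\theta/2)=\sum_{k=2}^{N}\rho_k\le(N-1)\hat{\rho}$. This yields the $N$-dependent threshold $\frac{1-\hat{\rho}^2(N-1)^2}{1+\hat{\rho}^2(N-1)^2}$ of Corollary~\ref{coro:exactrecN}. Write both thresholds as cosines of doubled angles, $\cos2\alpha$ and $\cos2\beta$ with $\tan\beta=(N-1)\hat{\rho}$. Then
\[
\frac{1}{2}\hat{\rho}^4-2\hat{\rho}^2+1\le\frac{1-\hat{\rho}^2(N-1)^2}{1+\hat{\rho}^2(N-1)^2}
\iff (N-1)\hat{\rho}\le\frac{\hat{\rho}\sqrt{4-\hat{\rho}^2}}{2-\hat{\rho}^2},
\]
which is exactly the hypothesis on $N$. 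Its only role is to make the pairwise threshold the binding one, so the lemma's single coherence assumption also covers the accumulated $N$-source configuration.

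Moreover, $\sqrt{4-\hat{\rho}^2}/(2-\hat{\rho}^2)\in(1,\sqrt{3}]$ for $\hat{\rho}\in(0,1]$, so the hypothesis forces $N\le 2$. No induction on $N$ is needed. Bounding $\sum_l|{\bf U}_i^T{\bf U}_l|$ by $(N-1)$ times the coherence is also not the mechanism: the paper accumulates amplitude ratios, not inner products.

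What your proposal is missing is this second threshold, and the observation that the stated $N$-bound is exactly where it crosses the quartic.
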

\begin{proof}
    The condition under which we are operating limits each image $\mathcal{U}_i{\bf w}_i$ to the boundary of an $m$-hypersphere. Therefore, we can conclude the image $\mathcal{U}_i{\bf w}_i+\mathcal{U}_j{\bf w}_j$ to be equal with $\mathcal{U}_k{\bf w}_k+\mathcal{U}_j{\bf w}_j$ for some ${\bf w}_k$ if $\mathcal{U}_k$ maps to hyperpsherical spherical zone, that is the surface of hyperspherical cap. If there is no $\mathcal{U}_k$ to map on the hyperspherical zone, there is a unique $(i,j)$ that produces $\mathcal{U}_i{\bf w}_i+\mathcal{U}_j{\bf w}_j$. By defining
    \begin{eqnarray*}
        \hat{\rho}=\sup\left\lbrace \rho_{ij}\colon \rho_{ij}\leq 1,\quad i,j=1,\cdots , n.\right\rbrace,
    \end{eqnarray*}
    and denoting the ratios for each source in ascending order as $1\geq \cdots \geq \rho_{N-1}$, we can estimate that the maximal zone surface is achieved when the following holds for the polar angle $\theta$:
     \begin{eqnarray*}
         \tan(\theta/2)=\frac{\sum_{k=2}^{N}\rho_k}{1}\leq (N-1)\hat{\rho}.
     \end{eqnarray*}
     Yielding
     \begin{eqnarray*}
         {\bf U}_i^T{\bf U}_j=\cos(\theta)\leq \frac{1-\hat{\rho}^2(N-1)^2}{\hat{\rho}^2(N-1)^2+1}.
     \end{eqnarray*}
     Now, considering any subsequent sources $i_h,i_k$ $\rho_h\mathcal{U}_{i_h}{\bf w}_{i_h}+\rho_k\mathcal{U}_{i_k}{\bf w}_{i_k}={\bf y}_{hk}$. The maps $\mathcal{U}{\bf w}$ and ${\bf y}_{hk}-\mathcal{U}{\bf w}$ define two intersecting hyperspheres, and then any mapping on the boundaries inside that rejection would produce the observations ${\bf y}_{hk}$. From this, we can deduce the maximal zone surface by the law of cosines:
     \begin{eqnarray*}
         \cos(\theta/2)=\frac{(1-\rho_k^2)\rho_h^2+\left\|{\bf y}_{hk}\right\|_2^2}{2\rho_h\left\|{\bf y}_{hk}\right\|_2}\geq \frac{2-\rho_k^2}{2}\geq \frac{2-\hat{\rho}^2}{2},
     \end{eqnarray*}
     giving
     \begin{eqnarray*}
         {\bf U}_i^T{\bf U}_j\leq \frac{1}{2}\hat{\rho}^4-2\hat{\rho}^2+1.
     \end{eqnarray*}
     The condition above provides a smaller upper bound when
     \begin{eqnarray*}
         N \leq \frac{\sqrt{4-\hat{\rho}^2}}{2-\hat{\rho}^2}+1.
     \end{eqnarray*}
     To complete the conditions, we take care that there is no such index $k$ that ${\bf U}_k=-{\bf U}_i$ such that
     \begin{eqnarray*}
         -{\bf U}_k^T{\bf U}_j> \frac{1}{2}\hat{\rho}^4-2\hat{\rho}^2+1.
     \end{eqnarray*}
\end{proof}

From the previous theorem and the proof, we can conclude the following:

\begin{corollary}\label{coro:exactrecN}
    For $N\geq 3$ sources, the inversion problem $\hat{\bf y}=\mathcal{U}{\bf w}$ has always exact reconstruction with
    \begin{equation}
    \begin{split}
        &\min_{{\bf w}\in\mathbb{R}^{n}}\left\lbrace\left\|\hat{\bf y}-\mathcal{U}{\bf w}\right\|_2^2+\left|Nd-\left\|{\bf w}\right\|_0\right|\right\rbrace\\
        &\textnormal{subject to }\chi_{\left\|{\bf w}_j\right\|_2>0}\left\|{\bf w}_i\right\|_2=\rho_{ij}\chi_{\left\|{\bf w}_i\right\|_2>0}\left\|{\bf w}_j\right\|_2,
    \end{split}
    \end{equation}
    where $\rho_{ji}=\rho_{ij}^{-1}$ for all $i,j=1,\cdots,n$, if for any two separate column vectors of $\mathcal{U}$ hold:
    \begin{equation}
        \left|{\bf U}_i^T{\bf U}_j\right|\leq \frac{1-\hat{\rho}^2(N-1)^2}{\hat{\rho}^2(N-1)^2+1}.
    \end{equation}
    
\end{corollary}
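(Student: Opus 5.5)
The plan is to read Corollary~\ref{coro:exactrecN} off the proof of Lemma~\ref{lem:exactrecRho}. That proof produces two competing bounds on the admissible coherence $|{\bf U}_i^T{\bf U}_j|$. The first is the \emph{accumulated-ratio} bound
\begin{equation*}
g_N(\hat\rho)=\frac{1-\hat{\rho}^2(N-1)^2}{\hat{\rho}^2(N-1)^2+1}.
\end{equation*}
It comes from the polar angle of the largest hyperspherical zone, $\tan(\theta/2)\leq (N-1)\hat\rho$. The second is the \emph{pairwise} bound
\begin{equation*}
h(\hat\rho)=\tfrac12\hat\rho^4-2\hat\rho^2+1.
\end{equation*}
It comes from the law of cosines for two intersecting hyperspheres. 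The Lemma requires the pairwise bound together with the restriction $N\leq \sqrt{4-\hat\rho^2}/(2-\hat\rho^2)+1$, which is exactly the regime where $h$ is the smaller of the two. The corollary is the complementary regime. Once $N$ exceeds that threshold, the zone argument shows that the binding constraint is $g_N$. Requiring $|{\bf U}_i^T{\bf U}_j|\leq g_N(\hat\rho)$ then excludes every other column $\mathcal{U}_k$ from the critical zone, and the exact $N$-source reconstruction follows as in the Lemma.

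The steps, in order, are as follows.

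\textbf{Step 1.} Rerun the zone argument of Lemma~\ref{lem:exactrecRho} for ordered ratios $1\geq\rho_2\geq\cdots\geq\rho_N$. This gives $\tan(\theta/2)\leq (N-1)\hat\rho$. Using $\cos\theta=(1-\tan^2(\theta/2))/(1+\tan^2(\theta/2))$, the coherence must stay below $g_N(\hat\rho)$ for no foreign column to lie in the zone. This step only restates the computation already in the Lemma's proof.

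\textbf{Step 2.} Compare $g_N$ with $h$. Since $g_N$ is decreasing in $(N-1)\hat\rho$, the inequality $g_N\leq h$ amounts to
\begin{equation*}
(N-1)^2\hat\rho^2\geq \frac{1-h}{1+h}.
\end{equation*}
Substituting $1-h=\hat\rho^2(2-\hat\rho^2/2)$ and $1+h=2-2\hat\rho^2+\hat\rho^4/2$ and simplifying should reproduce the threshold $N-1\geq \sqrt{4-\hat\rho^2}/(2-\hat\rho^2)$ of the Lemma, with the direction of the inequality reversed.

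\textbf{Step 3.} Show that $N\geq 3$ always places us in the regime of Step 2, for every admissible $\hat\rho\in(0,1]$. The threshold $\sqrt{4-\hat\rho^2}/(2-\hat\rho^2)+1$ ranges from $2$ as $\hat\rho\to 0$ to $1+\sqrt3\approx 2.73$ at $\hat\rho=1$. Any integer $N\geq 3$ therefore exceeds it, and the condition $|{\bf U}_i^T{\bf U}_j|\leq g_N(\hat\rho)$ is then the stronger of the two. It implies the pairwise condition, so all hypotheses used in the proof of Lemma~\ref{lem:exactrecRho} are met.

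\textbf{Step 4.} Finish as in the Lemma. Also exclude antipodal columns ${\bf U}_k=-{\bf U}_i$ by applying the same bound to $-{\bf U}_k^T{\bf U}_j$; the absolute value in the hypothesis covers this automatically.

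The main obstacle is the monotonicity check in Steps 2--3: I must confirm that the threshold function increases on $(0,1]$ and stays strictly below $3$. I would differentiate $\sqrt{4-t}/(2-t)$ in $t=\hat\rho^2$. Its derivative has the sign of $(2-t)/2\cdot\ldots$, which after simplification is $(6-t)/\bigl(2\sqrt{4-t}(2-t)^2\bigr)>0$. This gives the supremum $\sqrt3$ at $t=1$.

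There is also a minor issue: the corollary writes ${\bf w}\in\mathbb{R}^n$, but the argument does not depend on $d$ beyond grouping. I would treat this as $\mathbb{R}^{dn}$, or as $d=1$, and note that the proof is unchanged.
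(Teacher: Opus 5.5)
Your proposal is correct and follows the paper's route. The paper gives no separate proof, only ``from the previous theorem and the proof'', and the corollary is exactly the first (accumulated-ratio) bound $g_N$ from the proof of Lemma~\ref{lem:exactrecRho}. That bound becomes the binding one, $g_N<h$, precisely when $N-1>\sqrt{4-\hat\rho^2}/(2-\hat\rho^2)$, and since this threshold lies in $(1,\sqrt3]$ this holds for every $N\ge 3$. Your explicit comparison of $g_N$ with $h$ and your monotonicity check of the threshold make precise what the paper leaves implicit, including why the hypothesis reads $N\geq 3$.
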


By the lemma, corollaries, and the theorem above, we aim for the highest generality where we do not take a stance about each column vector of $\mathcal{U}$ nor the values of $\rho_{ij}$. There, indeed, are edge cases where for suitable $\mathcal{U}$ and $\rho_{ij}$, there is only one possible way to produce certain observations $\hat{\bf y}$. 

The conditions of the optimization problem can alternatively, for convenience, be written in terms of the sources $T_i^{-1}{\bf w}_i$. Let $\hat{\bf w}$ be the estimation, where $\left\|\hat{\bf w}_i\right\|_2>0$ and $\left\|\hat{\bf w}_j\right\|_2>0$, then $\left\|T_i^{-1}\hat{\bf w}_i\right\|_2=\left|\check{S}_i\right|^{-1}\left\|\hat{\bf w}_i\right\|_2$. If we require the sources having the same strength, then we can define $\rho_{ij}=\left|\check{S}_i\right|/\left|\check{S}_j\right|$.

\section{Estimation with noisy data}
Let the random variable ${\bf x}$ follow Gaussian distribution $\pazocal{N}({\bf 0},P)$ and measurement noise follows Gaussian $\pazocal{N}({\bf 0}, C)$. Then, the measurements are distributed by

\begin{equation}
    p({\bf y})=\int_\Omega \pazocal{N}(L{\bf x},C)\pazocal{N}({\bf x}\mid {\bf 0},P)\,\mathrm{d}{\bf x}=\pazocal{N}({\bf 0}, L\Gamma L^\mathrm{T}+C)=\pazocal{N}({\bf 0}, \Sigma).
\end{equation}
Hence, the modified observations $\hat{\bf y}=\Sigma^{-1/2}{\bf y}$ follows standard Gaussian $\pazocal{N}({\bf 0},I)$. Since the columns of the forward model $\mathcal{U}$ are $m$-dimensional unit vectors, then if $\mathcal{U}\in\mathbb{R}^{m\times m}$  and invertible, the column that explains the observations are uniformly random. This follows from the fact that if we express the observation space in a hyperspherical coordinate system, the radial distance is $\chi$-distributed while the direction on $\mathbb{S}^m$ is uniformly distributed. 

The observation model in an unbiased setting can be written as
\begin{equation}
    \hat{\bf y}=U_{\tilde{\pazocal{I}}}\tilde{\bf w}_{\pazocal{I}}+\tilde{\bm{\eta}},
\end{equation}
where $\tilde{\bf w}_{\pazocal{I}}\in\mathbb{R}^N$ and $\tilde{\bm{\eta}}$ are drawn from continuous distributions and signal component index set $\tilde{\pazocal{I}}$ is drawn from a discrete distribution with probabililities $\pi_{\pazocal{I}}$ assigned for each index set $\pazocal{I}$ of length $N$. The corresponding Bayesian model for $N$ source estimation is
\begin{equation}\label{eq:BayesModel}
    p(\hat{\bf y}\mid {\bf w})=\sum_{j=1}^{\binom{n}{N}}\pi_{\pazocal{I}_j}\pazocal{N}\left( U_{\pazocal{I}_j}{\bf w},\hat{C}\right),
\end{equation}
where $\hat{C}=\Sigma^{-1/2}C\Sigma^{-1/2}$. Since the Gaussian prior model is encoded into $\hat{\bf y}$ and $\mathcal{U}$, we can solve the maximum a posteriori (MAP) like we would solve maximum likelihood estimation, which is easy to check by solving the $\Sigma$-standardized observation model. Hence, by denoting
\begin{equation}
    h_{\pazocal{I}_k}=\frac{\pi_{\pazocal{I}_k}p(\hat{\bf y}\mid U_{\pazocal{I}_k}{\bf w}_{\pazocal{I}_k},\hat{C})}{\sum_{\pazocal{I}\colon \left|\pazocal{I}\right|=Nd}\pi_{\pazocal{I}}p(\hat{\bf y}\mid U_{\pazocal{I}}{\bf w}_{\pazocal{I}},\hat{C})},
\end{equation}
 we can write
\begin{equation}
    \hat{\bf w}=\sum_{\pazocal{I}\colon \left|\pazocal{I}\right|=Nd}\frac{h_{\pazocal{I}}}{\sum_{\pazocal{I}\colon \left|\pazocal{I}\right|=Nd} h_{\pazocal{I}}}E_{nd;\pazocal{I}}(\hat{C}^{-1/2}U_{\pazocal{I}})^\dagger \hat{C}^{-1/2}\hat{\bf y},
\end{equation}
where $E_{nd;\pazocal{I}}\in\mathbb{R}^{nd\times Nd}$ projection matrix from sub-index set $\pazocal{I}$ to the full $nd$-dimensional parameter space to the corresponding element slots. Within the model, we promote unbiased estimation by setting all $\pi_{\pazocal{I}_k}$ equal. Furthermore, since in the model $(\hat{C}^{-1/2}\mathcal{U},\, \hat{C}^{-1/2}\hat{\bf y})$ both objects are standardized by noise, we can use the previously introduced change of variables by redefining the observations and source-wise computing singular value decomposition of $C^{-1/2}L$ and use the left singular matrix as the representation of the forward model. For the readers' convenience we keep the notation $(\mathcal{U},\hat{\bf y})$.

The following theorem sets the limits for unbiased multisource estimation with noisy data:
\begin{theorem}
    Exact multisource reconstruction is not possible with the Bayesian model (\ref{eq:BayesModel}) with $\pi_\pazocal{I}>0$ for all index sets $\pazocal{I}$.
\end{theorem}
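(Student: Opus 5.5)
The plan is to argue from the structure of the posterior mean estimator $\hat{\bf w}$ displayed just before the statement. This estimator is a convex combination, with weights $h_{\pazocal{I}}/\sum h_{\pazocal{I}}$, of the index-set-restricted least-squares solutions $E_{nd;\pazocal{I}}(\hat{C}^{-1/2}U_{\pazocal{I}})^\dagger \hat{C}^{-1/2}\hat{\bf y}$.

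\emph{Exact reconstruction} means that $\hat{\bf w}$ coincides with the true $\tilde{\bf w}$ supported on $\tilde{\pazocal{I}}$. This requires two things. First, the weight on the true index set equals one and all other weights vanish. Second, the restricted solution on $\tilde{\pazocal{I}}$ equals $\tilde{\bf w}_{\tilde{\pazocal{I}}}$. Alternatively, any nonzero contributions from wrong index sets would have to cancel exactly outside $\tilde{\pazocal{I}}$.

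The first step is to show that every weight $h_{\pazocal{I}}$ is strictly positive. The prior weight satisfies $\pi_{\pazocal{I}}>0$ by assumption. The Gaussian likelihood $\pazocal{N}(\hat{\bf y}\mid U_{\pazocal{I}}{\bf w}_{\pazocal{I}},\hat{C})$ is strictly positive for every finite argument, since $\hat{C}$ is positive definite (inherited from $C$). Hence every $h_{\pazocal{I}}\in(0,1)$, and $\hat{\bf w}$ places nonzero mass on every admissible index set $\pazocal{I}\neq\tilde{\pazocal{I}}$.

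The second step is to show that the wrong-support contributions do not vanish almost surely. For $\pazocal{I}\neq\tilde{\pazocal{I}}$, the restricted estimate $(\hat{C}^{-1/2}U_{\pazocal{I}})^\dagger\hat{C}^{-1/2}\hat{\bf y}$ is a fixed linear map applied to $\hat{\bf y}=U_{\tilde{\pazocal{I}}}\tilde{\bf w}+\tilde{\bm\eta}$. Because $\tilde{\bm\eta}$ has a continuous distribution with full support, this vector lies in the zero set of a nonzero linear functional with probability zero. The same holds for the noise component of the true-support estimate, which therefore differs from $\tilde{\bf w}_{\tilde{\pazocal{I}}}$ almost surely. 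For the noiseless limit one can instead pick an index $i\notin\tilde{\pazocal{I}}$ and a set $\pazocal{I}\ni i$. By the non-orthogonality argument given after Theorem~\ref{thm:exactreclimit}, generically $U_i^T\hat{\bf y}\neq0$, so the $i$th component of $\hat{\bf w}$ is a positive combination of terms that are not all zero.

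The main obstacle is ruling out exact cancellation among these convex combinations. Each component outside $\tilde{\pazocal{I}}$ is a sum $\sum_{\pazocal{I}\ni i} h_{\pazocal{I}}\,[\cdot]_i$, whose terms may have mixed signs. To handle this, I will use that the weights $h_{\pazocal{I}}$ depend on $\hat{\bf y}$ nonlinearly, through Gaussian exponentials, while the restricted estimates depend on it linearly. The resulting coordinate $\hat w_i(\hat{\bf y})$ is therefore a real-analytic function of $\hat{\bf y}$ that is not identically zero; this can be checked by taking $\hat{\bf y}$ along $U_i$ for a single $\pazocal{I}$ with dominant weight. Its zero set consequently has Lebesgue measure zero, and the continuous noise distribution assigns it probability zero. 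It follows that $\hat{\bf w}\neq\tilde{\bf w}$ almost surely, so exact reconstruction is not possible.
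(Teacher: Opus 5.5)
Your proof is correct, but it takes a different route from the paper's.

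The paper argues via non-identifiability of the support. Since $\mathbb{P}(\tilde{\bm{\eta}}\in\pazocal{R}(U_{\pazocal{I}}))=0$, the part of the noise outside the true range can be absorbed by a competing index set $\pazocal{J}$. That set attains a smaller misfit with a larger coefficient norm, so the correct index set cannot be singled out; when $\dim=m$, all $m$-sets fit equally well. This argument works at the level of the fitting objective and is meant to cover any support-selection rule. It also exhibits the mechanism that motivates weak reconstructibility: noise explained by wrong sources with inflated amplitudes. However, it is informal, and it never uses the displayed estimator or, visibly, the hypothesis $\pi_{\pazocal{I}}>0$.

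You instead analyse the mixture estimator $\hat{\bf w}$ itself. The hypothesis $\pi_{\pazocal{I}}>0$ makes every weight strictly positive, and each off-support coordinate $\hat w_i(\hat{\bf y})$ is real-analytic. Gaussian noise with positive definite $\hat C$ then assigns its zero set probability zero; note that absolute continuity is what you need here, not merely a ``continuous distribution with full support''. This gives a rigorous almost-sure statement in which the hypothesis is used essentially. It even shows that the estimated support is almost surely the full index set. Your observation that the restricted least-squares fit on the true support is a nonconstant affine function of the noise extends the result to selection-type estimators as well.

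The non-vanishing check needs no dominant weight. Take $\hat{\bf y}=t{\bf U}_i$ with $t>0$, let $P_{\pazocal{I}}$ be the orthogonal projector onto the row space of $\hat C^{-1/2}U_{\pazocal{I}}$, and let ${\bf e}$ be the local unit vector of $i$. Each set $\pazocal{I}\ni i$ contributes a positive multiple of $t\|P_{\pazocal{I}}{\bf e}\|^2>0$ to coordinate $i$. Sets not containing $i$ contribute zero, so $\hat w_i(t{\bf U}_i)>0$.
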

\begin{proof}
    Let us have observations $\hat{\bf y}=U_\pazocal{I}{\bf w}_\pazocal{I}+\bm{\eta}$, where $\left|\pazocal{I}\right|<m$ and $\bm{\eta}\in \pazocal{R}(U_\pazocal{I})^\perp$ for drawn observation noise. Then, for arbitrary $U_\pazocal{J}$ such that $\pazocal{R}(U_\pazocal{J})\subseteq \pazocal{R}(U_\pazocal{I})\oplus\pazocal{R}(U_\pazocal{I})^\perp$ and $\left|{\bf U}_{j_1}^T{\bf U}_{j_2}\right|<1$ for any $j_1,j_2\in \pazocal{J}$, then there is $\hat{\bf w}_\pazocal{J}$ that minimize the following object function
    \begin{equation}
        F({\bf w}; \pazocal{S}) =\left\|\hat{\bf y}-U_\pazocal{S}{\bf w}\right\|^2+\left\|\bm{\eta}-U_\pazocal{S}{\bf w}\right\|^2,
    \end{equation}
    such that $F(\hat{\bf w}_\pazocal{J},\pazocal{J})=0<F(\hat{\bf w}_\pazocal{I},\pazocal{I})$. Moreover, if $\left\|U_\pazocal{J}^T\bm{\eta}\right\|<\delta$, then due the fact that $F(\hat{\bf w}_\pazocal{J},\pazocal{J})=0$, we get
    \begin{eqnarray*}
        \left\|\bm{\eta}-U_J\hat{\bf w}_\pazocal{J}\right\|^2=\left\|\bm{\eta}\right\|^2-2 \hat{\bf w}_\pazocal{J}^TU_\pazocal{J}^T\bm{\eta}+\left\|U_\pazocal{J}\hat{\bf w}_\pazocal{J}\right\|^2=0
    \end{eqnarray*}
    yielding
    \begin{eqnarray*}
        \left\|\bm{\eta}\right\|^2+\left\|U_\pazocal{J}\hat{\bf w}_\pazocal{J}\right\|^2=2 \hat{\bf w}_\pazocal{J}^TU_\pazocal{J}^T\bm{\eta}< 2\delta\left\|\hat{\bf w}_\pazocal{J}\right\|,
    \end{eqnarray*}
    which then shows that the norm of $\hat{\bf w}_\pazocal{J}$ is unbounded from above. Hence, there is always an index set $\pazocal{J}$ such that $F(\hat{\bf w}_\pazocal{J},\pazocal{J})<F(\hat{\bf w}_\pazocal{I},\pazocal{I})$ and $\left\|\hat{\bf w}_\pazocal{J}\right\|>\left\|\hat{\bf w}_\pazocal{I}\right\|$, making the concluding of the correct index set impossible as the probability $\mathbb{P}\left(\tilde{\bm{\eta}}\in\pazocal{R}(U_\pazocal{I})\right)=0$ unless $\mathrm{dim}(U_\pazocal{I})=m$ in which case any set of $m$ component minimizes the object function and have the same norm for reconstruction. 
    \end{proof}

    The theorem reveals that we cannot correctly determine the relationships among sources of different strengths when noise is present. For example, if two different sources can explain the same part of the measured noisy signal, it is impossible to tell which part belongs to which source and how much the noise smears these partial source estimates. For image reconstruction, it means in practice that we could get the image outlines correctly, but the colors may vary from the original image or blend with each other. Due to this reason, we introduce new terminology:

    \begin{definition}
        A source is {\bf strongly reconstructible} if we can get the exact reconstruction of the source. We say that the source is {\bf weakly reconstructible} if we can recover the correct partial ordering of the source magnitudes, i.e., $\left\|{\bf w}_{i_1}\right\|_2\geq \left\|{\bf w}_{i_2}\right\|_2\geq \cdots \geq \left\|{\bf w}_{i_n}\right\|_2$.
    \end{definition}

Using the concept above, we can say that the section \ref{sc:multisourceest} considered strongly reconstructible sources. From now on, we will proceed with weakly reconstructible sources from a probabilistic point of view. The noncentral $F$-distribution has a major role in the upcoming results:

    \begin{definition}
        The noncentral $F$-distribution is continuous distribution defined in $\mathbb{R}_+$ with the following cumulative distribution:
        \begin{equation}
            F_{a,b}'(x\mid \lambda)=\sum_{j=1}^\infty \frac{(\lambda/2)^j}{j!}e^{-\lambda/2}\, \frac{B\left(\frac{ax}{b+ax};\,a/2+j,\, b/2\right)}{B(a/2+j,\, b/2)},
        \end{equation}
        where $B(\cdot,\cdot)$ is the beta function and $B(\cdot\,;\,\cdot,\,\cdot)$ is the incomplete beta function. $a$ and $b$ are the degrees of freedom and $\lambda\geq 0$ is noncentral parameter.
    \end{definition}

Using the noncentral $F$-distribution, we can conclude the following:

    \begin{lemma}
        When $\mathcal{U}$ is orthogonal, the probability of reconstruct $N$ sources weakly is
        \begin{equation}
            p=\mathbb{P}\left(\tilde{X}>m-Nd\right),\quad \tilde{X}\sim F'_{1,\: m-Nd}\left(\min_{i\in \pazocal{I}}\left|\mathcal{U}^T_i\hat{\bf y}\right|^2\right).
        \end{equation}
    \end{lemma}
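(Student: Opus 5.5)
When $\mathcal{U}$ is orthogonal, I would first rotate the observation space by $\mathcal{U}^T$. The observation model $\hat{\bf y}=U_{\pazocal{I}}{\bf w}_{\pazocal{I}}+\tilde{\bm\eta}$ then decouples into $m$ independent scalar coordinates $\mathcal{U}_i^T\hat{\bf y}$, $i=1,\dots,m$. After the noise standardization at the end of the previous section, $\tilde{\bm\eta}$ is (up to scale) standard Gaussian. Rotation invariance then makes the coordinates $\mathcal{U}_i^T\tilde{\bm\eta}$ i.i.d. $\pazocal{N}(0,1)$.

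On the signal index set $\pazocal{I}$, which has $Nd$ columns, each coordinate is $\pazocal{N}(\mu_i,1)$ with $\mu_i=[{\bf w}_{\pazocal{I}}]_i$. On the complement, which has $m-Nd$ columns, the coordinates are pure noise. In the orthogonal case the estimator of the Bayesian model (\ref{eq:BayesModel}) with equal $\pi_{\pazocal{I}}$ ranks index sets by the captured energy $\|U_{\pazocal{I}}^T\hat{\bf y}\|^2$. So weak reconstruction of the $N$ sources amounts to every true component's squared coefficient exceeding the noise components' squared coefficients.

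The reduction I would make is to compare the weakest true component against the pooled off-support energy. Let $i^*=\arg\min_{i\in\pazocal{I}}|\mathcal{U}_i^T\hat{\bf y}|$. Then $X_1=|\mathcal{U}_{i^*}^T\hat{\bf y}|^2$ is noncentral $\chi^2_1(\lambda)$ with $\lambda=\min_{i\in\pazocal{I}}|\mathcal{U}_i^T\hat{\bf y}|^2$, and $X_2=\sum_{j\notin\pazocal{I}}|\mathcal{U}_j^T\hat{\bf y}|^2\sim\chi^2_{m-Nd}$ is independent of $X_1$ by orthogonality. The ratio $\tilde X=X_1/(X_2/(m-Nd))$ is then $F'_{1,m-Nd}(\lambda)$, matching the definition just stated.

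The event in the statement, $\tilde X>m-Nd$, is exactly $X_1>X_2$: the weakest true component carries more energy than all noise components combined. This guarantees that the $N$ sources are selected and correctly separated from the noise. I would then argue that, with $\pi_{\pazocal{I}}$ equal, the weights $h_{\pazocal{I}}$ are monotone in captured energy, so this event yields the correct ordering of the selected components. Hence $p=\mathbb{P}(\tilde X>m-Nd)$.

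I expect the main obstacle to be justifying that this event characterizes weak reconstruction rather than merely implying it. The natural criterion compares $X_1$ with the \emph{maximum} of the off-support coordinates, not with their sum. I would try to argue that in the paper's unbiased framework the competing index sets $\pazocal{J}$ can aggregate the residual noise energy, as in the previous theorem's construction, so that the sum is the relevant competitor. Failing that, I would present the formula as the probability under that aggregated-competitor criterion. A secondary subtlety is that $\lambda$ is written in terms of the observed $\hat{\bf y}$ rather than the noiseless signal. I would interpret it as the squared signal component $|\mathcal{U}_i^T U_{\pazocal{I}}{\bf w}_{\pazocal{I}}|^2$, that is, conditioning on the noiseless part.
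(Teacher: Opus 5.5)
Your proposal follows the paper's proof essentially step for step. Like you, the paper expands the noisy data in the orthonormal basis of $\mathcal{U}$, takes the weakest true coefficient $k$, declares the source indices correctly recovered when $|\hat{y}_k+\tilde{\eta}_k|^2>\|\tilde{\bm{\eta}}\|^2_{\pazocal{R}(\mathcal{U}_{\pazocal{I}})^\perp}$, and identifies the ratio as $F'_{1,\,m-Nd}(|\hat{y}_k|^2)$ with $\hat{y}_k$ the noiseless coefficient, which matches your reading of $\lambda$.

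The paper never resolves the sum-versus-maximum issue you flag; it simply adopts the pooled-noise comparison as its criterion, so your fallback is exactly its argument. Note, however, that the paper only claims recovery of the correct index set. You should drop your extra claim that $X_1>X_2$ also gives the correct ordering among the true components, because that does not follow from this event.
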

    \begin{proof}
        Let us have the following basis representation of the noisy observations 
        \begin{eqnarray*}
            \tilde{\bf y}=\sum_{i\in \pazocal{I}}\hat{y}_i\bm{\psi}_i+\sum_{j=1}^m\tilde{\eta}_j\bm{\psi}_j,
        \end{eqnarray*}
        then we assume $\mathrm{arg\min}\lbrace\left|\tilde{y}_i\right|\rbrace=k$, then the correct reconstruction by source indices is found when
        \begin{eqnarray*}
            \left|\hat{y}_k+\tilde{\eta}_k\right|^2>\left\|\tilde{\bm{\eta}}\right\|^2_{\pazocal{R}(\mathcal{U}_{\pazocal{I}})^\perp}.
        \end{eqnarray*}
        Noticing that
        \begin{eqnarray*}
            \tilde{X}=\frac{\left|\hat{y}_k+\tilde{\eta}_k\right|^2}{\left\|\tilde{\bm{\eta}}\right\|^2_{\pazocal{R}(\mathcal{U}_I)^\perp}\,(m-Nd)^{-1}}\sim F'_{1,\: m-Nd}\left(\left|\hat{y}_k\right|^2\right),
        \end{eqnarray*}
        where $F'_{a,b}(\lambda)$ is the noncentral $F$-distribution. Therefore, the probability of the exact reconstruction being achievable is
        \begin{eqnarray*}
            p=\mathbb{P}\left(\tilde{X}>m-Nd\right).
        \end{eqnarray*}
    \end{proof}
And as the conclusion:

    \begin{theorem}\label{thm:GMMexactprob}
        The estimation of the Bayesian model (\ref{eq:BayesModel}) weakly reconstructs the source with probability
    \begin{equation}
    p=\mathbb{P}\left(\tilde{X}>2(m-Nd)\right),\quad \tilde{X}\sim F'_{1,\: m-Nd}\left(\min_{i\in \pazocal{I}}\left|\mathcal{U}^T_i\hat{\bf y}\right|^2\right).
    \end{equation}
    \end{theorem}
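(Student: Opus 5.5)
The plan is to reduce the theorem to the preceding lemma. The only new ingredient is a comparison between the Bayesian mixture estimator (\ref{eq:BayesModel}) and the orthogonal-basis criterion used there. First, I would use the change of variables described before the theorem. We work in the noise-standardized model $(\hat C^{-1/2}\mathcal{U},\hat C^{-1/2}\hat{\bf y})$ and recompute the source-wise left singular matrices. In these coordinates the noise is white, and the $\pi_{\pazocal{I}}$ are all equal because we promote unbiased estimation. The weights $h_{\pazocal{I}}$ then depend only on the residuals $\left\|\hat{\bf y}-U_{\pazocal{I}}\hat{\bf w}_{\pazocal{I}}\right\|^2$ of the least-squares fits. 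So the index set selected (i.e.\ receiving the dominant weight) is the one with the smallest residual, which is the one capturing the most energy of $\hat{\bf y}$.

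Next, I would write $\hat{\bf y}$ in a basis adapted to the true set $\pazocal{I}$, exactly as in the lemma. The weakest true component $k$ (with $k=\arg\min_{i\in\pazocal{I}}|\mathcal{U}_i^T\hat{\bf y}|$) carries energy $|\hat y_k+\tilde\eta_k|^2$. The competitor set $\pazocal{J}$ is obtained by replacing $k$ with some index outside $\pazocal{I}$, and it picks up a noise component from $\pazocal{R}(\mathcal{U}_{\pazocal{I}})^\perp$. The key modelling step, differing from the lemma, is that in the mixture model the wrong set is penalized not only by losing the signal in the $k$ direction but also through the normalization of the weights $h_{\pazocal{I}}$. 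I would compare the log-ratio $\log(h_{\pazocal{I}}/h_{\pazocal{J}})$: it equals one half of the difference of residuals. I would then argue that for the weighted average $\hat{\bf w}$ to preserve the magnitude ordering, the weight $h_{\pazocal{I}}$ must dominate the sum of the weights of the competing sets. My plan is to bound the worst competitor's captured energy by the total orthogonal noise energy $\|\tilde{\bm\eta}\|^2_{\pazocal{R}(\mathcal{U}_{\pazocal{I}})^\perp}$, and to show that the ordering condition becomes
\[
|\hat y_k+\tilde\eta_k|^2>2\,\|\tilde{\bm\eta}\|^2_{\pazocal{R}(\mathcal{U}_{\pazocal{I}})^\perp}.
\]
The factor $2$ reflects that the averaged estimate must both beat the competitor and keep the $k$th magnitude above the leakage into the wrong slots.

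Finally, I would normalize exactly as in the lemma. Defining $\tilde X=|\hat y_k+\tilde\eta_k|^2/\bigl(\|\tilde{\bm\eta}\|^2_{\pazocal{R}(\mathcal{U}_{\pazocal{I}})^\perp}(m-Nd)^{-1}\bigr)$ gives $\tilde X\sim F'_{1,m-Nd}(|\hat y_k|^2)$. Here the numerator is a noncentral $\chi^2_1$ and the denominator is an independent $\chi^2_{m-Nd}$ scaled by its degrees of freedom, with independence following from orthogonality of the subspaces for white Gaussian noise. The condition above then reads $\tilde X>2(m-Nd)$, which yields the stated $p$.

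The main obstacle is the middle step: rigorously justifying the factor $2$, i.e.\ showing that the soft-weighted mixture estimate preserves the ordering exactly when the doubled threshold holds. I would first try the two-set case, where $h_{\pazocal{I}}+h_{\pazocal{J}}=1$ and the requirement $h_{\pazocal{I}}\|\hat{\bf w}_k\|\geq h_{\pazocal{J}}\|\hat{\bf w}_{j}\|$ can be made explicit. I would then argue that additional competitor sets only rescale both sides, so they do not change the threshold.
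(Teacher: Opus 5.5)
The final $F'$-normalization matches the paper, but the step you flag as the main obstacle is a genuine gap. The mechanism you propose for it cannot work, because the posterior weights $h_{\pazocal{I}}$ never produce the factor $2$.

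Take the setting where your computation is explicit: orthonormal columns as in the lemma, white noise, equal $\pi_{\pazocal{I}}$, and least-squares fits. The residual of an index set is $\|\hat{\bf y}\|^2-\sum_{i\in\pazocal{I}}\tilde y_i^2$ with $\tilde y_i=\mathcal{U}_i^T\hat{\bf y}$, so $h_{\pazocal{I}}\propto\prod_{i\in\pazocal{I}}e^{\tilde y_i^2/2}$. In your two-set test, put $a=|\hat y_k+\tilde\eta_k|$ and $b=|\tilde\eta_j|$. Then $h_{\pazocal{I}}/h_{\pazocal{J}}=\exp\bigl(\tfrac12(a^2-b^2)\bigr)$, so $h_{\pazocal{I}}a>h_{\pazocal{J}}b$ holds if and only if $\tfrac12(a^2-b^2)+\log(a/b)>0$, i.e.\ if and only if $a>b$. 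The two terms always have the same sign. So ``beating the competitor'' and ``staying above the leakage'' are one condition, not two that add up.

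The same holds with all competitors present. The $i$th entry of the mixture estimate is $\tilde y_i$ times the posterior inclusion probability of $i$. That probability is increasing in $|\tilde y_i|$, so averaging never reorders the raw coefficients. Bounding $b^2$ by $\|\tilde{\bm\eta}\|^2_{\pazocal{R}(\mathcal{U}_{\pazocal{I}})^\perp}$ therefore returns exactly the lemma's threshold $m-Nd$. Your remark that extra competitors ``only rescale both sides'' confirms the threshold stays there rather than doubling.

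The missing ingredient is the one the paper uses. It relies on the minimum-norm preference of the diagonal Gaussian prior, applied to a continuous family of smeared solutions rather than to a competition between hard index sets. The paper writes $\hat{\bf y}=\hat{\bf y}_{\pazocal{I}}+\hat{\bf y}_{\pazocal{J}}+\tilde{\bm\eta}$, with $\pazocal{J}$ the weak part of the support, and considers
\[
\hat{\bf w}(\theta)=\begin{bmatrix}U_{\pazocal{I}}^\dagger(\hat{\bf y}_{\pazocal{I}}+\tilde{\bm\eta}_{\pazocal{I}})\\ \cos(\theta)\,U_{\pazocal{J}}^\dagger(\hat{\bf y}_{\pazocal{J}}+\tilde{\bm\eta}_{\pazocal{J}})\\ \sin(\theta)\,U_\perp^\dagger\tilde{\bm\eta}_\perp\end{bmatrix}.
\]
This family moves part of the weak source's share into a slot that fits the orthogonal noise. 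The paper bounds $\min_\theta\|\hat{\bf w}(\theta)\|^2$ by $\|U_{\pazocal{I}}^\dagger\hat{\bf y}\|^2+\|\tilde{\bm\eta}_\perp\|^2+\tfrac12\|\hat{\bf y}_{\pazocal{J}}+\tilde{\bm\eta}_{\pazocal{J}}\|^2$ and compares this with the norm of the true-support estimate.

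The halved term is exactly what yields $\|\hat{\bf y}_{\pazocal{J}}+\tilde{\bm\eta}_{\pazocal{J}}\|^2/\|\tilde{\bm\eta}_\perp\|^2>2$. Normalizing with $|\pazocal{J}|=1$ then gives $2(m-Nd)$. Without an argument of this kind that supplies the extra $\tfrac12$, your plan proves only the lemma's bound.
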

    \begin{proof}
    Let us have observations $\hat{\bf y}=\hat{\bf y}_\pazocal{I}+\hat{\bf y}_\pazocal{J}+\tilde{\bm{\eta}}$, where $\pazocal{J}\cap \pazocal{I}=\varnothing$ and $\left|\pazocal{I}\cup \pazocal{J}\right|<m$, then consider proposed minimizer 
    \begin{eqnarray*}
        \hat{\bf w}(\theta)=\begin{bmatrix}
            U_{\pazocal{I}}^\dagger(\hat{\bf y}_{\pazocal{I}}+\tilde{\bm{\eta}}_{\pazocal{I}})\\
            \cos(\theta)U_{\pazocal{J}}^\dagger(\hat{\bf y}_{\pazocal{J}}+\tilde{\bm{\eta}}_{\pazocal{J}})\\
            \sin(\theta)U_\perp^\dagger\tilde{\bm{\eta}}_\perp
        \end{bmatrix}
    \end{eqnarray*}
    then
    \begin{eqnarray*}
        \left\|\hat{\bf w}(\theta)\right\|^2&=\left\|U_{\pazocal{I}}^\dagger \hat{\bf y}\right\|^2+\left(1-\frac{\left\|\bm{\eta}_\perp\right\|^2}{\left\|(\hat{\bf y}_{\pazocal{J}}+\tilde{\bm{\eta}}_{\pazocal{J}})\right\|^2}\cot(\theta)\right)^2\\
        &+\frac{\left\|\bm{\eta}_\perp\right\|^2}{\left\|(\hat{\bf y}_{\pazocal{J}}+\tilde{\bm{\eta}}_{\pazocal{J}})\right\|^2}\csc(\theta)^2,
    \end{eqnarray*}
    and the following holds for the norm:
    \begin{eqnarray*}
        \left\|\hat{\bf w}(\theta)\right\|^2\leq \left\|U_{\pazocal{I}}^\dagger \hat{\bf y}\right\|^2+\left\|\tilde{\bf n}_\perp\right\|^2+\frac{\left\|\tilde{\bf y}_k\right\|^2}{2}.
    \end{eqnarray*}
    Since, the diagonal prior enforces the solution with minimal norm, we can require $\left\|\hat{\bf w}_{I\cup J}\right\|>\min_\theta \left\|\hat{\bf w}(\theta)\right\|$, yielding
    \begin{eqnarray*}
        \frac{\left\|(\hat{\bf y}_{\pazocal{J}}+\tilde{\bm{\eta}}_{\pazocal{J}})\right\|^2}{\left\|\bm{\eta}_\perp\right\|^2}>2.
    \end{eqnarray*}
    Now we notice that
    \begin{eqnarray*}
        X=\frac{\left\|\hat{\bf y}_{\pazocal{J}}+\tilde{\bm{\eta}}_{\pazocal{J}}\right\|^2\, \left|J\right|^{-1}}{\left\|\bm{\eta}_\perp\right\|^2\, \left(m-Nd\right)^{-1}}\sim F'_{\left|J\right|,\: m-Nd}\left(\left\|\hat{\bf y}_{\pazocal{J}}\right\|^2\right),
    \end{eqnarray*}
    where $F'_{a,b}$ is the noncentral $F$-distribution. Therefore, the probability of the exact reconstruction is $\mathbb{P}\left(X>2\left(m-Nd\right)/\left|\pazocal{J}\right|\right)$. Because the estimation needs to have correct standard vector basis components, we set $\pazocal{J}$ to be a singleton, i.e., $\left|\pazocal{J}\right|=1$.
    \end{proof}

In some applications, the observation noise is assumed to be independent and identically distributed, and the signal-to-noise ratio is known. In such a case, we can obtain 
\begin{eqnarray}\label{eq:SNR}
    \mathrm{SNR}=\frac{\mathbb{E}\left[\left\|\tilde{\bf y}\right\|^2\right]}{\mathbb{E}\left[\left\|\tilde{\bf n}\right\|^2\right]}=\frac{\left\|{\bf y}\right\|^2+m\sigma}{m\sigma}=\frac{\left\|{\bf y}\right\|^2}{m\sigma}+1,
\end{eqnarray}
and hence the probability of exact localization of one source can be written in terms of SNR:
\begin{equation}
    p=\mathbb{P}\left(\tilde{X}>2(m-d)\right),\quad \tilde{X}\sim F'_{1,\: m-d}\left(m(\mathrm{SNR}-1)\right).
\end{equation}
From the result, we note that, contrary to the results above, the noncentral parameter does not depend on the weakest contributor to the data. This indicates that the localization accuracy is not depth-dependent in such a case.

\section{Experiments}\label{sc:experiments}
\subsection{Experiment set 1}
Our first experiment examines the consequences of the mathematical lemmas and theorems in the settings of compressive sensing with tomographic Fourier sampling. In experiment set 1, we follow the experiment of Candes {\em et al.} \cite{CandesEmmanuel2006} to reconstruct a $256\times 256$ Shepp-Logan phantom from 22 tomographic line samples. The experiment setup is suitable for this article's themes, not only because of the topic of exact reconstruction. but also due to the image "source" not being sparse, as the image content occupies about 42 \% of the black canvas. Therefore, the phantom image cannot be reconstructed using solely standardized methods specialized for locating a sparse source.

Since the noiseless case is well established, we conduct Experiment 1.A., where the image is corrupted by 30 \% of independent and identically distributed (i.i.d.) Gaussian noise, and Experiment 1.B, where the data is contaminated by 30 \% of spatially correlated Gaussian noise. Due to the computational limitations regarding full covariance matrices, the phantom in Experiment 1.B is the size of $128\times 128$. As 22-line samples violate the unique reconstruction condition presented in Theorem \ref{thm:exactreclimit}, we conduct Experiment 1.C. with 10 \% i.i.d. Gaussian noise and 320 lines, which is the minimal amount that fulfills the condition. In addition, Experiment 1.B is redone with 320 line samples. It has been demonstrated by \cite{ChartrandR2008randomvstomographicsampling} that tomographic line sampling is preferable over random sampling for total variation minimizing methods, so we decided to stick with this sampling approach.

To avoid the large memory consumption caused by the construction of the forward model of the restricted Fourier transform with $256^2$ complex parameters, we compute the Fourier transform and its inverse directly while utilizing Split Bregman iterations \cite{Goldstein2009SplitBregman}. The Gaussian mixture approach was modified for this experiment to use the Gaussian prior $p({\bf w})\propto \exp(-\left\|D{\bf w}\right\|_2^2)$, where $D$ is the cyclic image derivative. This approach has been shown to be highly efficient for image reconstruction from sparse data \cite{CandesEmmanuel2006}. As the prior requires a rectangular spatial structure in the "spatial" image space, we use random cyclic $250\times 250$, and $96\times 96$ spatial sets in Experiment 1.B, as the candidate space for the image content. 1,000 random spatial samples were used to reconstruct the image. As the phantom image contains only black, white, and five different shades of gray, we demonstrate the reconstruction approach of a method that has the prior knowledge of the relative differences of the discrete source amplitudes, likewise in Lemma \ref{lem:exactrecRho}. Applying the said lemma and Theorem \ref{thm:exactreclimit}, we can derive the expectation that the method reconstructs the phantom image nearly perfectly with overwhelmingly high probability in Experiment 1.C with 10 \% noise. As it was discussed in the previous section, exact amplitudes cannot be expected unless the forward model forms an orthogonal basis. The methods are compared with the Spatial Total Variation Split Bregman (STVSB) method \cite{Montesinos2014STVSB}. 

Because restricted Fourier transforms meet the criterion for an unbiased forward matrix, we can expect the reconstruction to be highly noise-robust; as such, the observation noise was set particularly high to demonstrate the effects. Especially the spatially correlated noise in Experiment 1.B is disadvantageous to the selected prior- and total-variation-based approaches in general. Given the Bayesian framework and appropriate likelihood modeling, the differences in accuracy among the methods considered in this paper should not be striking, while STVSB could perform worse in Experiment 1.B. 

\subsection{Experiment set 2}
In the second experimental set, we focus on source estimation for a homogeneous conductivity disk. Namely, we demonstrate practical consequences of Theorem \ref{thm:GMMexactprob} for reconstructions of sources at different depths and how known signal-to-noise ratios impact the probability to perfectly reconstruct sources with a different number of sensors. Moreover, we examine one source estimation and a couple of source scenarios: simultaneous superficial (near-field) sources and a combination of superficial and far-field sources. The sources are placed in such a way that the source is a deep source, which is hard to estimate; the second case has two sources that produce orthogonal contributions in the observation space, and the third setup contains a cortical and a deep source with similar contributions to observations, and as such is the hardest setup of these three. The observations were corrupted by five percent of additive Gaussian noise. The purpose of this experiment set is to extend the results obtained with the same model in \cite{Lahtinen2024}, where a lower bound for exact reconstruction was introduced as evidence about the possible reasons for the high noise robustness of unbiased methods. Sources are estimated using Minimum Norm Estimate (MNE) \cite{HamalainenMNE}, Unbiased Gaussian Estimate (UGE) from Theorem \ref{thm:GMMexactprob}, a classical standardized low-resolution electromagnetic brain tomography (sLORETA) \cite{PascualMarqui2002} method that standardizes each parameter individually; it is a 1-parameter model and its extension, in which we associate a 2-dimensional orientation with each source. We call this extension sLORETA2D. In a theoretical sense, there is no "correct" variant of sLORETA. The sLORETA variants just approach the inversion problem differently.

\subsection{Experiment set 3}
The third and final set of experiments was conducted using a realistic, volumetric, multicompartment model of the human brain. The used MRI is obtained from the Ernie subject provided by SimNIBS\footnote{\url{https://simnibs.github.io/simnibs/build/html/index.html}} 4 software \cite{PUONTI2020117044}. The mesh has 743,575 tetrahedral elements joined in 136,868 nodes. Four tissue compartments were used (scalp, skull, cerebrospinal fluid, and brain), and 76 electrodes were placed on the scalp layer according to the international 10-10 system. 

The lead field matrix used in this study was constructed using custom software that employed the Finite Element Method with linear basis functions, as in \cite{Wolters2004}. The tissue electric conductivity values were 0.43 S/m for the scalp, 0.0103 S/m for the skull, 1.79 S/m for cerebrospinal fluid, and 0.33 S/m for the brain (gray matter and white matter) \cite{ram06}. The realistic geometries are extracted from MRI images, and a linear forward model is built by approximately solving quasi-static Maxwell's equations using the finite element method. The problem is to solve for 3-dimensional dipolar moments, i.e., basis coefficients for basis functions that represent the dipolar term causing the electric potential field inside the head. The observations are generated synthetically using a separate forward model with different source locations from those used in the inversion to avoid inverse crime. Both models have approximately 10,000 fixed source locations.

The true source placement follows the idea of Experiment 2.B, where one source is in the near-field, and the other one is a far-field source. We qualitatively assess the reconstructions from sLORETA, sLORETA3D (3-dimensionally extended sLORETA), and UGE by visualizing the Z-score distributions on the MRI slices of the true sources, separately. Moreover, we compute the 2-cluster dipole localization errors (DLE), where 2 clusters are identified from the spatial Z-score distributions, and the cluster means are used as estimates of the source location. To validate the spatial spread of distributions, we use the Earth Mover's Distance (EMD).

\section{Results}\label{sc:results}
\subsection{Experiment set 1}
The results of Shepp-Logan phantom image reconstruction from 22 radial lines are presented in Figure \ref{fig:phantom22}. Reconstructed images with 30 \% i.i.d. noise are recognizable. However, the amplitude, i.e., the shade, of the darkest gray is too weak and dark for STVSB. UGE has similar problems, but to a lesser extent, while the method knowledgeable about shade ratios gets the darkest gray right, but is missing all the lighter gray features. From the error highlight, we see that both STVSB and UGE reconstruct a notable amount of noise. Based on the reconstructions, STVSB seems to maintain sharper edges around the white ellipse than UGE, which produces blurrier reconstructions due to its Gaussian nature. 

\begin{figure}[h!]
\centering
        \includegraphics[width=0.8\linewidth]{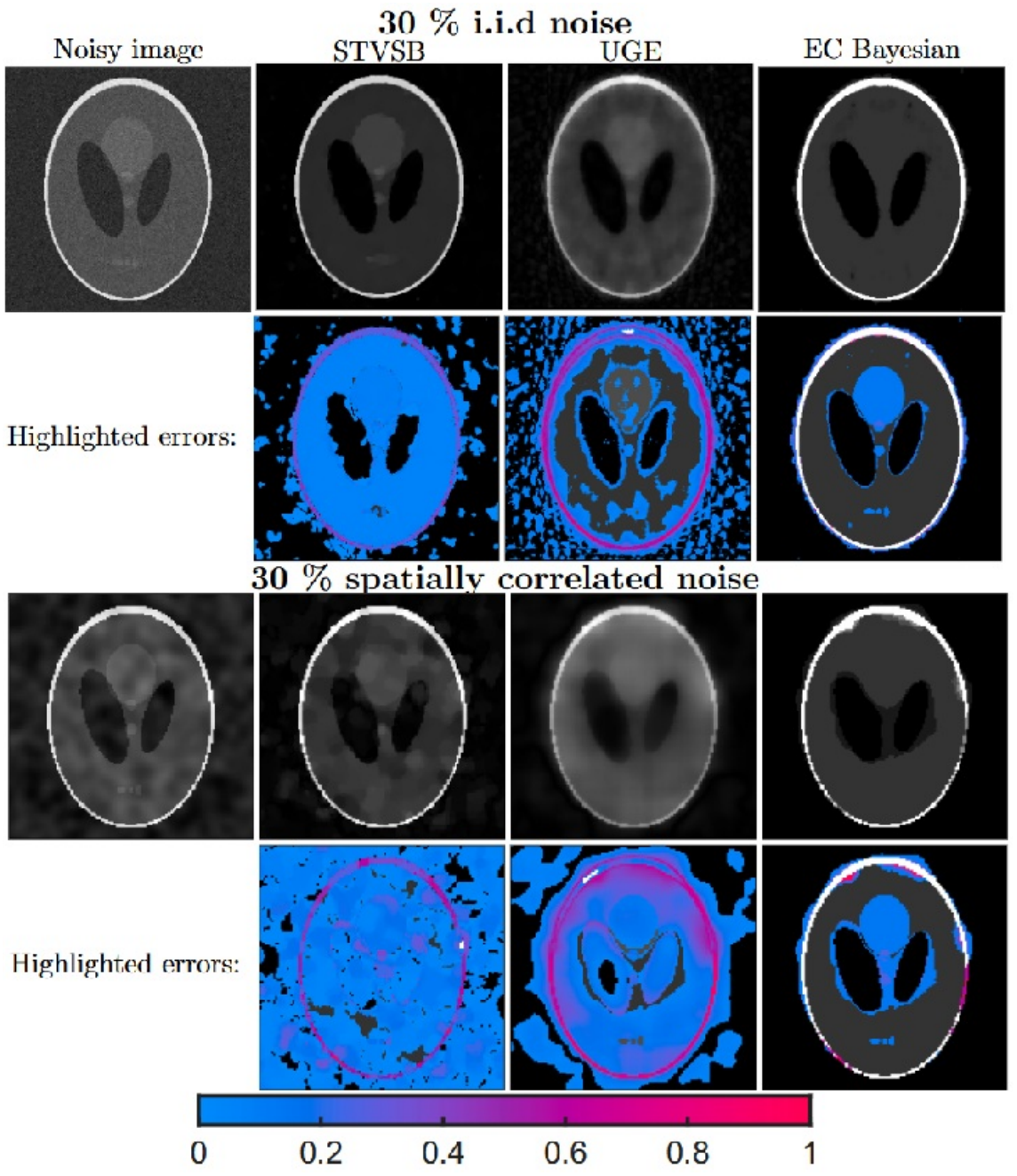}
    \caption{Reconstructions of noisy Shepp-Logan phantoms from 22 radial samples and the highlights of the errors in the nebula color map plotted over the clean phantom. Reconstructions are computed with Spatial Total Variation Split Bregman (STVSB), Unbiased Gaussian Estimate (UGE), and Exact Coloring Bayesian method EC Bayesian, which has the prior information about the ratio of all seven colors appearing in the true picture.}
    \label{fig:phantom22}
\end{figure}

On the one hand, maintaining the drastic color changes from the content of the image is a desirable property; on the other hand, it makes the method prone to introducing noise into the reconstruction. The drawback is more pronounced when the noise is correlated, indicating that the STVSB reconstruction is essentially a reconstruction of the noisy image. In contrast, Bayesian methods utilizing a better model of the noise can reduce it. The graphs in Figure \ref{fig:phantom_relativ_err_graphs1} show the relative error of STVSB for the default 100 iterations and the two other methods up to 500 sampled partial reconstruction. STVSB error stabilizes at 0.3, EC Bayesian achieves a lower error around 500 samples, and UGE error is slightly below 0.4. The sampling-based methods stabilized around 500 samples.

The results with 320 radial sampling lines in Figure \ref{fig:phantom320} are computed up to 1000 samples for UGE and EC Bayesian. The result shows nearly perfect reconstruction with EC Bayesian, with errors only at the black inner boundaries. Error is higher with 30 \% of the noise, but still less than 0.1 (Figure \ref{fig:phantom_relativ_err_graphs2}). The worst EC Bayesian reconstruction is obtained with spatially correlated 30 \% noise. Moreover, compared to the 22 radial line case, the quality of STVSB reconstructions remains about the same, unlike UGE, which provides significant improvement. Based on the relative error graphs, both Bayesian methods eventually perform better than STVBS. In cases with 30 \% of noise, UGE and EC Bayesian stabilized around 500 samples. However, with 10 \% of i.i.d. noise, the UGE's and EC Bayesian's errors did not stabilize during those 1000 samples, so the errors could end up lower after a couple thousand iterations.

\begin{figure}
    \centering
    \begin{minipage}{0.4\linewidth}
    \centering
        \footnotesize{30 \% i.i.d.}
        
        \includegraphics[width=0.98\linewidth]{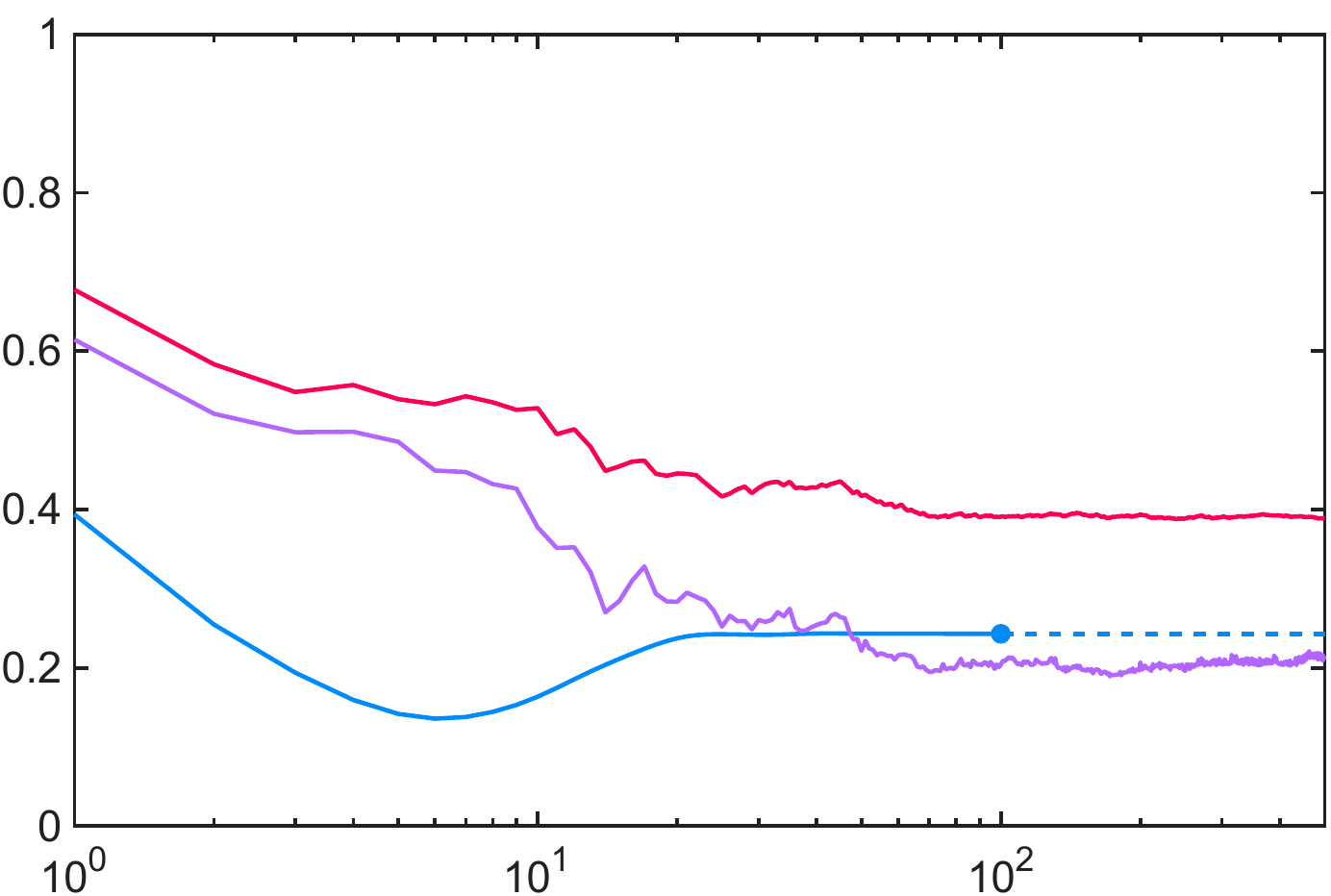}
    \end{minipage}\begin{minipage}{0.4\linewidth}
    \centering
        \footnotesize{30 \% spatially correlated}
        \includegraphics[width=0.98\linewidth]{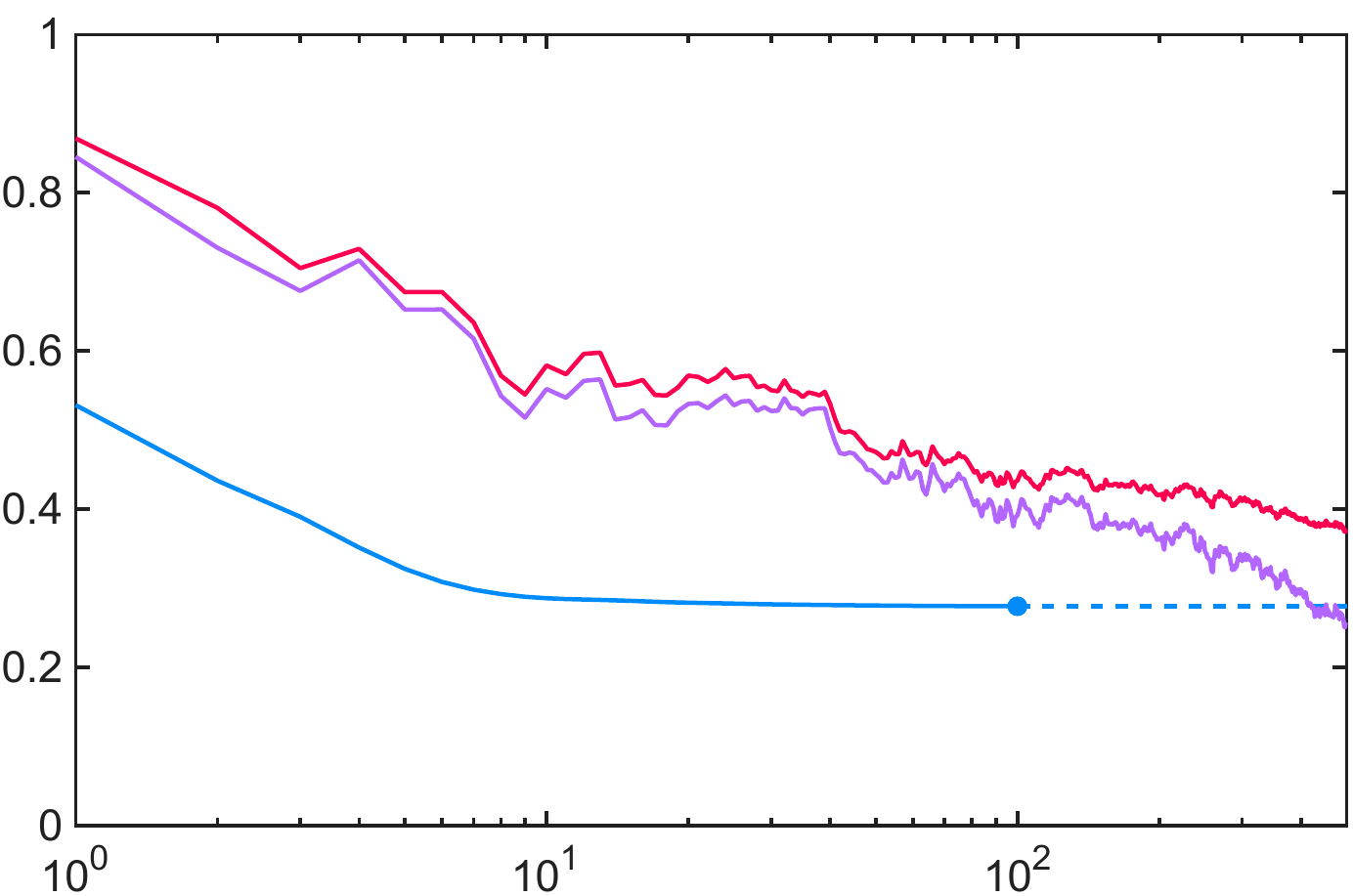}
    \end{minipage}
    \caption{Relative errors of the method for each foremost iteration. Iterations are presented on a logarithmic x-axis and errors on a y-axis. STVSB (light blue) is computed over 100 split Bregman iterations. UGE (red) and EC Bayesian (violet) are computed over 500 samples.}
    \label{fig:phantom_relativ_err_graphs1}
\end{figure}

\clearpage

\begin{figure}
    \centering
    \includegraphics[width=0.8\linewidth]{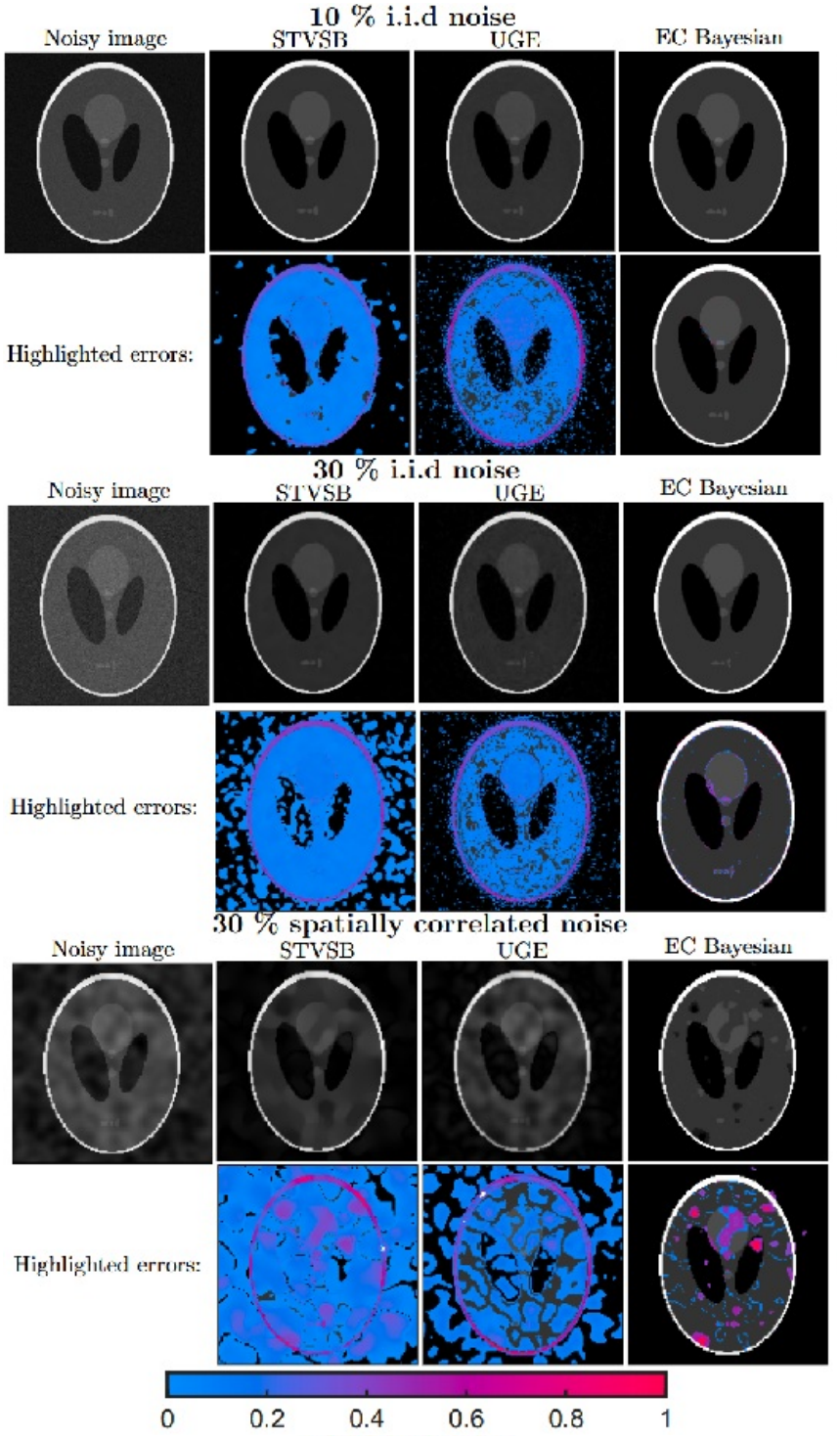}    
    \caption{The figure follows the same logic as Figure \ref{fig:phantom22}, but now we have 320 line samples, which is the minimum amount to reconstruct the noiseless phantom exactly.}
    \label{fig:phantom320}
\end{figure}

\begin{figure}
    \centering
    \begin{minipage}{0.3\linewidth}
    \centering
        \footnotesize{10 \% i.i.d.}
        
        \includegraphics[width=0.98\linewidth]{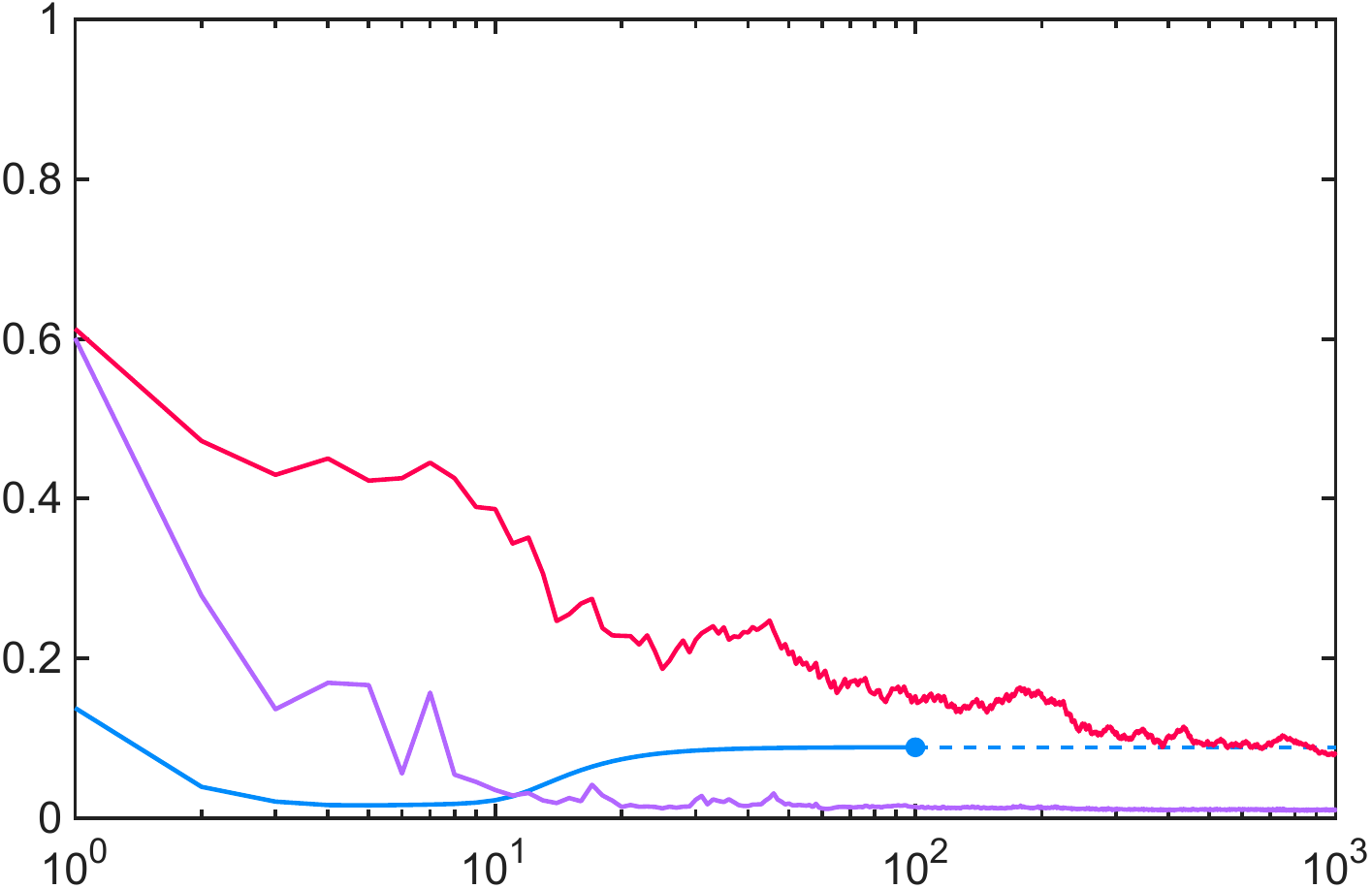}
    \end{minipage}\begin{minipage}{0.3\linewidth}
    \centering
        \footnotesize{30 \% i.i.d.}
        
        \includegraphics[width=0.98\linewidth]{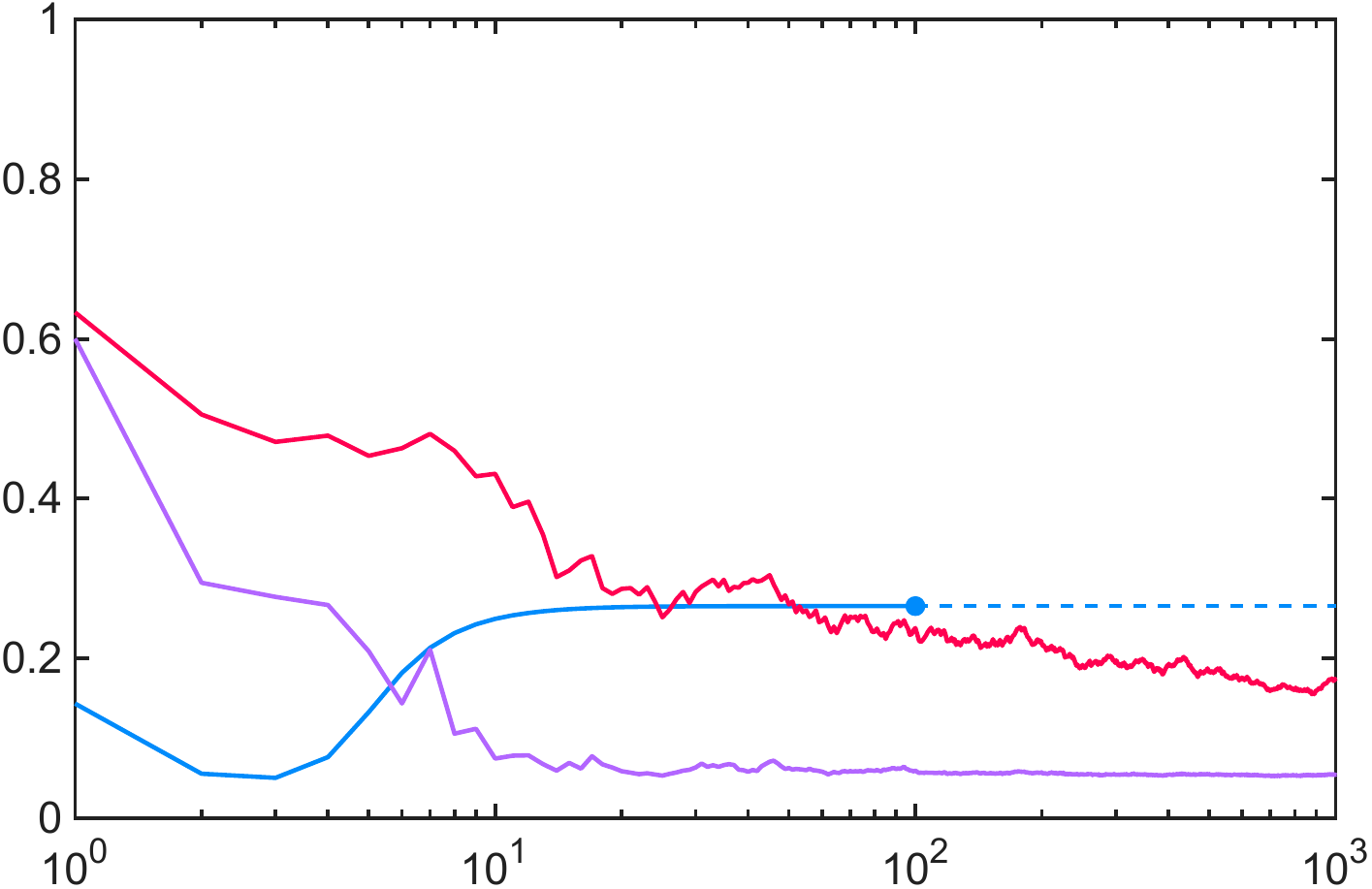}
    \end{minipage}\begin{minipage}{0.3\linewidth}
    \centering
        \footnotesize{30 \% spatially correlated}
        \includegraphics[width=0.98\linewidth]{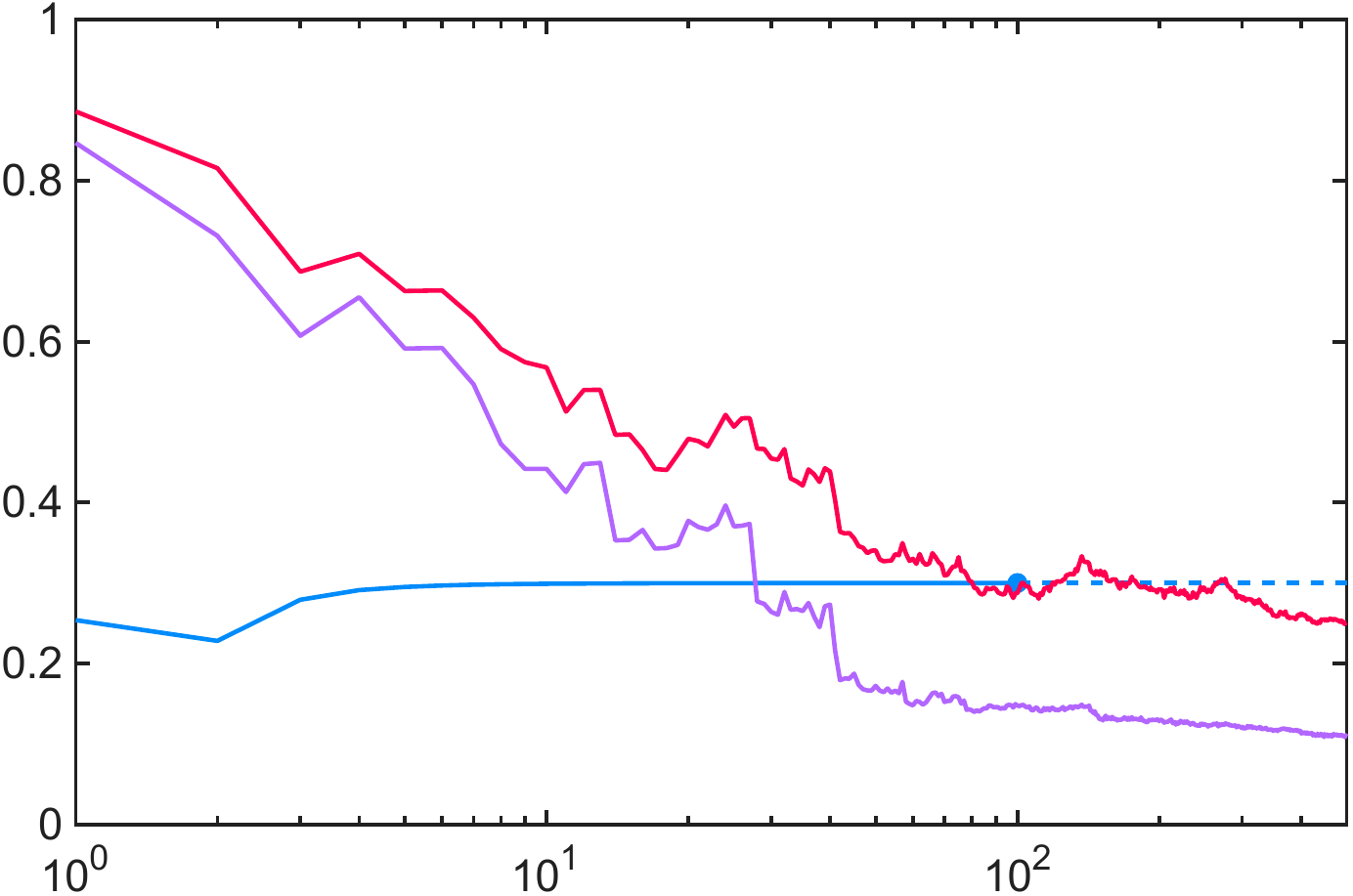}
    \end{minipage}
    \caption{Relative errors of the method for each foremost iteration. Iterations are presented on a logarithmic x-axis and errors on the y-axis. STVSB (light blue) is computed over 100 split Bregman iterations. UGE (red) and EC Bayesian (violet) are computed over 1,000 samples in 10 \% Gaussian i.i.d. noise case and otherwise limited to 500 samples due to the convergence.}
    \label{fig:phantom_relativ_err_graphs2}
\end{figure}

\subsection{Experiment set 2}
The probability contour plotted in the homogeneous conductivity disk in Figure \ref{fig:SpatialProb} shows that, without information about the observation noise, the estimates will have less luck in reconstructing the far-field sources even if the method is theoretically unbiased. This is because a deep source produces almost equal, weak signals in the sensor space. Thus, sensitivity to noise increases with depth. However, we can see that the source is almost guaranteed to be found from relatively deep, even if there is 15 \% of noise. From the figure, we can see that once the probability starts to decrease, it decreases rapidly around 0.2. This lower-hemisphere region is significantly smaller with 5 \% noise than with 15 \% noise. 

\begin{figure}[h!]
    \centering
    \begin{minipage}{0.3\textwidth}
    \begin{center}
        5 \% noise
    \end{center}\vskip0.02cm
        \includegraphics[width=0.95\textwidth]{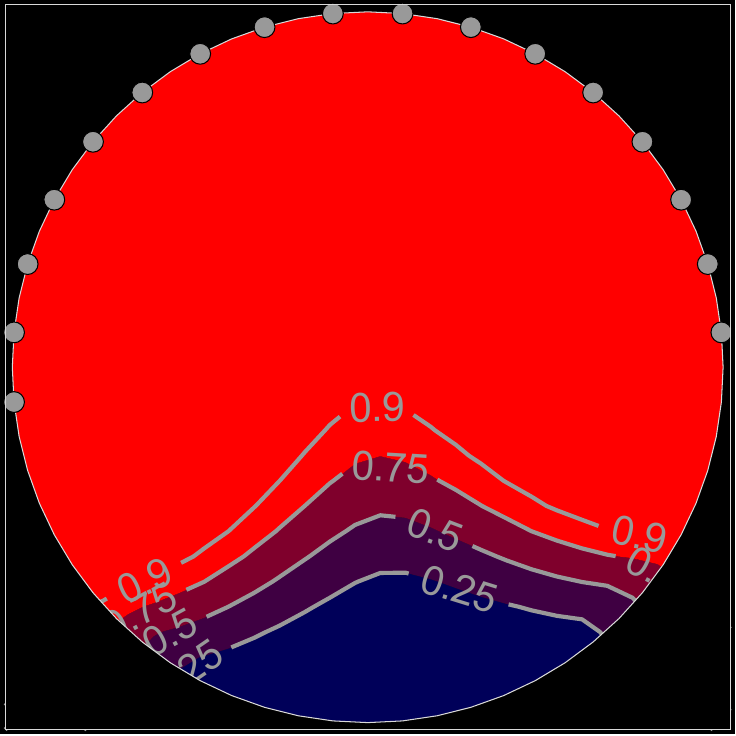}
    \end{minipage}\begin{minipage}{0.3\textwidth}
    \begin{center}
        15 \% noise
    \end{center}\vskip0.02cm
        \includegraphics[width=0.95\textwidth]{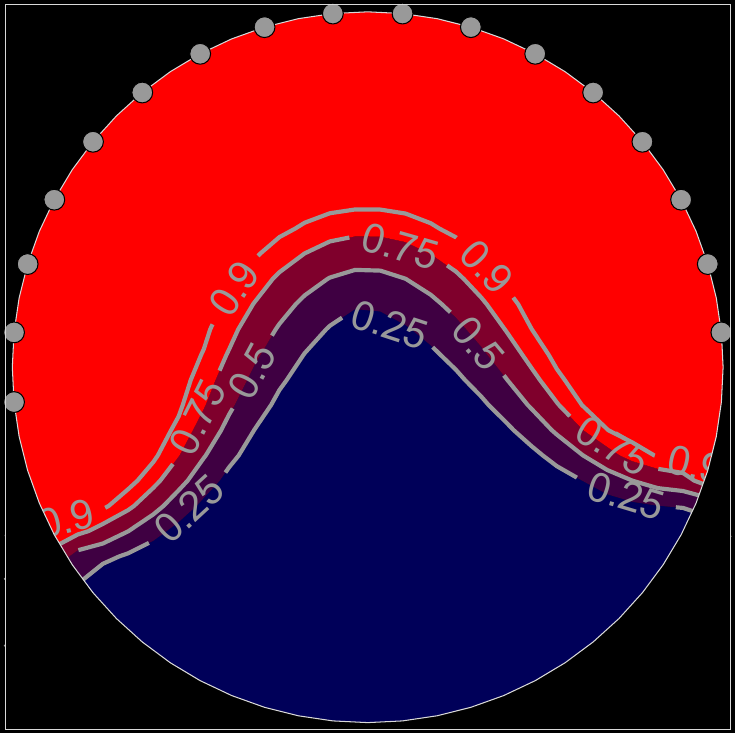}
    \end{minipage}
    \caption{Probability contour map to reconstruct exactly the source in the conductivity disk without knowledge about the observation noise, where the sensors lie only on one half of the disk (gray circles on the upper half).}
    \label{fig:SpatialProb}
\end{figure}

By applying the notion of the signal-to-noise ratio (SNR) according to Eq. (\ref{eq:SNR}) to the probability formula, we can observe completely depth-free, but SNR, sensor, and source parameter amount dependent behavior shown in Figure \ref{fig:FdistSNR}, where the probability to weakly reconstruct the source is plotted against SNR for four different sensor count; 16, 32, 61, and 128 for one and three-parameter models. Unsurprisingly, comparing the two graphs shows that more parameters require a higher SNR to be reconstructed successfully. A more interesting observation is that at the lower SNR, fewer sensors would give more reliable results: with one parameter model, SNR less than 3 (33 \% of noise), 16 sensors are preferable over 128, and less than 7.5 (13 \% of noise) with the three parameter model. In addition, without much exaggeration, we can say that there are SNR limits beyond which one can or cannot reconstruct a source with high-density sensor systems. 

The estimated (A) one deep source, (B) highly superficial source pair causing orthogonal observations, and (C) near- and far-field source pair are presented in Figure \ref{fig:dipoles_in_disk}, where a red cross indicates the location and an arrow the 2D orientation of the source. The results show that MNE and both sLORETA variants mislocalize the source, whereas UGE estimates the source at the nearest location allowed by the inversion model's mesh. The orientation of the source is almost correct as well. MNE localizes the source at the nearest observation boundary of the model with respect to the true source location due to the depth bias. sLORETA estimate is superficial, but not at the vicinity of the sensors. The mislocalization of sLORETA can be explained by the near-tangential orientation of the source relative to the nearest source. The source is radial for sources on the left part of the disk; however, the distance between the source and these sensors is high. sLORETA2D has the second-best performance, with a slight orientation error and a smaller localization error, as the estimate is farther from the boundary in the correct direction. By looking at the Z-score distribution in Figure \ref{fig:recs_in_disk}, the Z-scores of both sLORETAs are over 0.8. Estimation task (B) demonstrates that only sLORETA fails to distinguish between the two distant sources, as expected given the method's 1-parameter nature. Interestingly, the 2-orientation extension of the classical sLORETA, sLORETA2D, can obtain two sources. If one looks at the Z-score distribution, it indicates the presence of another source on the right side of the model; however, that inference requires human input. Orientation estimate is the best with sLORETA2D.
In scenario C, most methods indicate two sources near each other, with the more superficial one closer. In contrast, UGE indicates two far-field sources that are spatially relatively close to the true sources. By ignoring the significant localization error of the deep source with sLORETA2D, its orientation estimate is nearly exact. Based on the distributions, both sources have Z-scores above 0.8 for UGE, while sLORETA provides an almost 0.9 score for the cortex and over 0.6 for the deep source. From the sLORETA2D distribution, it may not be apparent that a source lies deeper than the disk's midpoint.

\begin{figure}
    \centering
    \rotatebox{90}{\hspace{0.75cm} 3 parameter}\includegraphics[width=0.5\linewidth]{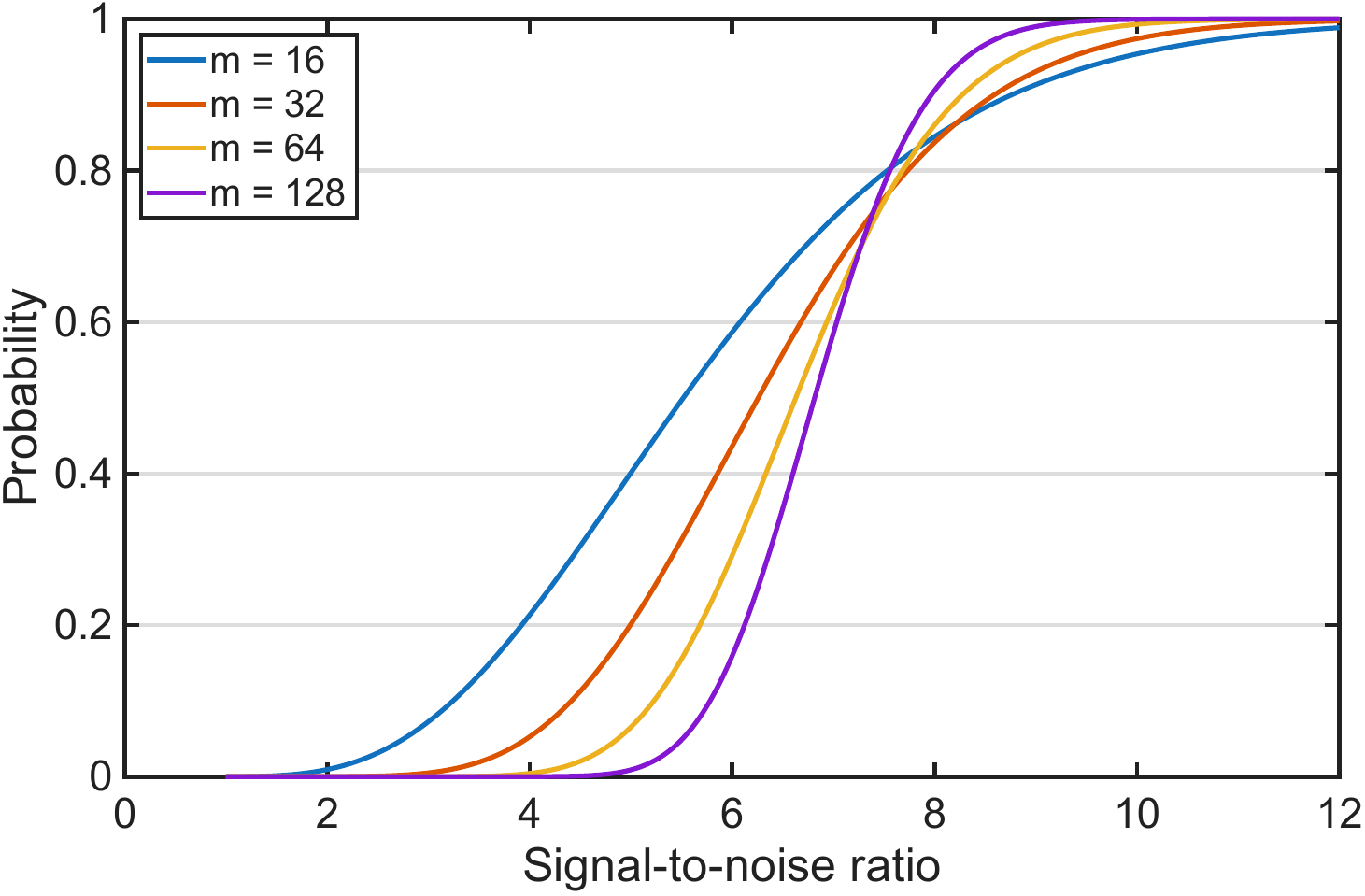}

    \rotatebox{90}{\hspace{0.75cm} 1 parameter}\includegraphics[width=0.5\linewidth]{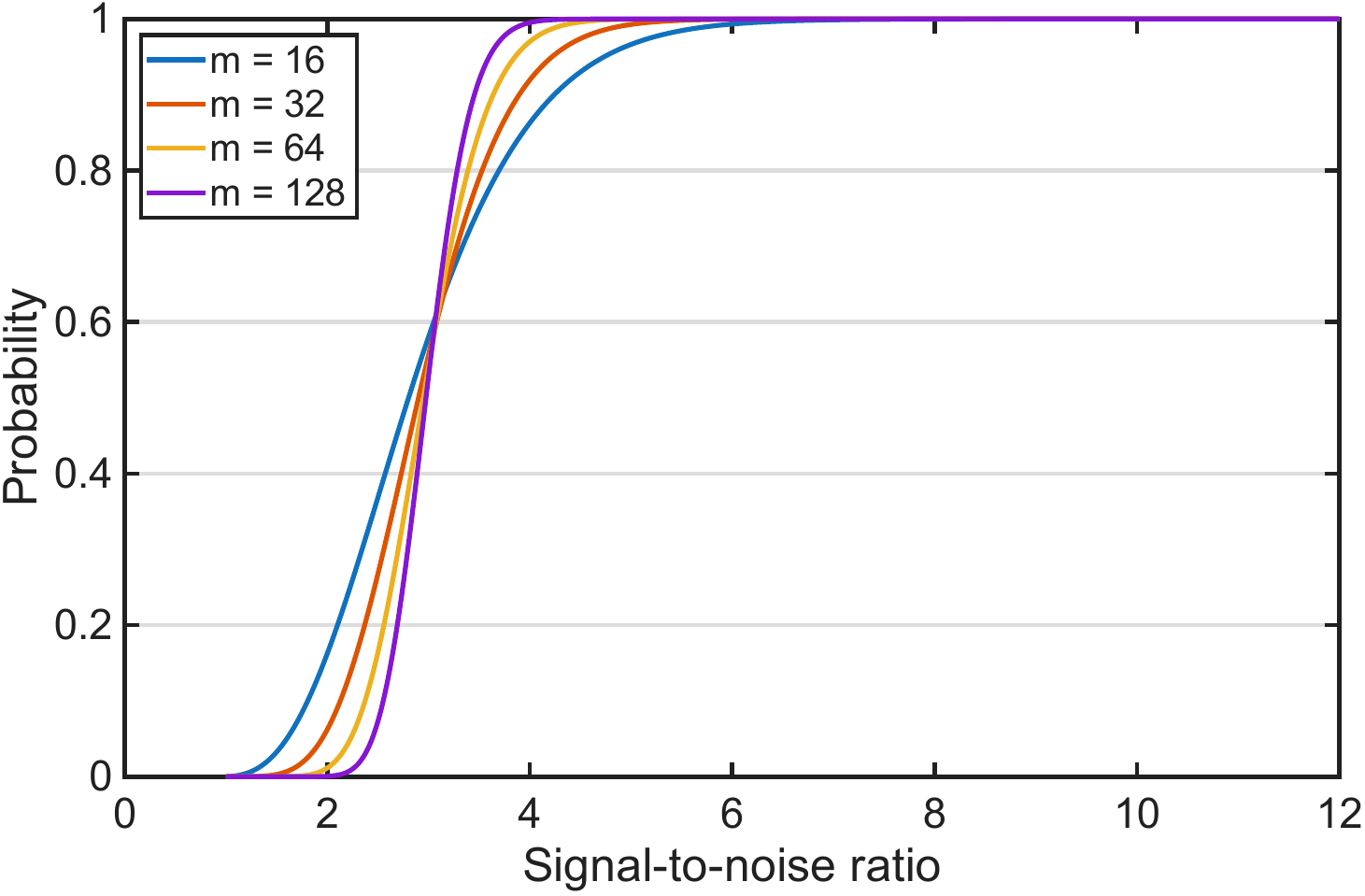}
    \caption{Reconstruction probability under different levels of signal-to-noise ratio as defined in Eq. (\ref{eq:SNR}). Probability curves are plotted for four different number of sensors, 16, 32, 64, 128, as displayed in the legend. {\bf Top} graph shows the probabilities for a model, where the source is described by three parameters, e.g., sources have free orientation in a three-dimensional space, and {\bf bottom} graph shows the same for 1 parameter per source model.}
    \label{fig:FdistSNR}
\end{figure}

\begin{figure}
    \centering
    \begin{minipage}{0.3\linewidth}
        \includegraphics[height=0.98\linewidth]{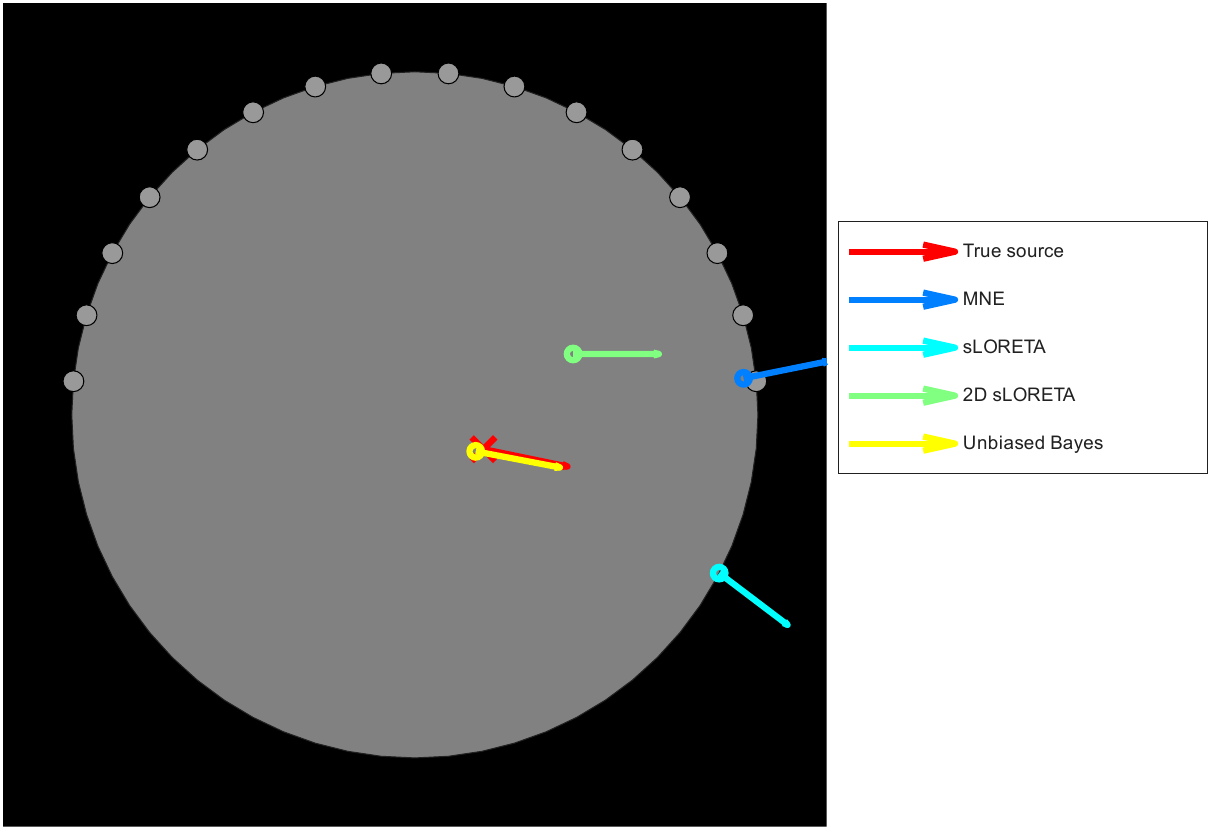}
    \end{minipage}\begin{minipage}{0.3\linewidth}
        \includegraphics[height=0.98\linewidth]{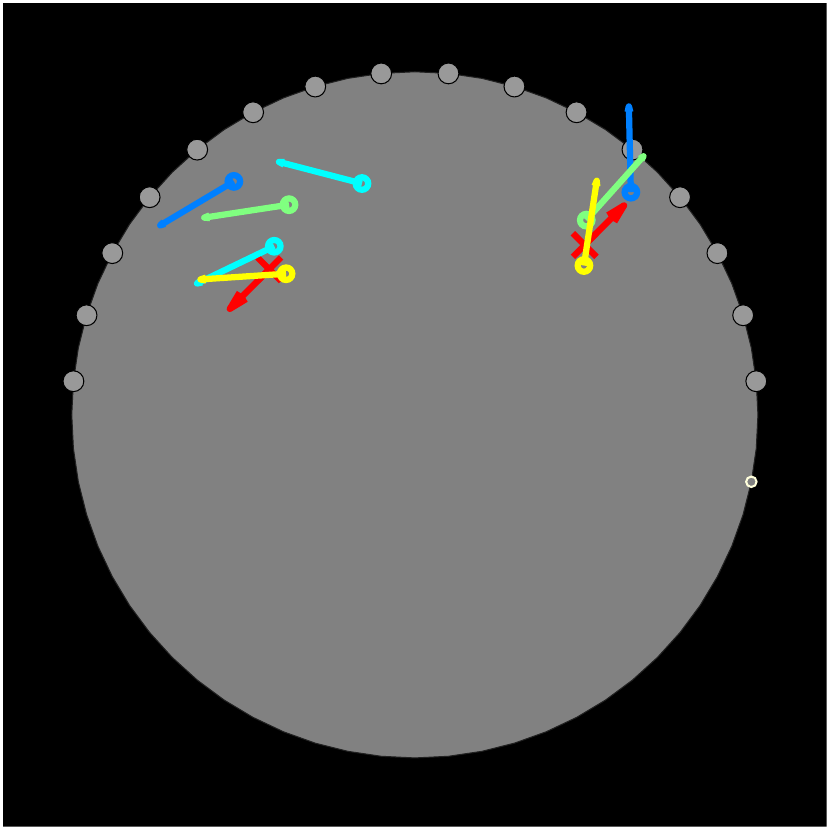}
    \end{minipage}\begin{minipage}{0.3\linewidth}
        \includegraphics[trim={0 0 6.5cm 0},clip,height=0.98\linewidth]{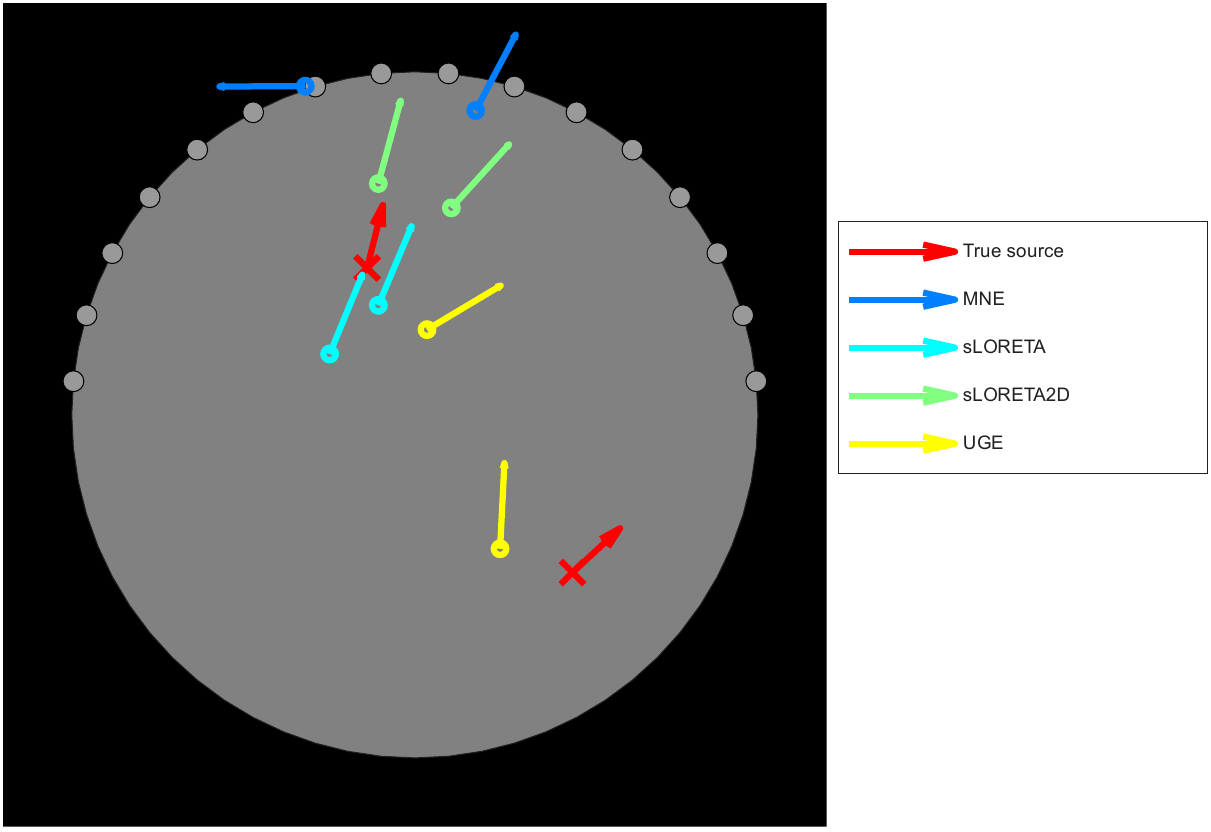}
    \end{minipage}
    \begin{minipage}{0.3\linewidth}
    \centering
        \includegraphics[trim={14cm 6cm 0 3.5cm},clip,width=0.98\linewidth]{disk/2hardsources_dipoles.pdf}
    \end{minipage}
    \caption{Dipolar source estimations of MNE, sLORETA, sLORETA2D, and Unbiased Bayesian method (UGE) in the homogeneous conductivity disk in three different source scenarios A, B, and C. The red cross indicates the location of the true source, and the red arrow represents the 2-dimensional source orientation. Locations of the source estimates are represented by colored circles and orientations by the arrows of the same color. The connections between different colors and methods are presented in the legend. }
    \label{fig:dipoles_in_disk}
\end{figure}

\begin{figure}
    \centering
    \begin{minipage}{0.03\textwidth}
        \rotatebox{90}{MNE}
    \end{minipage}\begin{minipage}{0.3\linewidth}
        \includegraphics[height=0.98\linewidth]{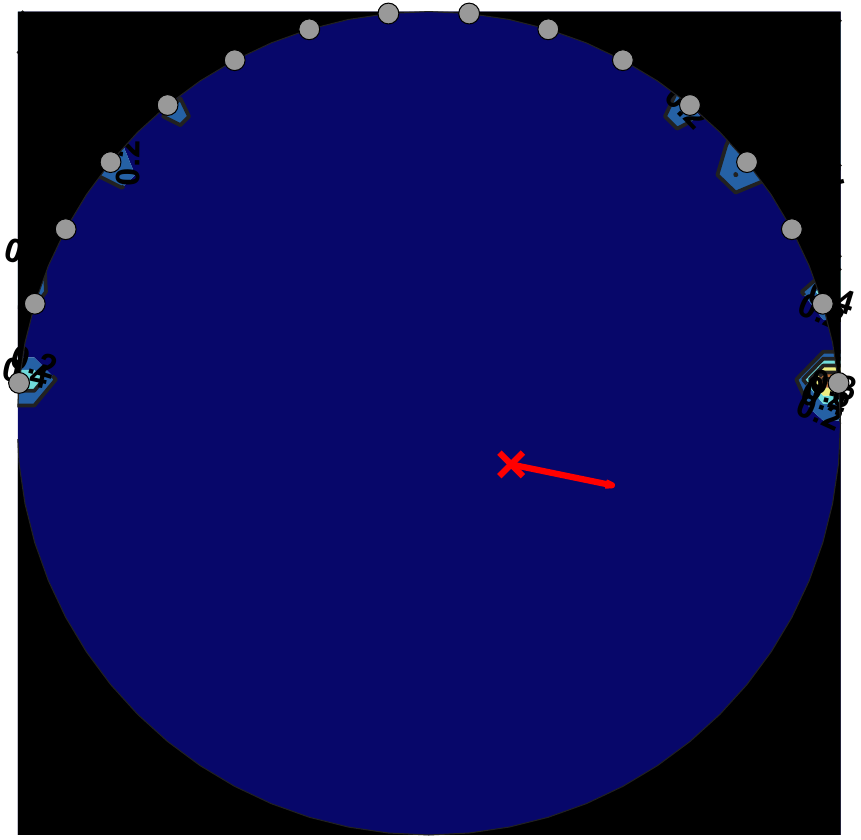}
    \end{minipage}\begin{minipage}{0.3\linewidth}
        \includegraphics[height=0.98\linewidth]{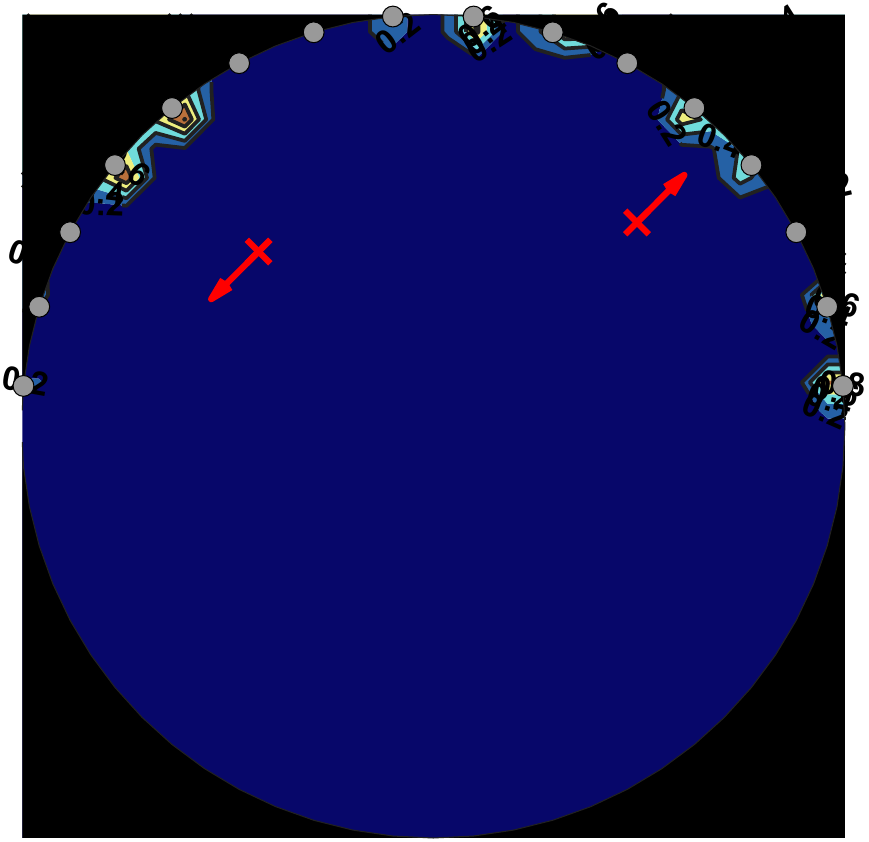}
    \end{minipage}\begin{minipage}{0.3\linewidth}
        \includegraphics[height=0.98\linewidth]{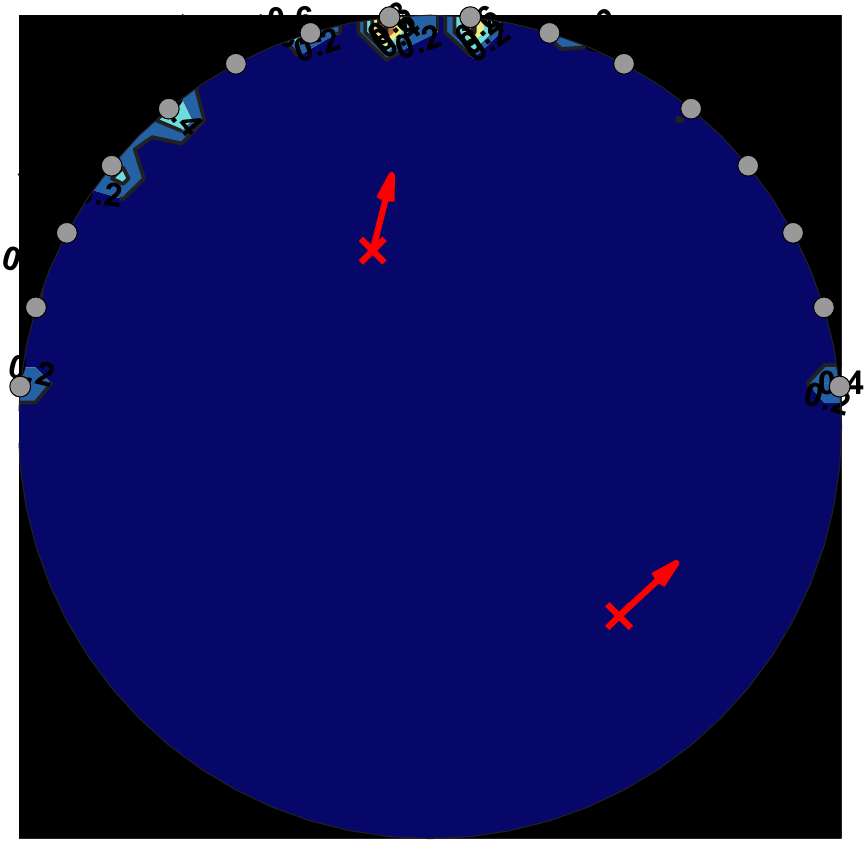}
    \end{minipage}
    
    \begin{minipage}{0.03\textwidth}
        \rotatebox{90}{sLORETA}
    \end{minipage}\begin{minipage}{0.3\linewidth}
        \includegraphics[height=0.98\linewidth]{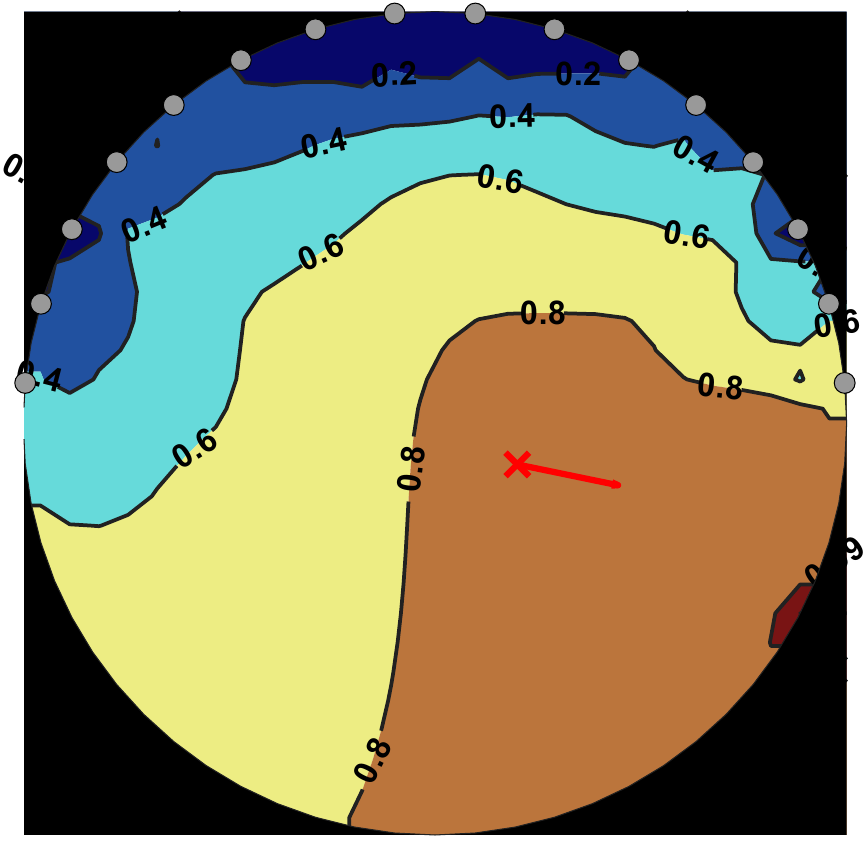}
    \end{minipage}\begin{minipage}{0.3\linewidth}
        \includegraphics[height=0.98\linewidth]{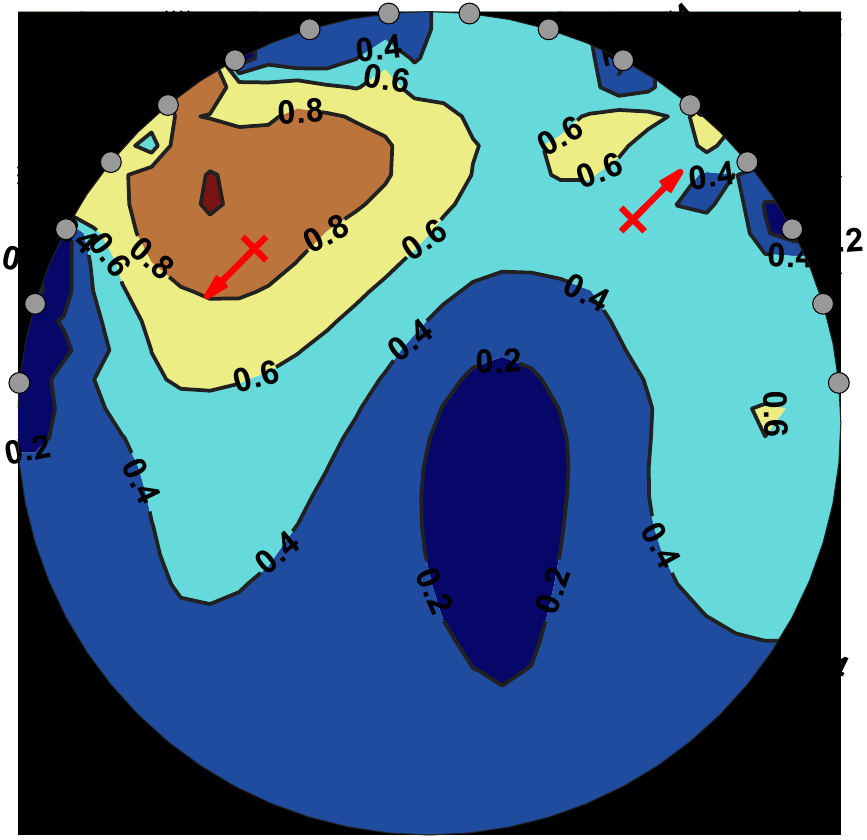}
    \end{minipage}\begin{minipage}{0.3\linewidth}
        \includegraphics[height=0.98\linewidth]{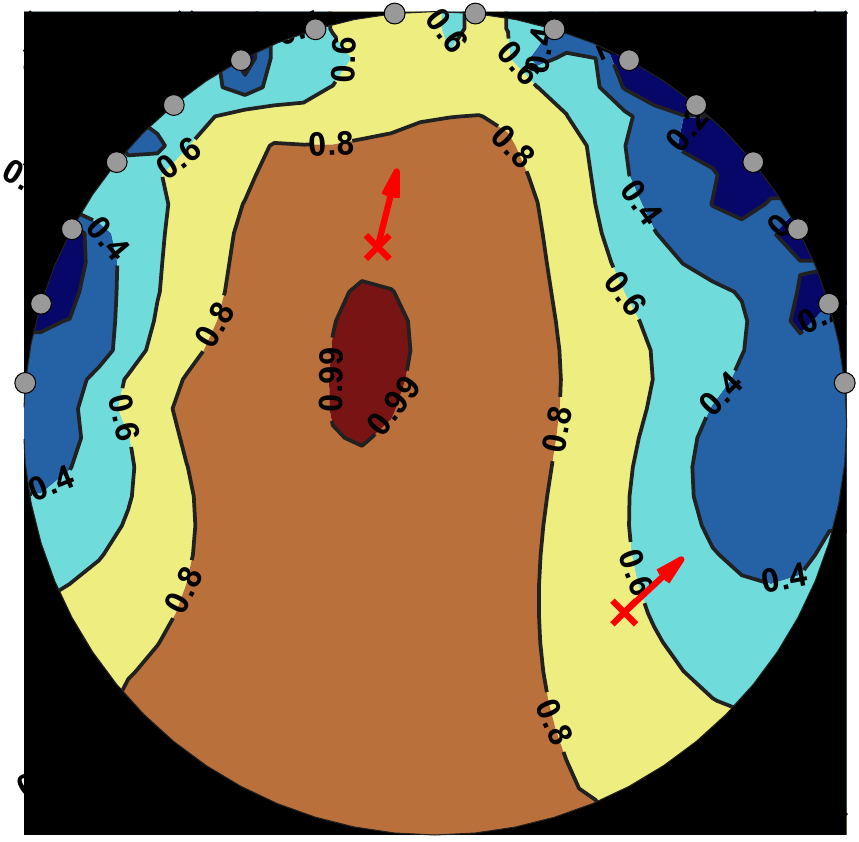}
    \end{minipage}

    \begin{minipage}{0.03\textwidth}
        \rotatebox{90}{sLORETA2D}
    \end{minipage}\begin{minipage}{0.3\linewidth}
        \includegraphics[height=0.98\linewidth]{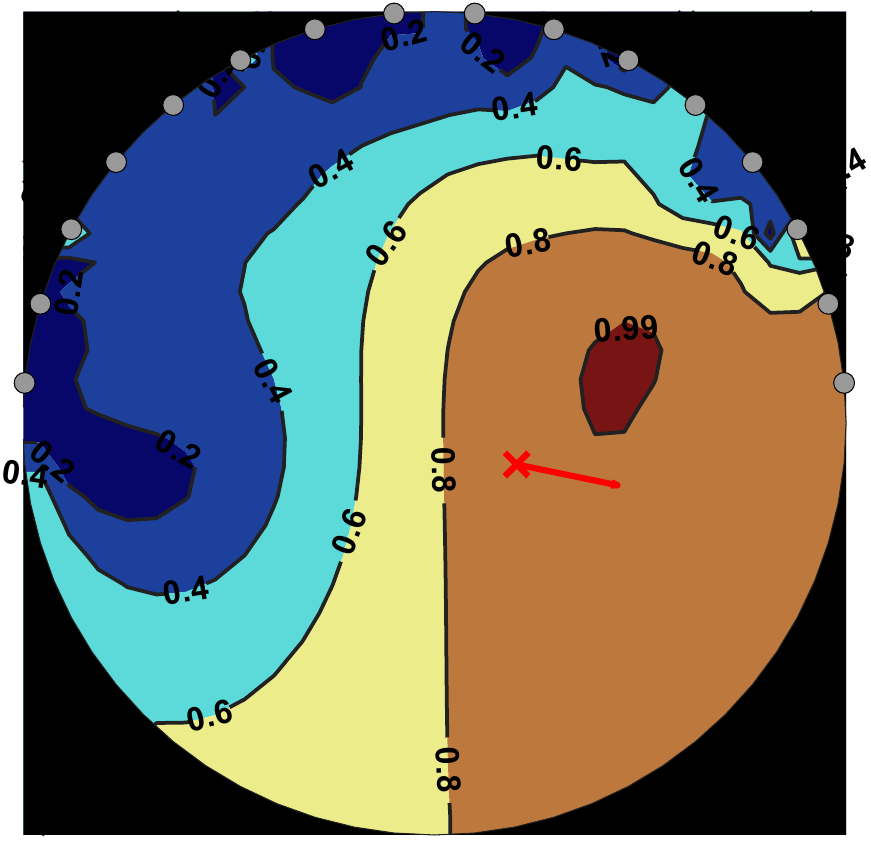}
    \end{minipage}\begin{minipage}{0.3\linewidth}
        \includegraphics[height=0.98\linewidth]{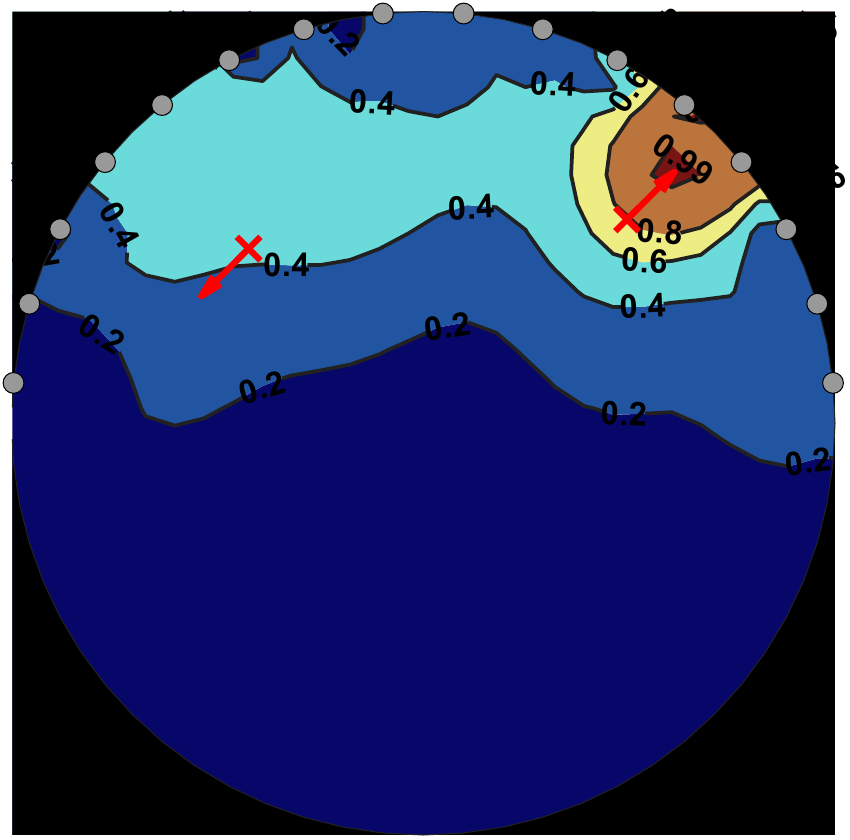}
    \end{minipage}\begin{minipage}{0.3\linewidth}
        \includegraphics[height=0.98\linewidth]{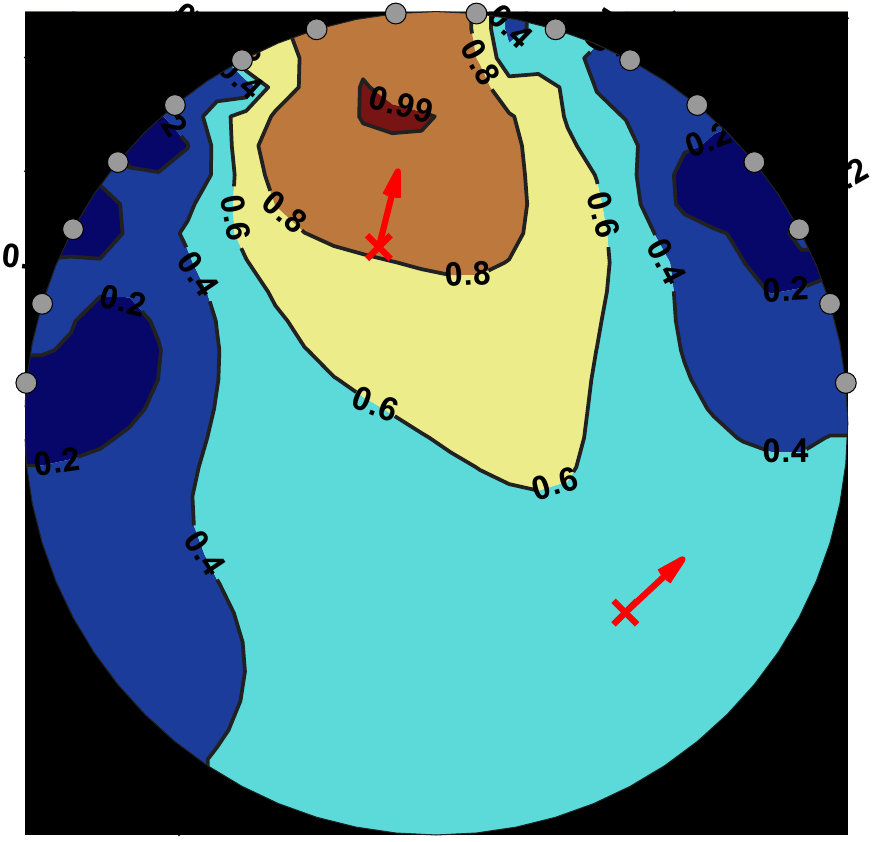}
    \end{minipage}

    \begin{minipage}{0.03\textwidth}
        \rotatebox{90}{UGE}
    \end{minipage}\begin{minipage}{0.3\linewidth}
        \includegraphics[height=0.98\linewidth]{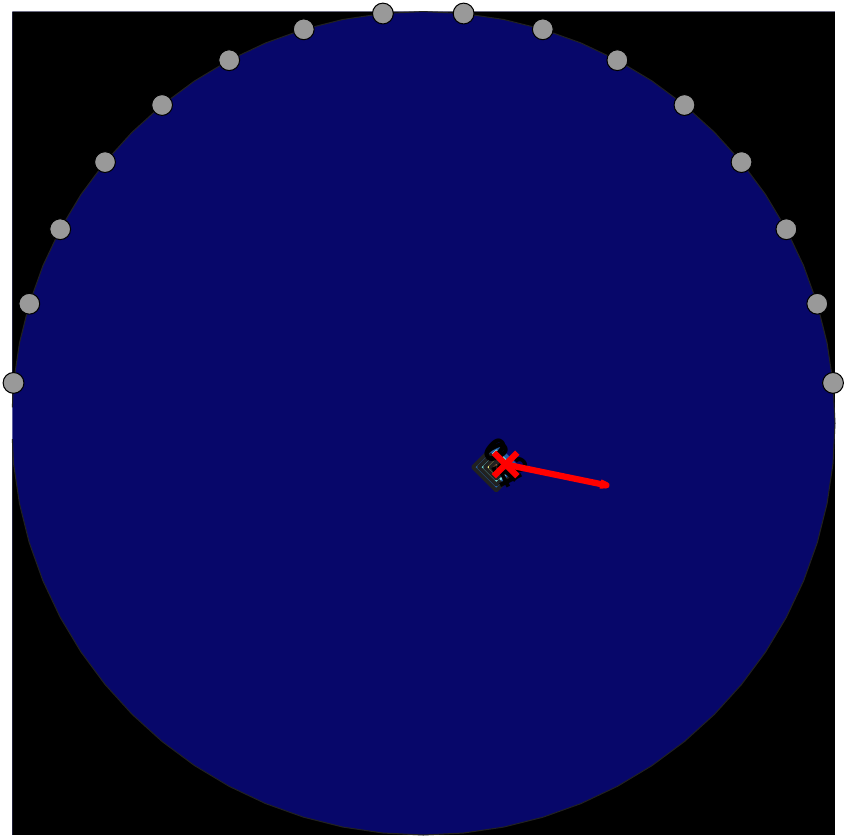}
    \end{minipage}\begin{minipage}{0.3\linewidth}
        \includegraphics[height=0.98\linewidth]{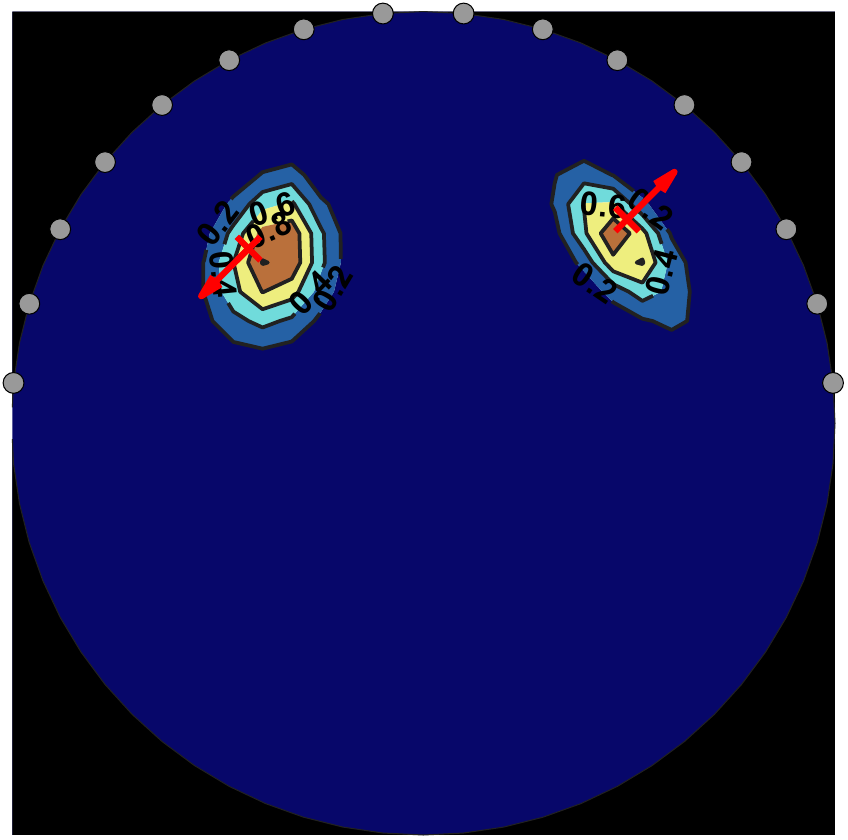}
    \end{minipage}\begin{minipage}{0.3\linewidth}
        \includegraphics[height=0.98\linewidth]{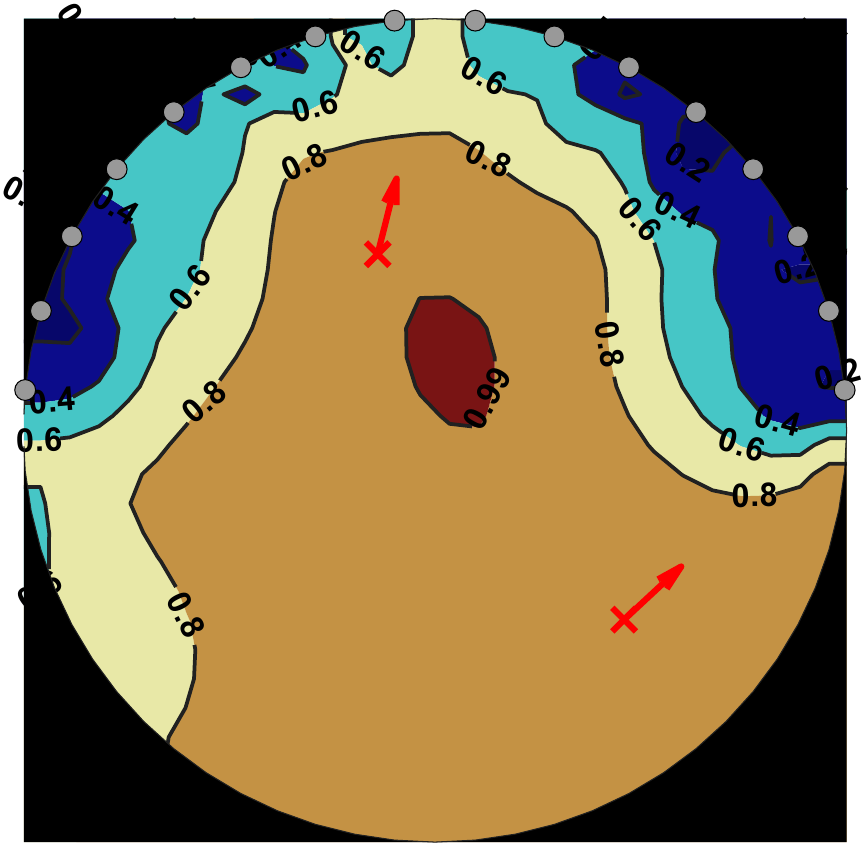}
    \end{minipage}
    \caption{The distributed source estimate of MNE and Z-score distribution of standardized methods and unbiased Gaussian estimate (UGE) displayed as contours on the conductivity disk for three different source scenarios. The true source is depicted by a red cross and arrow.}
    \label{fig:recs_in_disk}
\end{figure}

\clearpage

\subsection{Experiment set 3}
In this experiment, we aim to estimate two simultaneous sources corresponding to Experiment 2.B, in which one source is at a superficial location and the other lies within deep brain structures. In Figure \ref{fig:recsonbrain}, the Z-score distributions are plotted over the transversal, coronal, and sagittal plane cuts of MRI images. The images are split into two rows for each method. The upper row presents the score distribution at the planes where the superficial, near-field, source is located, and the second row displays the score distribution at the planes of the deep, far-field, source.

As we see from the Z-score distributions, neither sLORETA nor sLORETA3D can form separate clusters of highest score values around the true sources; instead, the distribution of the highest values, depicted in yellow, forms a continuous region that cover both sources. The highly similar score distributions of these two cover a large region around both sources, thus making the estimation of the exact source location difficult. However, sLORETA3D appears to yield a slightly higher score in the region adjacent to the deep source at the posterior end of the thalamus.

Contrary to the other two methods, we obtain separate blobs at the superficial source and in the vicinity of the deep source for UGE. The maximum score in the deep structures is less at the posterior end of the thalamus, as it is with both sLORETA variants.

\begin{figure}
    \centering
    \begin{minipage}{\linewidth}
    \begin{minipage}{0.92\linewidth}
        \begin{minipage}{0.25\linewidth}
        \centering
            \footnotesize{Transversal}
        \end{minipage}\begin{minipage}{0.23\linewidth}
        \centering
            \footnotesize{Coronal}
        \end{minipage}\begin{minipage}{0.28\linewidth}
        \centering
            \footnotesize{Sagittal}
        \end{minipage}
    \end{minipage}\vspace{0.1cm}
 
 \begin{minipage}{0.92\linewidth}
        \begin{minipage}{0.03\linewidth}
        \rotatebox{90}{sLORETA}
    \end{minipage}\hspace{-3cm}\begin{minipage}{\linewidth}\centering
        \begin{minipage}{0.25\linewidth}
    \centering
            \rotatebox{180}{\includegraphics[trim={10cm 11.5cm 10cm 12.8cm},clip,width=0.7\textwidth]{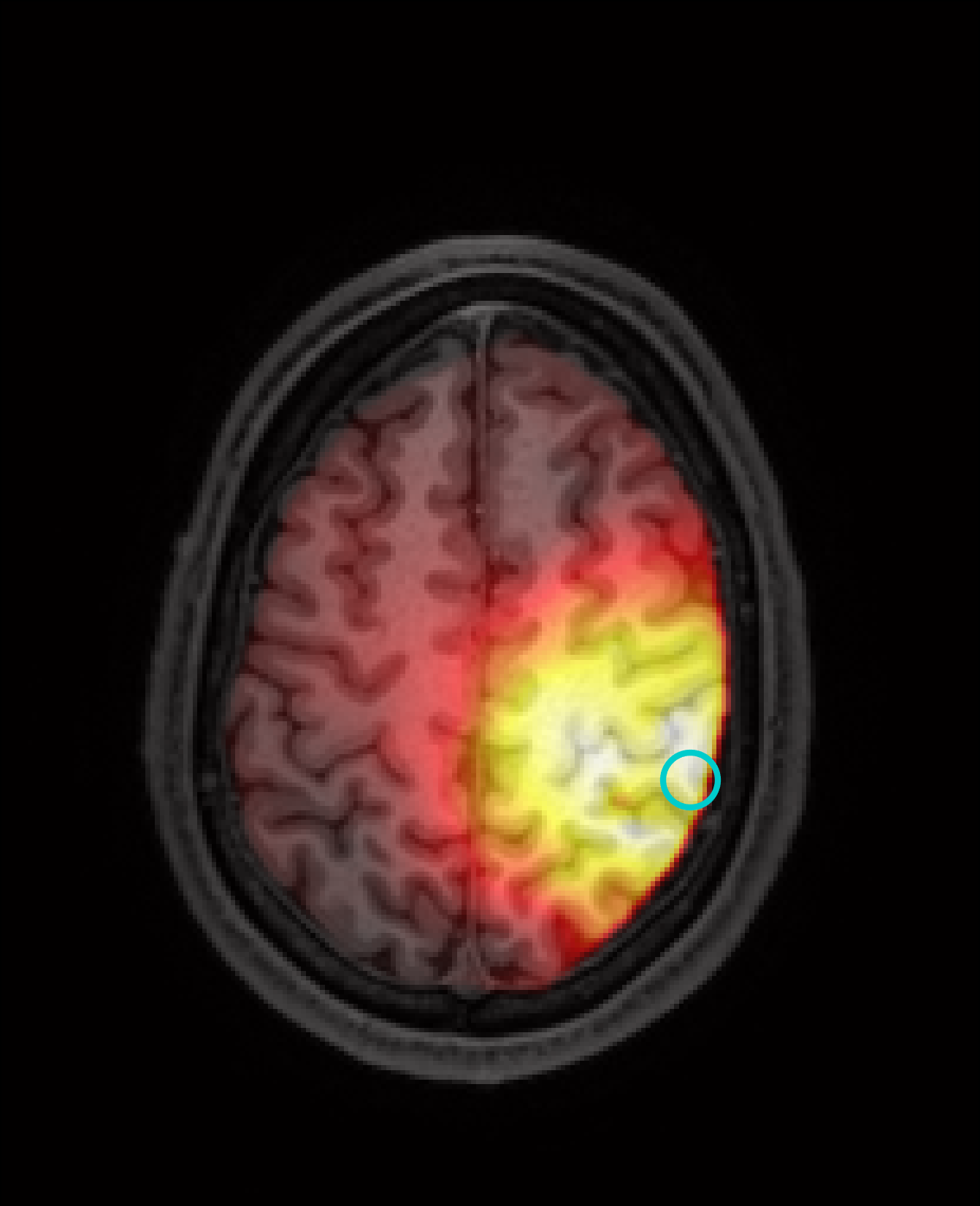}}
        \end{minipage}\hspace{-0.6cm}\begin{minipage}{0.25\linewidth}
            \centering
            \rotatebox{180}{\includegraphics[trim={10cm 13.5cm 10cm 9cm},clip,width=0.8\linewidth]{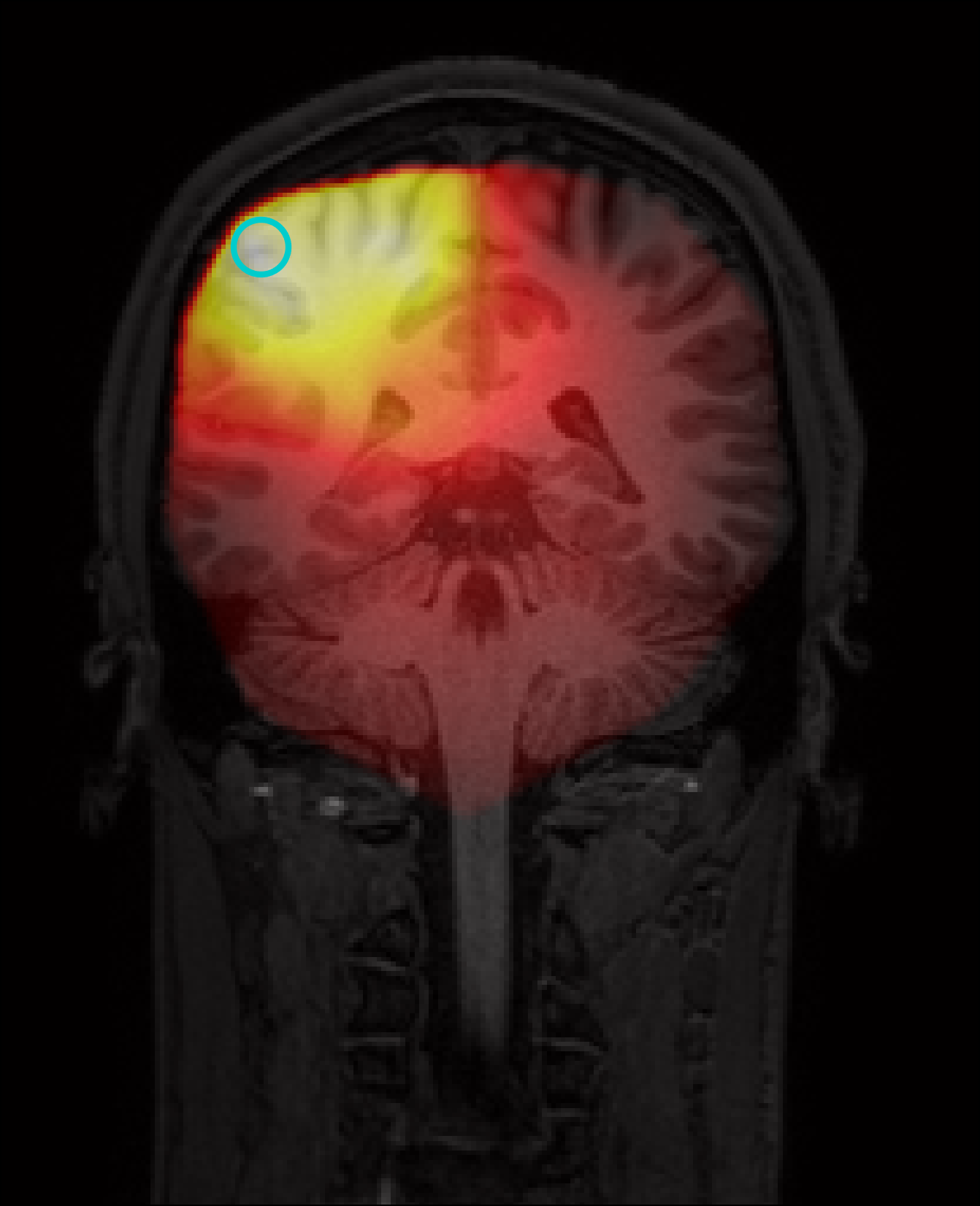}}
        \end{minipage}\hspace{-0.5cm}\begin{minipage}{0.25\linewidth}
            \centering
            \rotatebox{180}{\includegraphics[trim={13.5cm 13cm 12cm 9cm},clip,width=\linewidth]{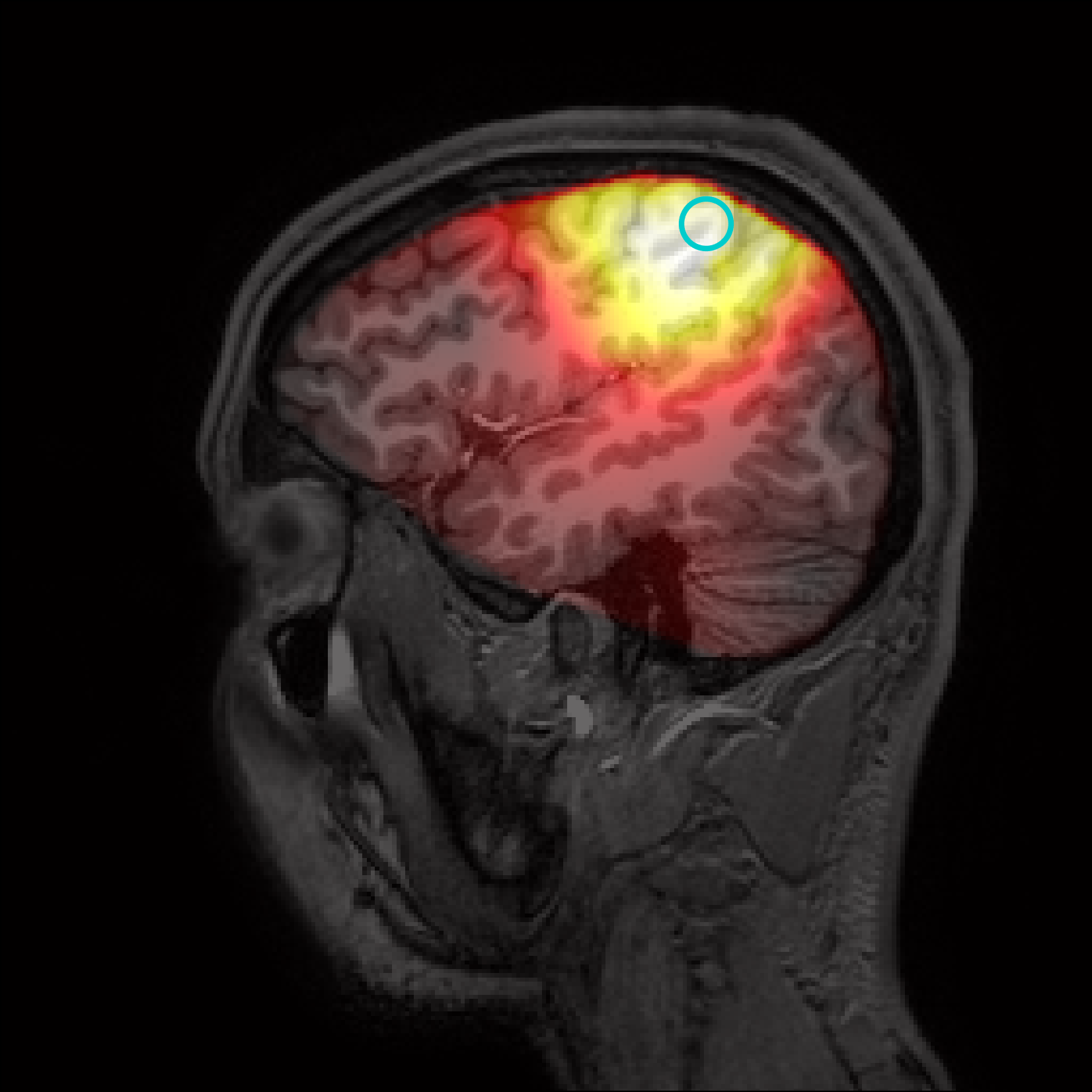}}
        \end{minipage}\hspace{4cm}\begin{minipage}{0.03\linewidth}
        \rotatebox{90}{Near-field}
    \end{minipage}

    \begin{minipage}{0.25\linewidth}
    \centering
            \rotatebox{180}{\includegraphics[trim={10cm 11.5cm 10cm 12.8cm},clip,width=0.7\textwidth]{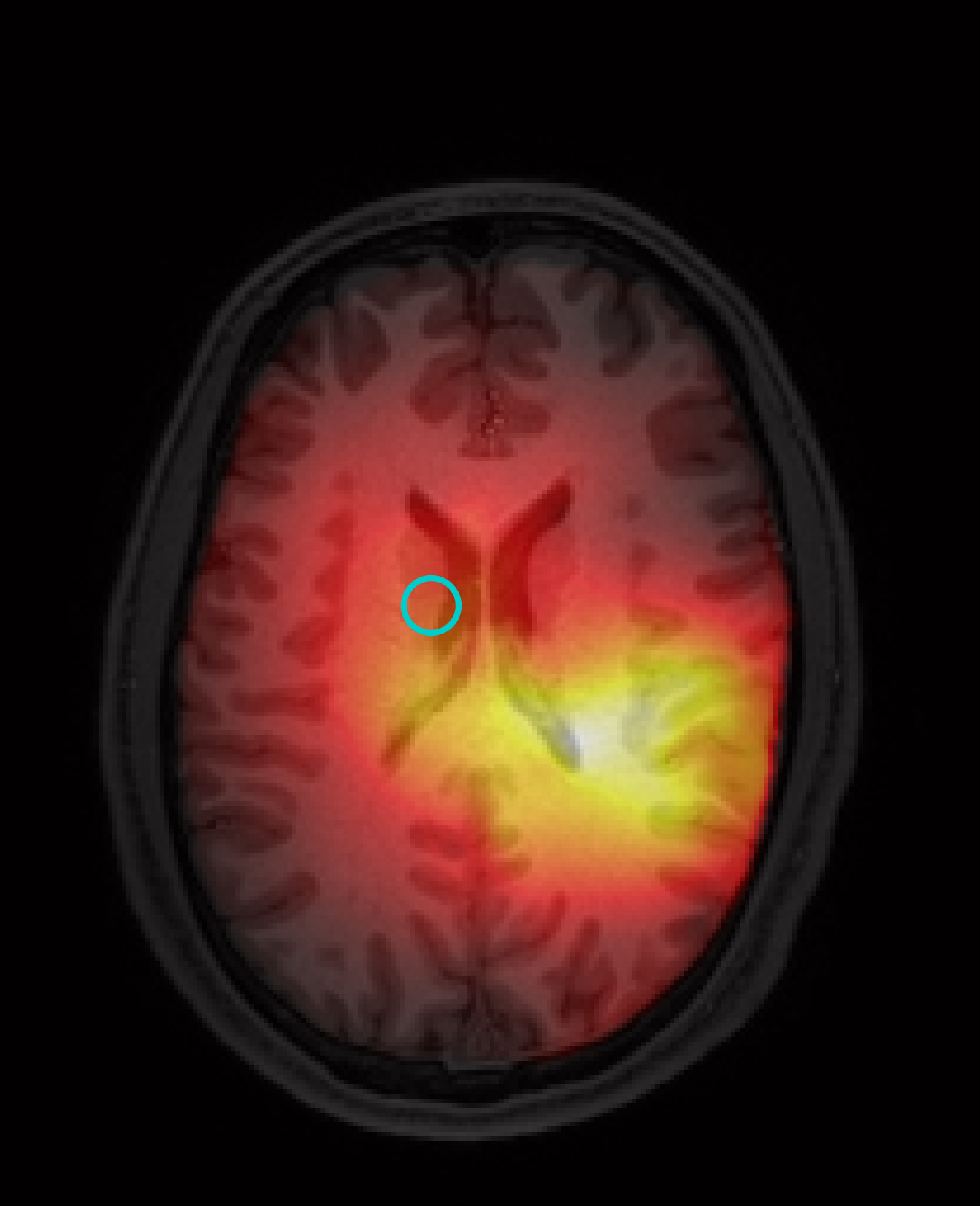}}
        \end{minipage}\hspace{-0.6cm}\begin{minipage}{0.25\linewidth}
            \centering
            \rotatebox{180}{\includegraphics[trim={10cm 13.5cm 10cm 9cm},clip,width=0.8\linewidth]{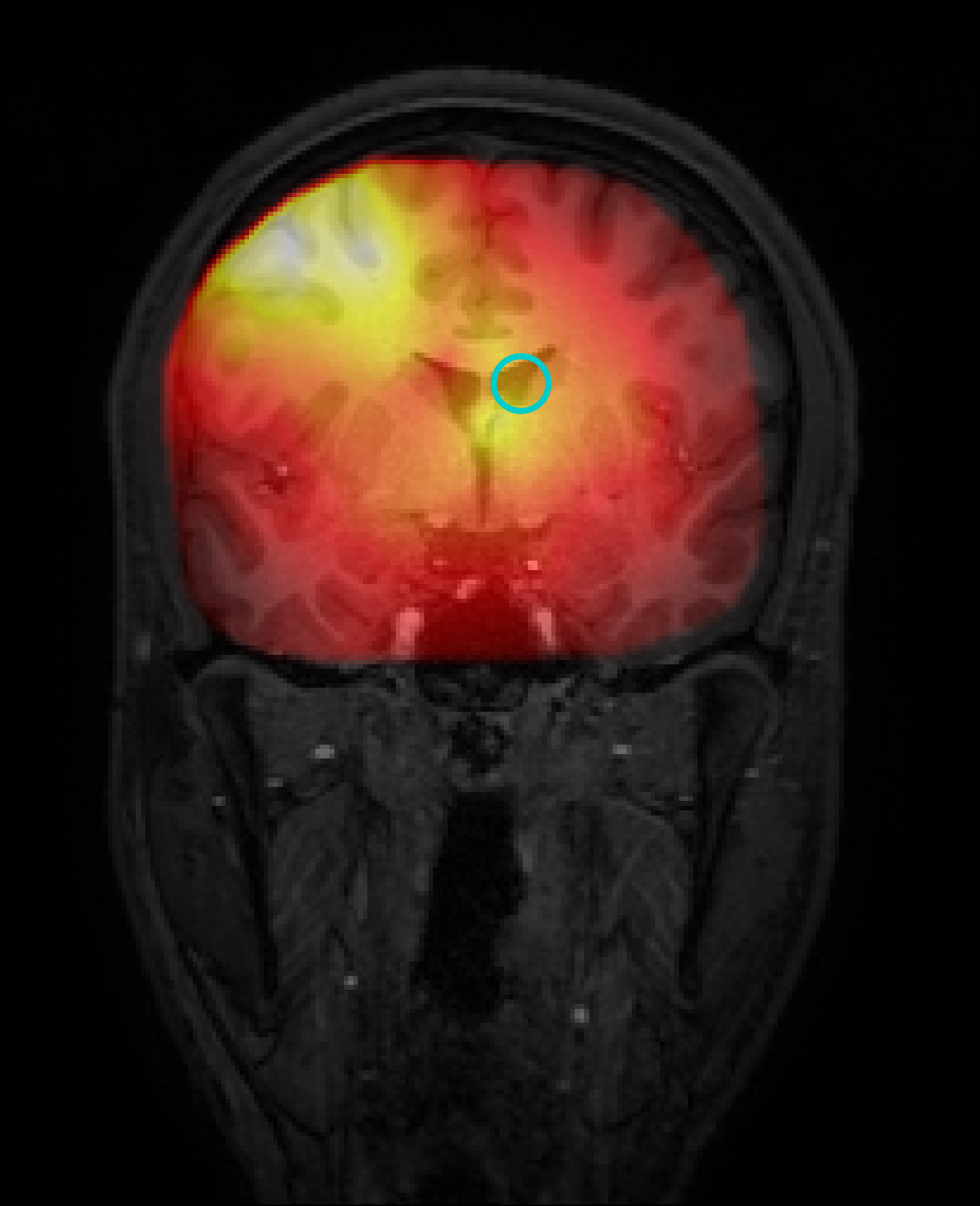}}
        \end{minipage}\hspace{-0.5cm}\begin{minipage}{0.25\linewidth}
            \centering
            \rotatebox{180}{\includegraphics[trim={13.5cm 13cm 12cm 9cm},clip,width=\linewidth]{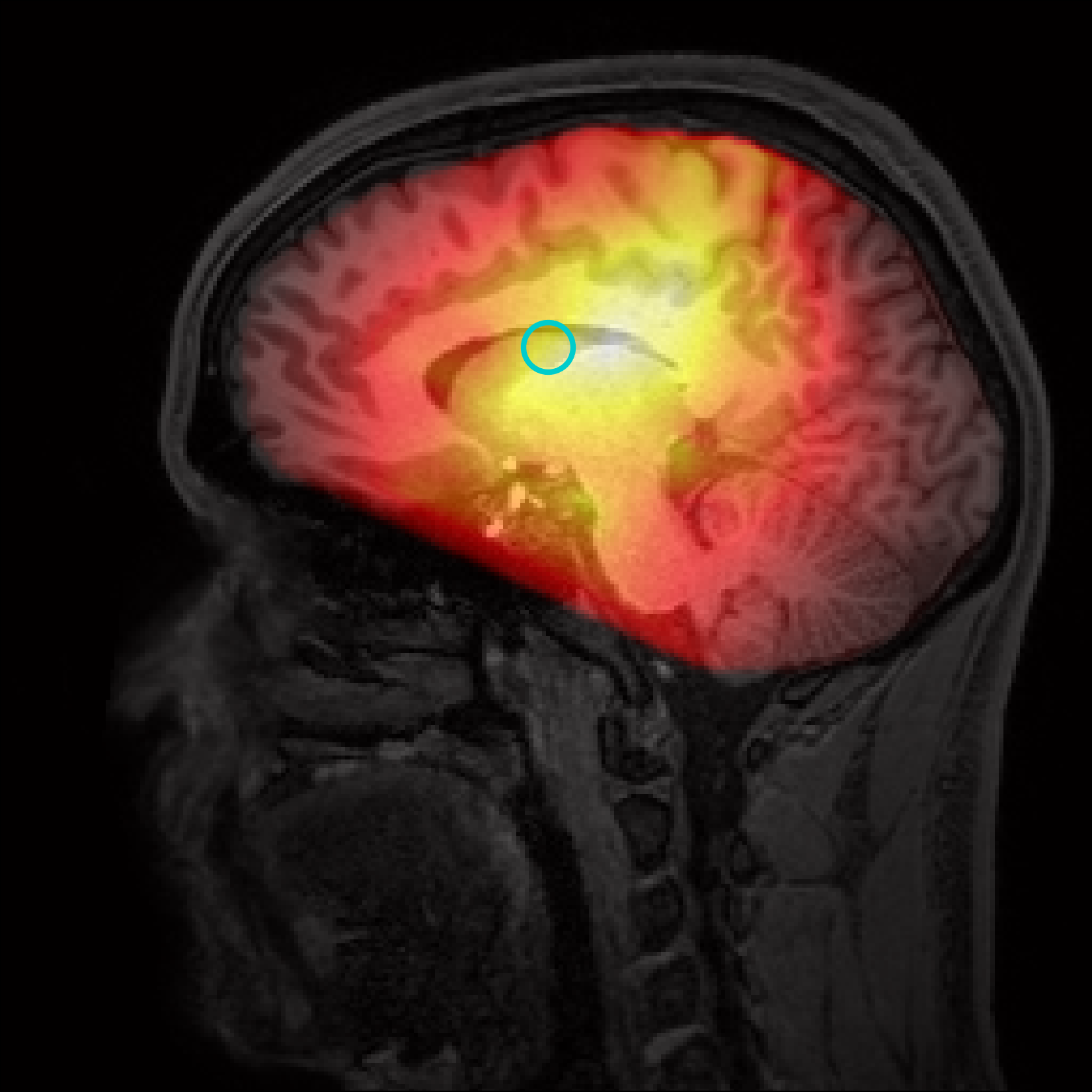}}
        \end{minipage}\hspace{4cm}\begin{minipage}{0.03\linewidth}
        \rotatebox{90}{Far-field}
    \end{minipage}
    \end{minipage}\end{minipage}

    \begin{minipage}{0.92\linewidth}
        \begin{minipage}{0.03\linewidth}
        \rotatebox{90}{sLORETA3D}
    \end{minipage}\hspace{-2.88cm}\begin{minipage}{\linewidth}
        \begin{minipage}{0.25\linewidth}
    \centering
            \rotatebox{180}{\includegraphics[trim={10cm 11.5cm 10cm 12.8cm},clip,width=0.7\textwidth]{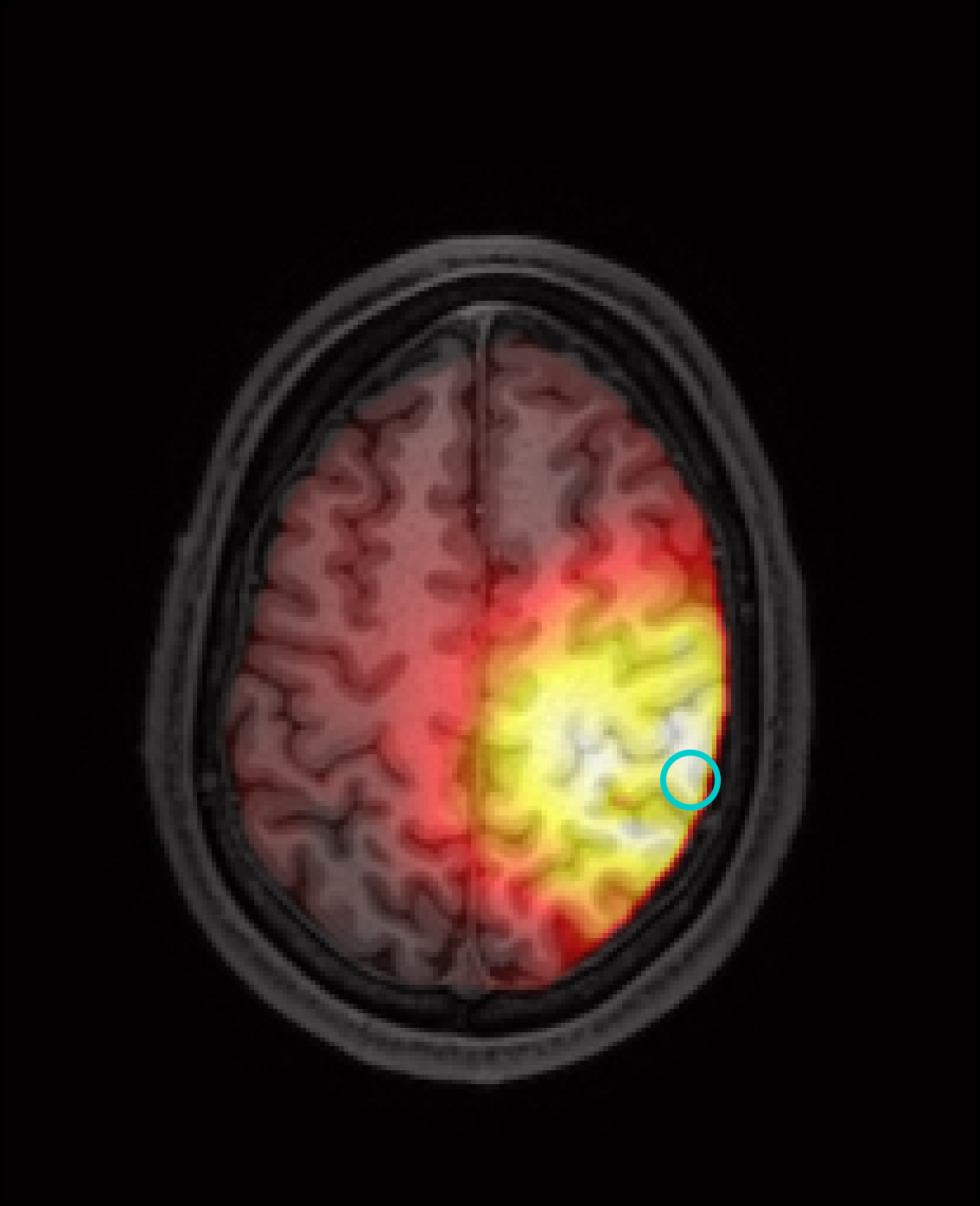}}
        \end{minipage}\hspace{-0.6cm}\begin{minipage}{0.25\linewidth}
            \centering
            \rotatebox{180}{\includegraphics[trim={10cm 13.5cm 10cm 9cm},clip,width=0.8\linewidth]{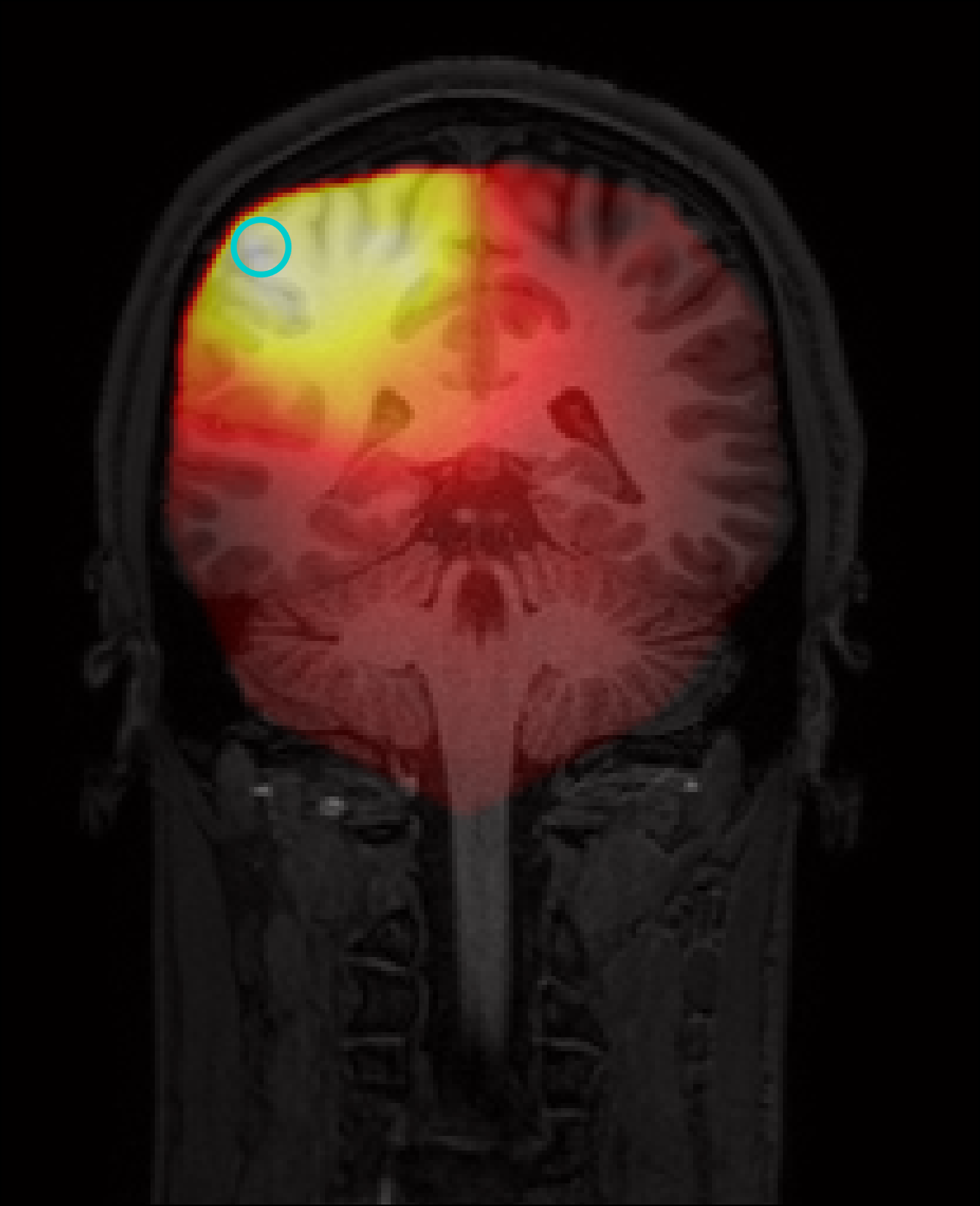}}
        \end{minipage}\hspace{-0.5cm}\begin{minipage}{0.25\linewidth}
            \centering
            \rotatebox{180}{\includegraphics[trim={13.5cm 13cm 12cm 9cm},clip,width=\linewidth]{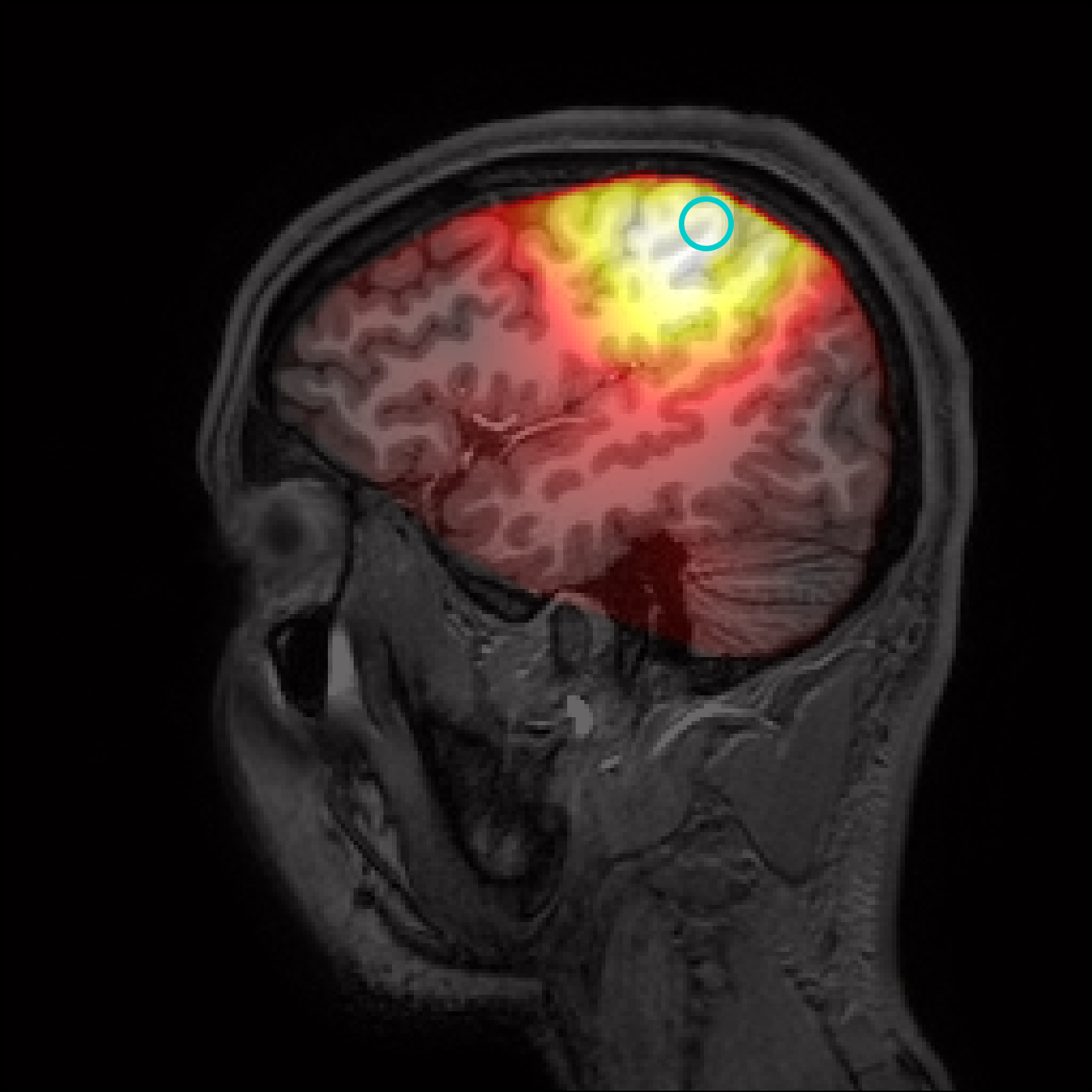}}
        \end{minipage}\hspace{4cm}\begin{minipage}{0.03\linewidth}
        \rotatebox{90}{Near-field}
    \end{minipage}

    \begin{minipage}{0.25\linewidth}
    \centering
            \rotatebox{180}{\includegraphics[trim={10cm 11.5cm 10cm 12.8cm},clip,width=0.7\textwidth]{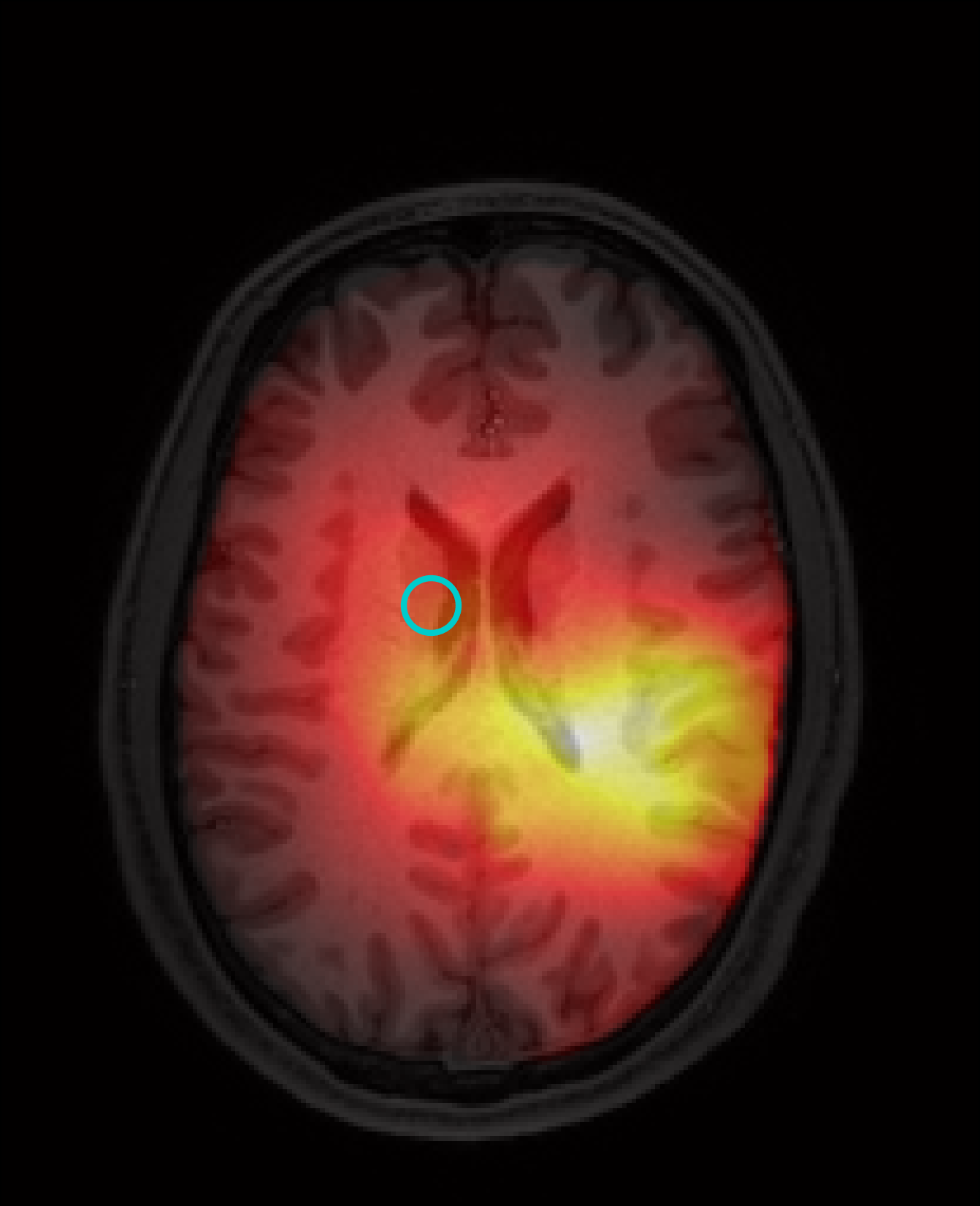}}
        \end{minipage}\hspace{-0.6cm}\begin{minipage}{0.25\linewidth}
            \centering
            \rotatebox{180}{\includegraphics[trim={10cm 13.5cm 10cm 9cm},clip,width=0.8\linewidth]{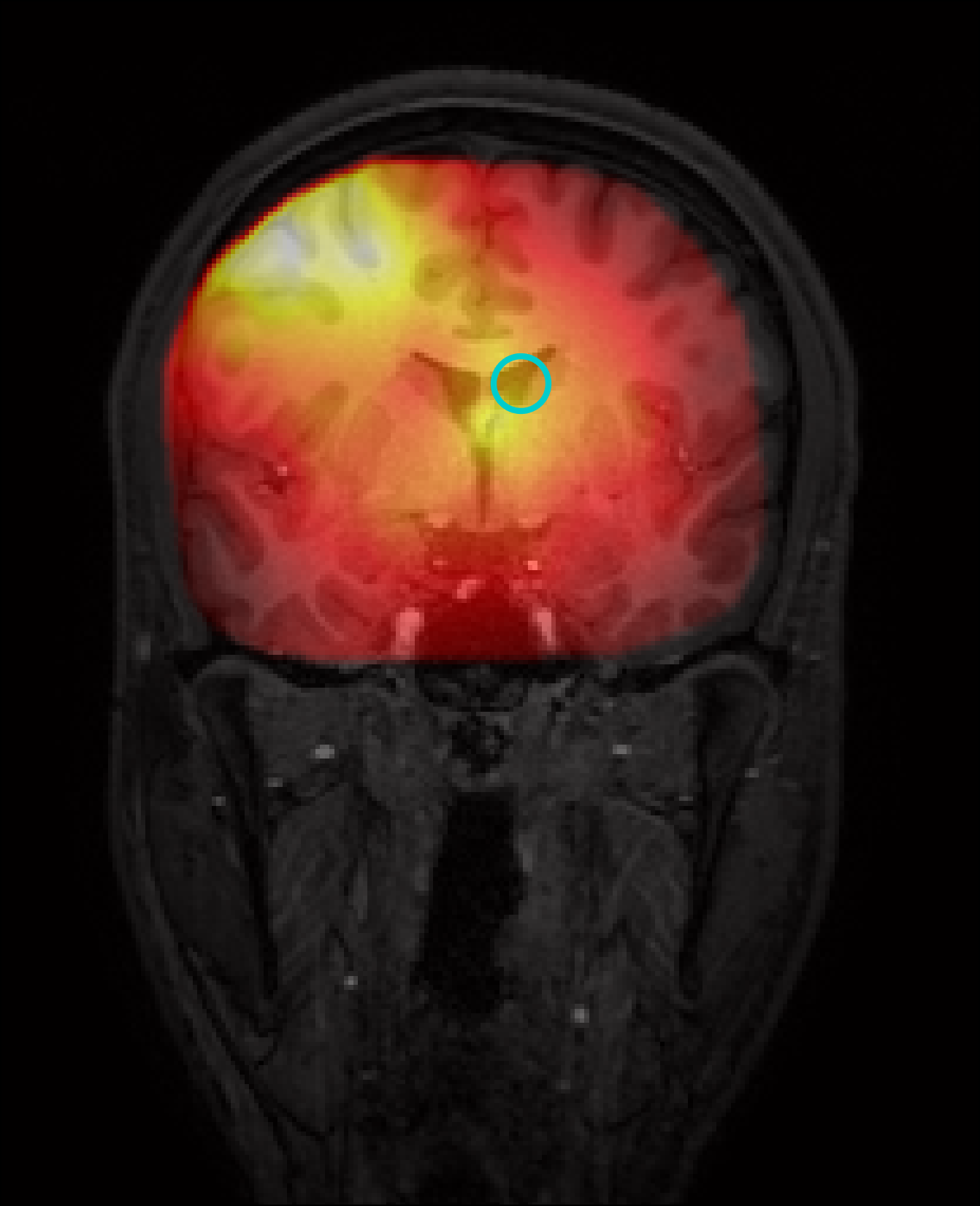}}
        \end{minipage}\hspace{-0.5cm}\begin{minipage}{0.25\linewidth}
            \centering
            \rotatebox{180}{\includegraphics[trim={13.5cm 13cm 12cm 9cm},clip,width=\linewidth]{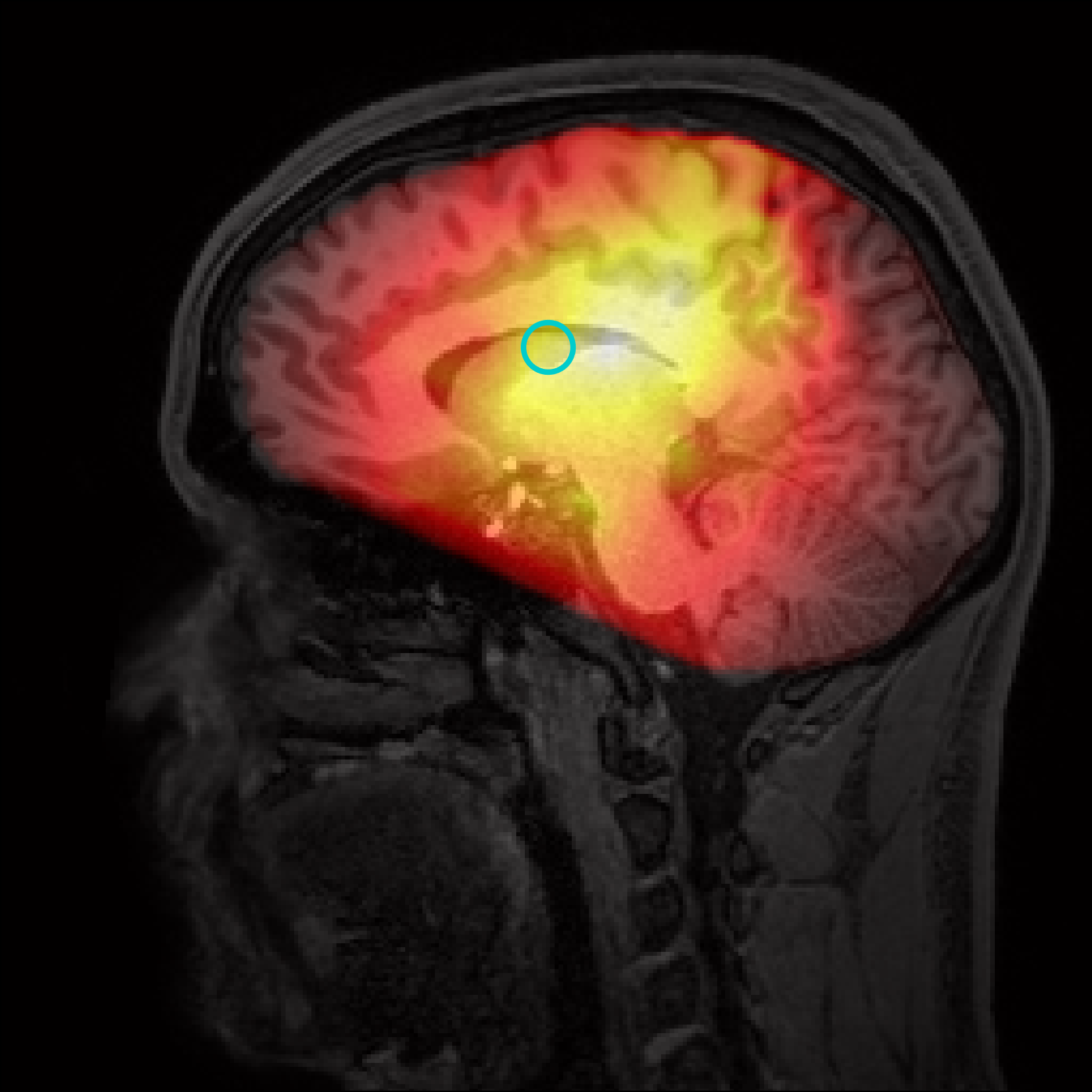}}
        \end{minipage}\hspace{4cm}\begin{minipage}{0.03\linewidth}
        \rotatebox{90}{Far-field}
    \end{minipage}
    \end{minipage}

    \begin{minipage}{\linewidth}
        \begin{minipage}{0.03\linewidth}
        \rotatebox{90}{UGE}
    \end{minipage}\hspace{-2.88cm}\begin{minipage}{\linewidth}
        \begin{minipage}{0.25\linewidth}
    \centering
            \rotatebox{180}{\includegraphics[trim={10cm 11.5cm 10cm 12.8cm},clip,width=0.7\textwidth]{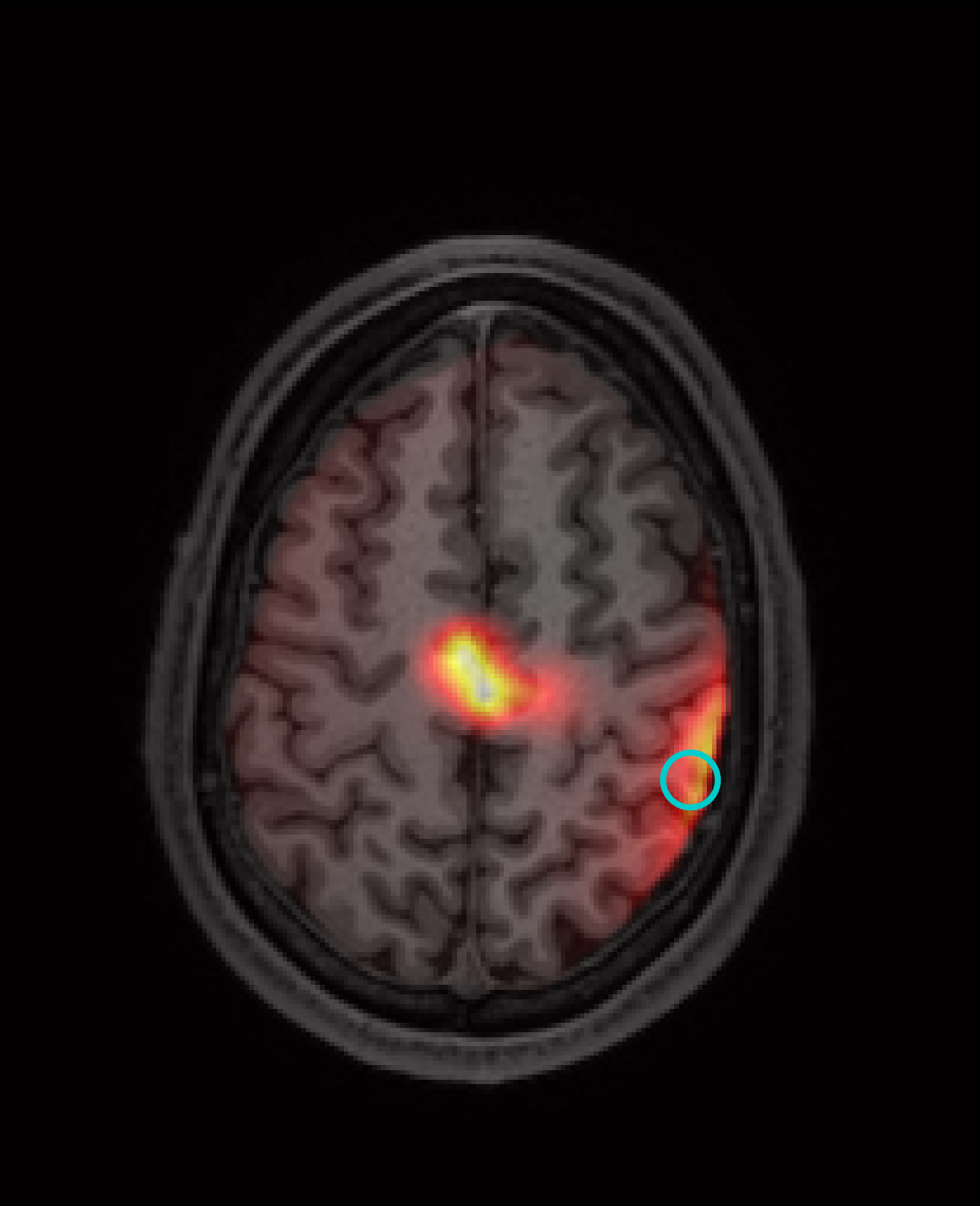}}
        \end{minipage}\hspace{-0.6cm}\begin{minipage}{0.25\linewidth}
            \centering
            \rotatebox{180}{\includegraphics[trim={10cm 13.5cm 10cm 9cm},clip,width=0.8\linewidth]{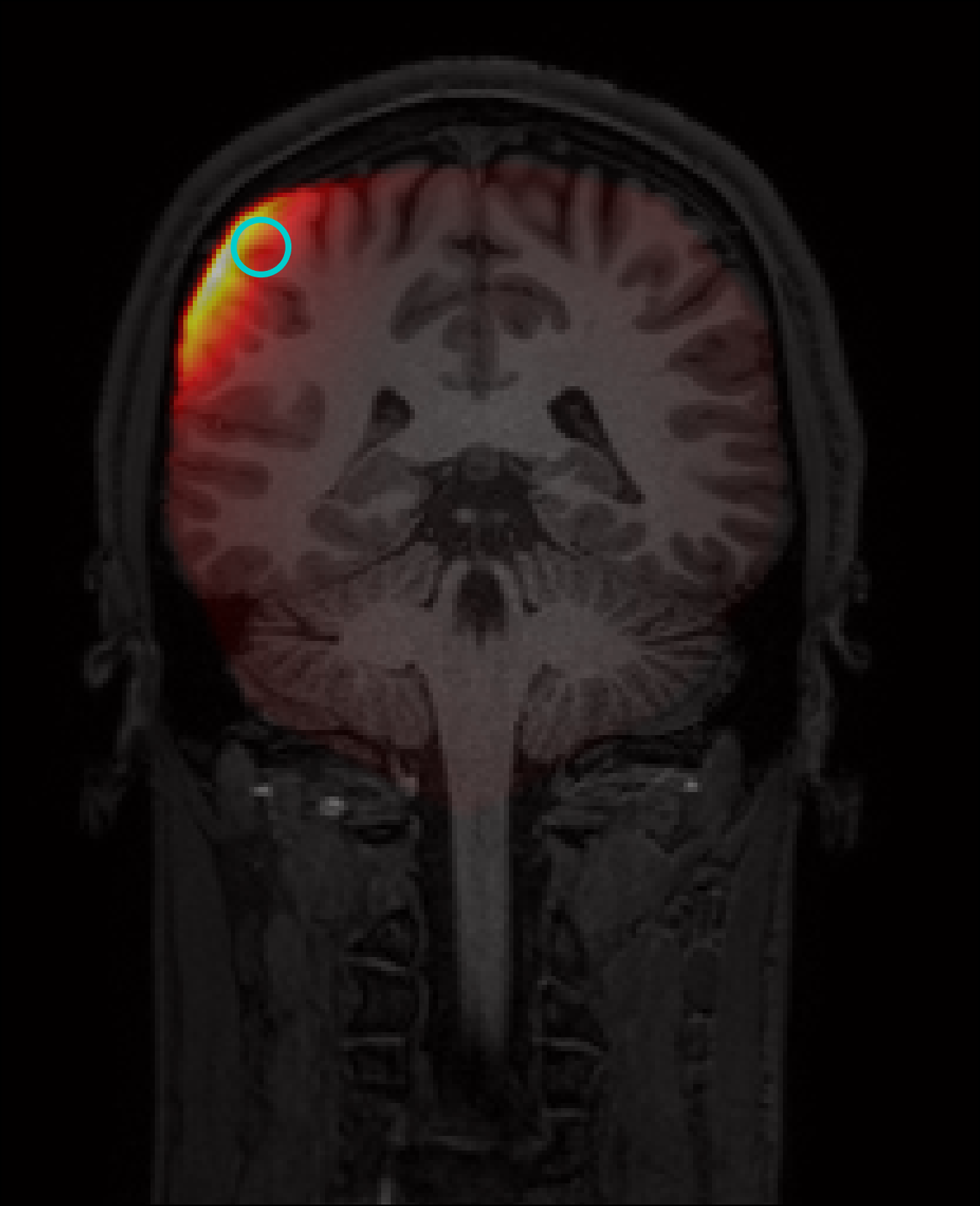}}
        \end{minipage}\hspace{-0.5cm}\begin{minipage}{0.25\linewidth}
            \centering
            \rotatebox{180}{\includegraphics[trim={13.5cm 13cm 12cm 9cm},clip,width=\linewidth]{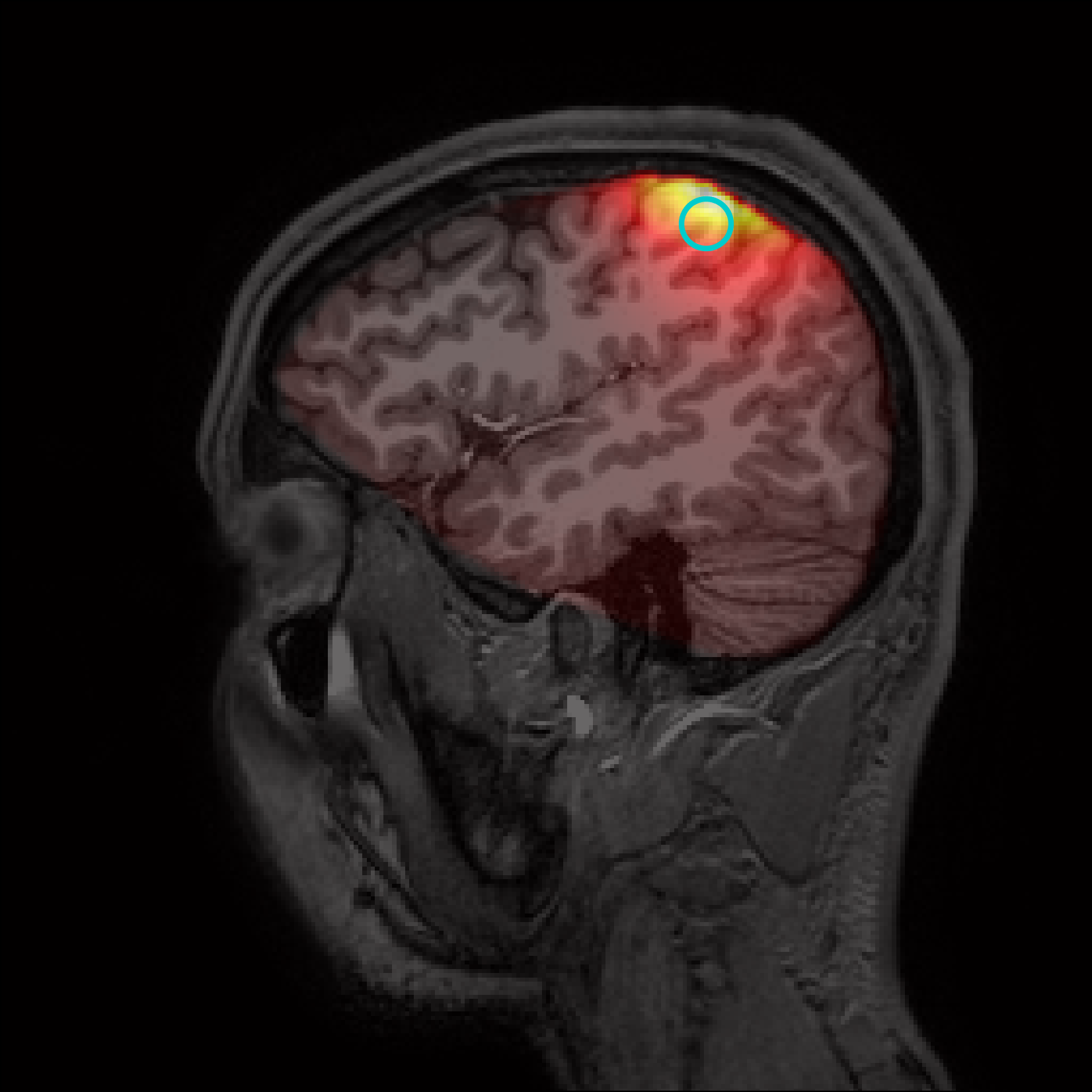}}
        \end{minipage}\hspace{4cm}\begin{minipage}{0.03\linewidth}
        \rotatebox{90}{Near-field}
    \end{minipage}

    \begin{minipage}{0.25\linewidth}
    \centering
            \rotatebox{180}{\includegraphics[trim={10cm 11.5cm 10cm 12.8cm},clip,width=0.7\textwidth]{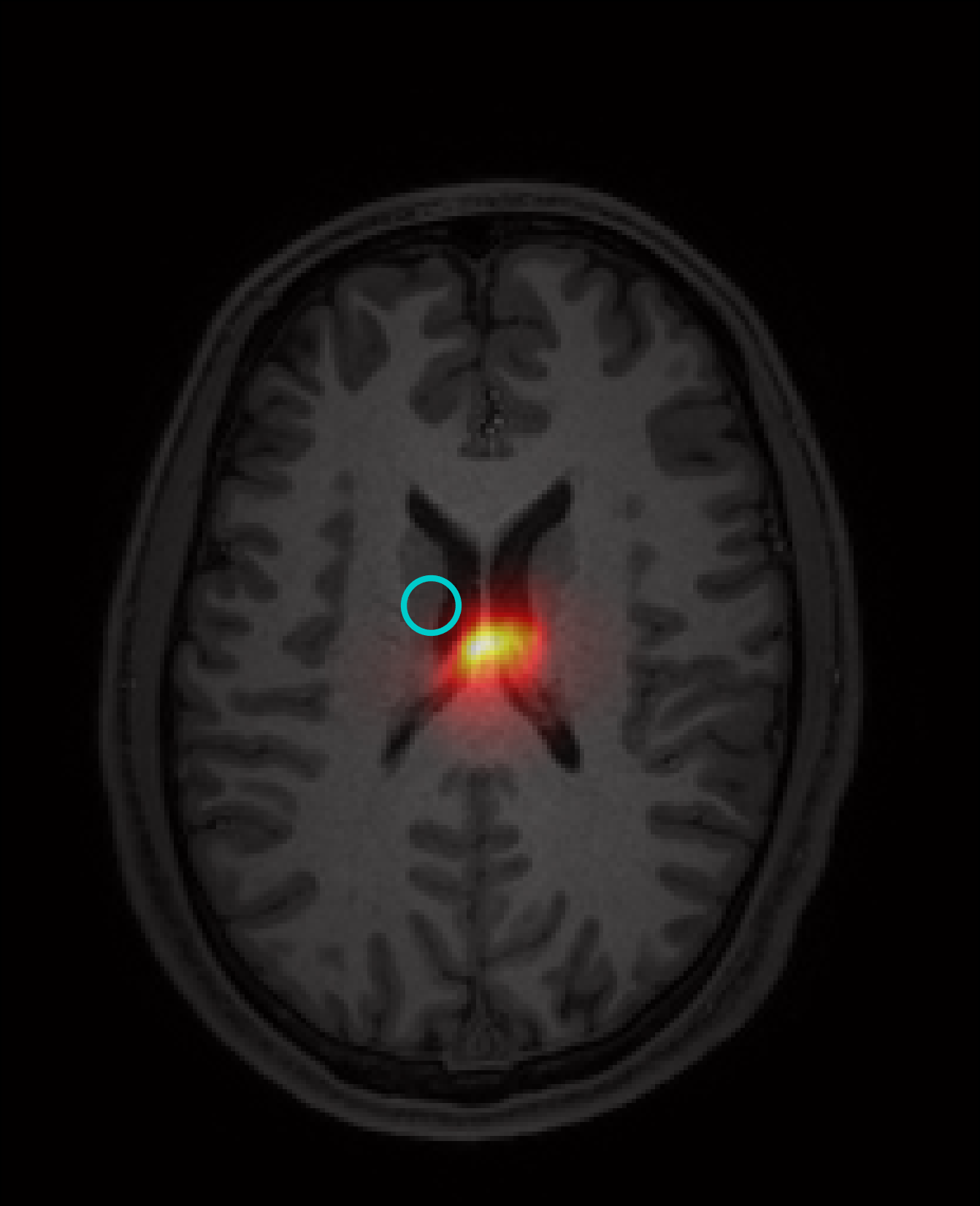}}
        \end{minipage}\hspace{-0.6cm}\begin{minipage}{0.25\linewidth}
            \centering
            \rotatebox{180}{\includegraphics[trim={10cm 13.5cm 10cm 9cm},clip,width=0.8\linewidth]{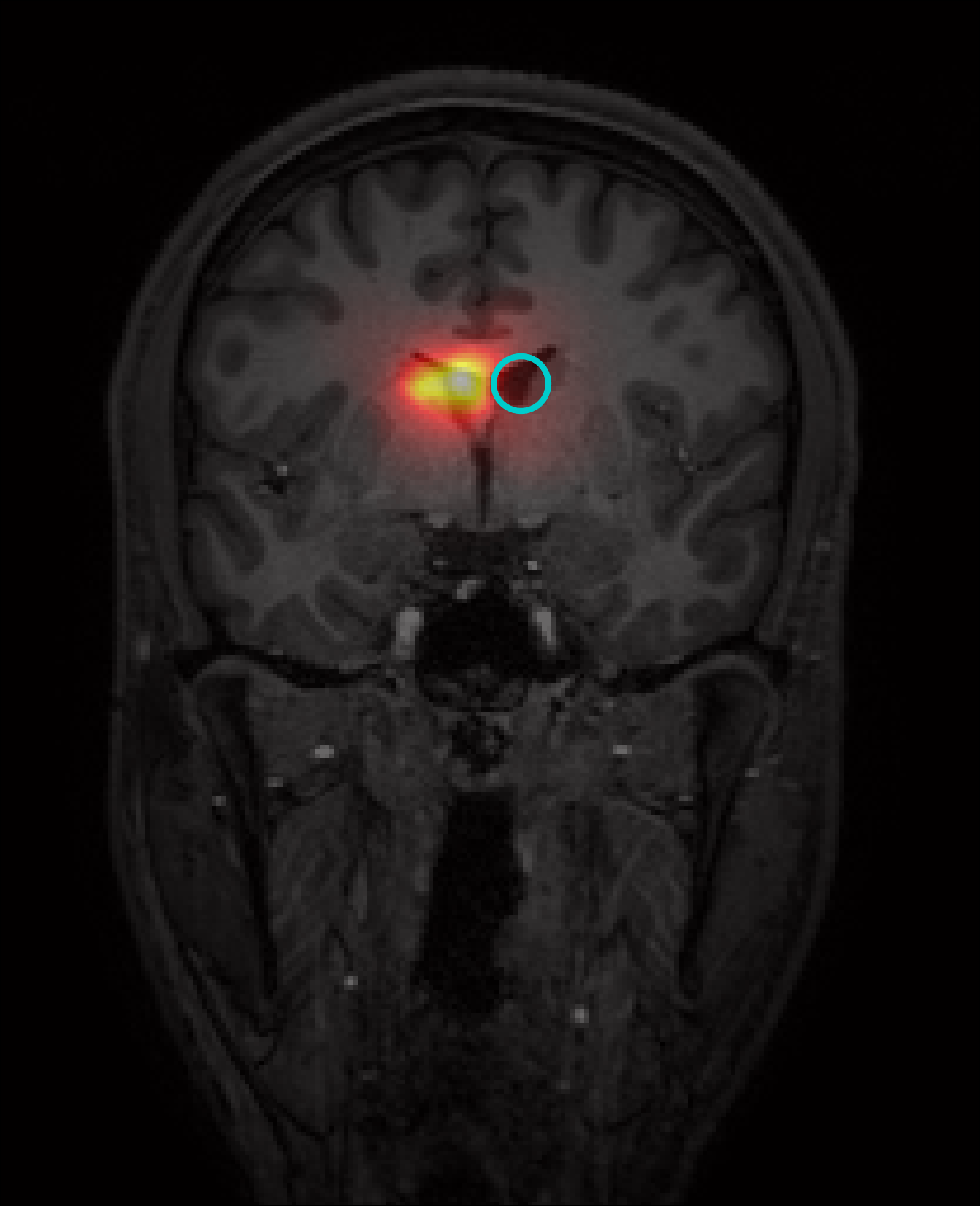}}
        \end{minipage}\hspace{-0.5cm}\begin{minipage}{0.25\linewidth}
            \centering
            \rotatebox{180}{\includegraphics[trim={13.5cm 13cm 12cm 9cm},clip,width=\linewidth]{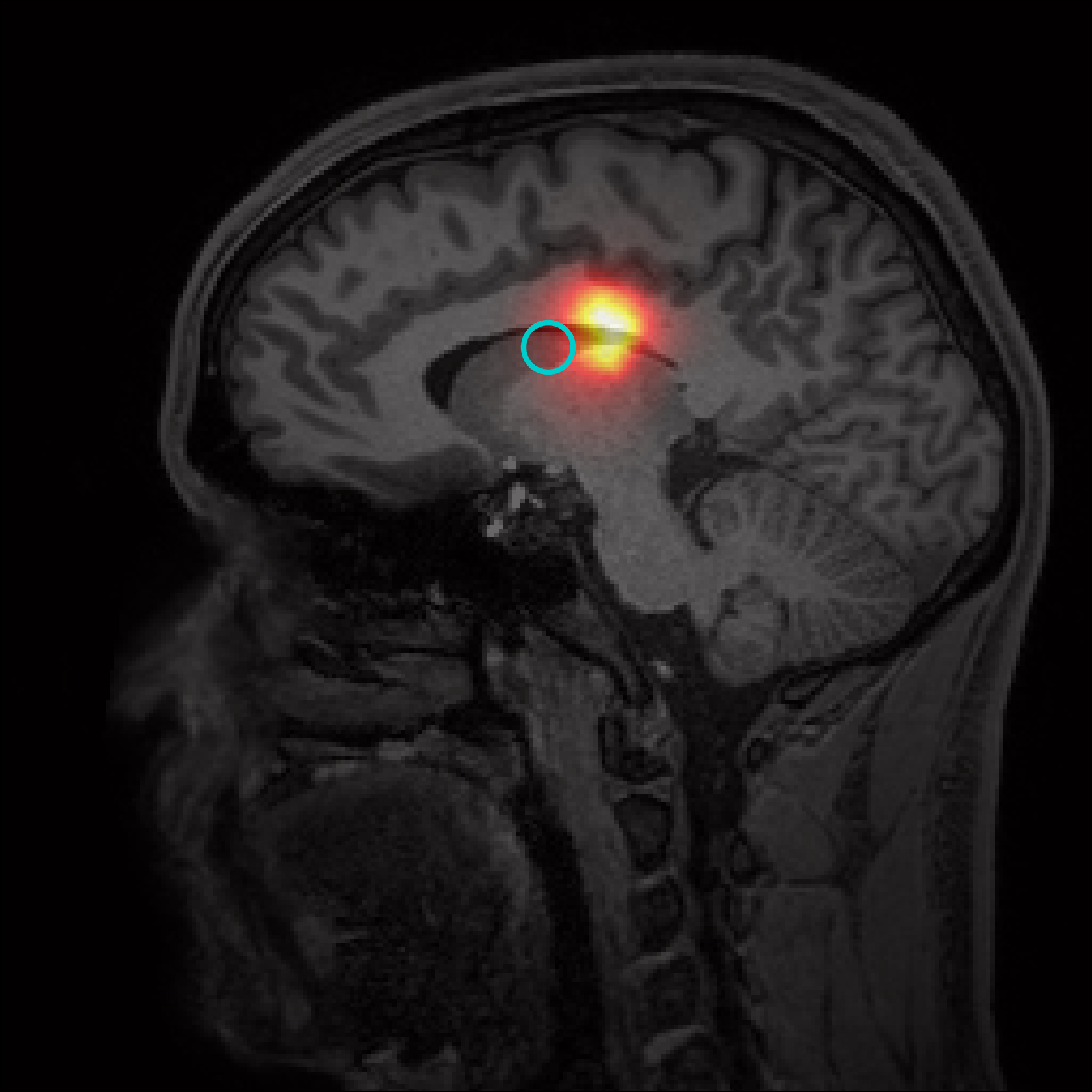}}
        \end{minipage}\hspace{4cm}\begin{minipage}{0.03\linewidth}
        \rotatebox{90}{Far-field}
    \end{minipage}
    \end{minipage}
    \end{minipage}
    
    \end{minipage}\end{minipage}\hspace{-3.2cm}\begin{minipage}{0.08\linewidth}
        \includegraphics[width=\linewidth]{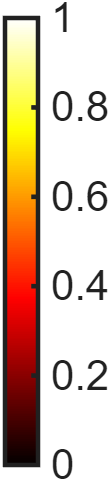}
    \end{minipage}
    \caption{Z-score distributions of sLORETA, sLORETA3D, and UGE interpolated over the MRI slices, where the superficial near-field source and deep far-field source are located in subsequent image rows. The turquoise circle indicates the location of the corresponding source.}
    \label{fig:recsonbrain}
\end{figure}

\clearpage

\section{Discussion}
In this paper, we have generalized the theory of unbiased localization achieved by standardization \cite{PascualMarqui2002,Lahtinen2024} to unbiased parameter estimation and signal reconstruction, i.e., generalizing the unbiased outputs from one to an arbitrary number. The generalization provides an answer to the common criticism of standardized methods, namely that the Z-scores are not the actual solution to the inverse problem: The new perspective shows us that if we think of the recoverable object as a sparse vector field, the unbiased estimate provides the regions where the field is non-zero, while a fitting strategy must be used to recover the actual solution. This is congruent with the classical dipole fitting approach, where a vector from the field is recovered by the fitting procedure, and the most likely vectors are decided by the goodness-of-fit measure. In fact, it has been proven that there exists an isomorphism between sLORETA and any dipole fitting approach \cite{MalteHoltershinken2024slords}. Moreover, a least square fitting or a Gaussian fitting approach has no depth-bias if the covariance structure of the corresponding Bayesian model's innovation covariance matches with the observation covariance, i.e., $\mathrm{cov}\left[{\bf y}\right]\propto \mathrm{cov}_{({\bf x},{\bf n})}\left[L{\bf x}+{\bf n}\right]$.

After that, we presented theorems that outline how many parameters can be recovered exactly in noiseless cases, based on the properties of the system matrix or lead field. The exact reconstruction under observation noise was doomed; hence, the conditions for exact reconstruction were divided into strongly and weakly reconstructible parameters, with the latter relaxing the requirements away from the exact reconstruction. In addition, we found the probability measure for weakly reconstructing parameters, meaning that the partial ordering of the non-zero elements is preserved. Coupling the measure with the concept of signal-to-noise ratio, we derive the final conclusion that standardized methods endure significant measurement noise because the signal-to-noise ratio equalizes the probability at every depth to the same value, which should be overwhelmingly high in most practical cases. However, our results show that there is a trade-off between the number of sensors and the signal-to-noise ratio. This could give the mathematical explanation for the conflicting results with practical brain source imaging and the recommended number of electroencephalography electrodes \cite{Sohrabpour2015Electrodenum,Montoya2021Electrodenum,Moritz2022Electrodenum} as well as the motivation for the equal or improved source estimation with well-targeted sensors over a high-density sensors \cite{SolerAndres2022ElectrodenumTargeted,Horrillo-Maysonnial2023ElectrodenumTargeted} when the measurements are highly noisy.

The assumption of the Gaussian model and fixed discrete source positions yields a new unbiased method we call the Unbiased Gaussian Estimate (UGE), a generalization of sLORETA. The method was tested against the Spatial Total Variation Split Bregman (STVSB) algorithm \cite{Montesinos2014STVSB} for noisy phantom image reconstruction, and against sLORETA and its parameter variants for source estimation in a two-dimensional conductivity disk and a 3-dimensional multicompartment head model. The results demonstrate the properties of UGE predicted by the given theorems. While UGE is highly impractical for recovering images, unlike STVSB, it can produce a distinguishable phantom even when the problem is not sparse, and it does not reconstruct as much noise back into the image as STVSB. Moreover, the variation with a priori knowledge of the ratios between colors is able to recover the phantom from highly noisy data, almost exactly, once the noise is identically and independently distributed, and the number of line samples agrees with the requirements for unique recovery in the noiseless case. Interestingly, the number of samples provided to STVSB has a vanishingly small impact on the quality of the recovery, which could be a byproduct of its emphasis on maintaining sharp edges between neighboring contrast pixels. 

The source imaging results highlight the challenges that standardized methodologies face in estimating multiple sources \cite{Lahtinen2024}. While the region of high Z-score most often covers both source locations, deducing the exact locations of those sources is hard or even impossible. Since UGE's multiparameter framework allows for a notion of multisource reconstruction, it provides significantly clearer source estimation and is more accurate than both sLORETA estimates. The superiority of UGE is not surprising, as the Bayesian model of UGE agrees better with the problem of recovering sources with multidimensional orientation.

The paradigm shift from post-hoc weighting to modifying the system matrix also has consequences for the development and software implementations of unbiased methods: the previous scheme focused on transforming a regular estimate into a standardized estimate via post-hoc weights. Defining the transformation can be complicated when the Bayesian formulation is unknown or not Gaussian; for example, SSLOFO \cite{HeshengLiu2005SsL} uses fixed-point iterations and a shrinkage operation to determine the post-hoc weights for a non-Gaussian model. The spatiotemporal standardization used with Kalman filtering is known to be the major computational bottleneck \cite{LahtinenJoonas2026StableSKF,Dilshanie2026SKFParameters}. The theoretical and computational difficulties can now be bypassed by making the system matrix unbiased using singular value decomposition. Furthermore, recent developments in Bayesian source imaging already make use of the right-hand-side singular sub-matrices to improve grouped parameter estimation for non-Gaussian priors \cite{Lahtinen2025SVD}, such as Laplace-distributed dipolar sources with 2- or 3-dimensional orientations. Modifying the system matrix with a large spectrum for sparsity-promoting inverse algorithms could be numerically more accurate than using an iterative algorithm to incrementally reduce bias and correct the transformation. 

As future work, we are looking for new applications where unbiased estimation could be useful. In addition, we are developing a new method that utilizes singular value decomposition both to find an unbiased solution and to refine the parameter estimation. Technically speaking, extending unbiased estimation to continuous problems looks more promising than ever.

\section*{Data availability statement}
The data that support the findings of this study are available upon reasonable request from the
author. The Dataset of the Ernie subject is available at the SimNIBS 4 GitHub page: \url{https://simnibs.github.io/simnibs/build/html/index.html}.

\section*{Funding}
The work of J.\ Lahtinen was supported by the Research Council of Finland (RCF) through the Flagship of Advanced Mathematics for Sensing, Imaging and modeling (FAME) (359185), and Doctoral Education Pilot on Advanced Mathematics for Modelling, Sensing, and Imaging (DREAM), Ministry of Education and Culture, Finland, VN/3137/2024.


\section*{References}

\bibliography{sample}

\begin{thebibliography}{10}

\bibitem{Grech2008}
Roberta Grech, Tracey Cassar, Joseph Muscat, Kenneth~P Camilleri, Simon~G Fabri, Michalis Zervakis, Petros Xanthopoulos, Vangelis Sakkalis, and Bart Vanrumste.
\newblock Review on solving the inverse problem in {EEG} source analysis.
\newblock {\em Journal of neuroengineering and rehabilitation}, 5(1):25--25, 2008.

\bibitem{Calvetti2009}
Daniela Calvetti, Harri Hakula, Sampsa Pursiainen, and Erkki Somersalo.
\newblock Conditionally {G}aussian {H}ypermodels for {C}erebral {S}ource {L}ocalization.
\newblock {\em SIAM Journal on Imaging Sciences}, 2(3):879--909, 2009.

\bibitem{GravedePeralta009NEIP}
Rolando Grave~de Peralta, Olaf Hauk, Sara~L. Gonzalez, and Fabio Babiloni.
\newblock The neuroelectromagnetic inverse problem and the zero dipole localization error.
\newblock {\em Computational Intelligence and Neuroscience}, 2009(1):137--147, 2009.

\bibitem{GuofaShou2007StEI}
Guofa Shou, Ling Xia, and Mingfeng Jiang.
\newblock Solving the {E}lectrocardiography {I}nverse {P}roblem by {U}sing an {O}ptimal {A}lgorithm {B}ased on the {T}otal {L}east {S}quares {T}heory.
\newblock In {\em Third International Conference on Natural Computation (ICNC 2007)}, volume~5, pages 115--119. IEEE, 2007.

\bibitem{CarreraJess2005hydrogeology}
Jes~s Carrera, Andr~s Alcolea, Agust~n Medina, Juan Hidalgo, and Luit~J. Slooten.
\newblock Inverse problem in hydrogeology.
\newblock {\em Hydrogeology journal}, 13(1):206--222, 2005.

\bibitem{JohnsonGeoPhys2015}
T.C Johnson and D~Wellman.
\newblock Accurate modelling and inversion of electrical resistivity data in the presence of metallic infrastructure with known location and dimension.
\newblock {\em Geophysical journal international}, 202(2):1096--1108, 2015.

\bibitem{ShiWenyangGeoPhys2023}
Wenyang Shi, Guangzhi Yin, Mi~Wang, Lei Tao, Mengjun Wu, Zhihao Yang, Jiajia Bai, Zhengxiao Xu, and Qingjie Zhu.
\newblock Progress of electrical resistance tomography application in oil and gas reservoirs for development dynamic monitoring.
\newblock {\em Processes}, 11(10):2950--, 2023.

\bibitem{PascualMarqui2002}
R~D Pascual-Marqui.
\newblock Standardized low-resolution brain electromagnetic tomography (s{LORETA}): technical details.
\newblock {\em Methods and findings in experimental and clinical pharmacology}, 24:5--12, 2002.

\bibitem{Lahtinen2024}
Joonas Lahtinen.
\newblock On bias and its reduction via standardization in discretized electromagnetic source localization problems.
\newblock {\em Inverse Problems}, 40(9):095002, July 2024.

\bibitem{Lahtinen2024SKF}
Joonas Lahtinen, Paavo Ronni, Narayan~Puthanmadam Subramaniyam, Alexandra Koulouri, Carsten Wolters, and Sampsa Pursiainen.
\newblock Standardized {K}alman filtering for time serial source localization of simultaneous subcortical and cortical brain activity.
\newblock \url{www.techrxiv.org/722443/UqPLPdTHCJ-oxYE2bCoI6Q}, 2024.

\bibitem{WagnerMichael2004sLORscr}
Michael Wagner, Manfred Fuchs, and Jörn Kastner.
\newblock Evaluation of s{LORETA} in the presence of noise and multiple sources.
\newblock {\em Brain topography}, 16(4):277--280, 2004.

\bibitem{LealAlberto2008sLORscr}
Alberto~J.R Leal, Ana~I Dias, José~P Vieira, Ana Moreira, Luís Távora, and Eulália Calado.
\newblock Analysis of the dynamics and origin of epileptic activity in patients with tuberous sclerosis evaluated for surgery of epilepsy.
\newblock {\em Clinical neurophysiology}, 119(4):853--861, 2008.

\bibitem{Dumpelmann2012sLORepil}
Matthias Dümpelmann, Tonio Ball, and Andreas Schulze-Bonhage.
\newblock s{LORETA} allows reliable distributed source reconstruction based on subdural strip and grid recordings.
\newblock {\em Human brain mapping}, 33(5):1172--1188, 2012.

\bibitem{deGooijer2013_14}
Karin~L. de~Gooijer-van~de Groep, Frans~S.S. Leijten, Cyrille~H. Ferrier, and Geertjan~J.M. Huiskamp.
\newblock Inverse modeling in magnetic source imaging: Comparison of {MUSIC}, {SAM}(g2), and s{LORETA} to interictal intracranial {EEG}.
\newblock {\em Human brain mapping}, 34(9):2032--2044, 2013.

\bibitem{Coito2019_18}
Ana Coito, Silke Biethahn, Janina Tepperberg, Margherita Carboni, Ulrich Roelcke, Margitta Seeck, Pieter Mierlo, Markus Gschwind, and Serge Vulliemoz.
\newblock Interictal epileptogenic zone localization in patients with focal epilepsy using electric source imaging and directed functional connectivity from low‐density {EEG}.
\newblock {\em Epilepsia open}, 4(2):281--292, 2019.

\bibitem{LiRui2021_7}
Rui Li, Chris Plummer, Simon~J. Vogrin, William~P. Woods, Levin Kuhlmann, Ray Boston, David~T.J. Liley, Mark~J. Cook, and David~B. Grayden.
\newblock Interictal spike localization for epilepsy surgery using magnetoencephalography beamforming.
\newblock {\em Clinical neurophysiology}, 132(4):928--937, 2021.

\bibitem{KimKwangYeon2022sLORepil}
Kwang~Yeon Kim, Ja-Un Moon, Joo-Young Lee, Tae-Hoon Eom, Young-Hoon Kim, and In-Goo Lee.
\newblock Distributed source localization of epileptiform discharges in juvenile myoclonic epilepsy: Standardized low-resolution brain electromagnetic tomography (sloreta) study.
\newblock {\em Medicine (Baltimore)}, 101(26):e29625--e29625, 2022.

\bibitem{vandeVelden2023}
Daniel van~de Velden, Ev-Christin Heide, Caroline Bouter, Jan Bucerius, Christian~H. Riedel, and Niels~K. Focke.
\newblock Effects of inverse methods and spike phases on interictal high-density {EEG} source reconstruction.
\newblock {\em Clinical neurophysiology}, 156:4--13, 2023.

\bibitem{CandesEmmanuel2006}
E.J. Candes, J.~Romberg, and T.~Tao.
\newblock Robust uncertainty principles: exact signal reconstruction from highly incomplete frequency information.
\newblock {\em IEEE transactions on information theory}, 52(2):489--509, 2006.

\bibitem{ChartrandR2008randomvstomographicsampling}
R.~Chartrand.
\newblock Nonconvex compressive sensing and reconstruction of gradient-sparse images: Random vs. tomographic fourier sampling.
\newblock In {\em 2008 15th IEEE International Conference on Image Processing}, pages 2624--2627. IEEE, 2008.

\bibitem{Goldstein2009SplitBregman}
Tom Goldstein and Stanley Osher.
\newblock The split bregman method for l1-regularized problems.
\newblock {\em SIAM journal on imaging sciences}, 2(2):323--343, 2009.

\bibitem{Montesinos2014STVSB}
Paula Montesinos, Juan Felipe~P.J. Abascal, Lorena Cussó, Juan~José Vaquero, and Manuel Desco.
\newblock Application of the compressed sensing technique to self-gated cardiac cine sequences in small animals.
\newblock {\em Magnetic resonance in medicine}, 72(2):369--380, 2014.

\bibitem{HamalainenMNE}
M.S Hämäläinen and R.J Ilmoniemi.
\newblock Interpreting magnetic fields on the brain: minimum norm estimates.
\newblock {\em Medical \& biological engineering \& computing}, 32(1):35--42, 1994.

\bibitem{PUONTI2020117044}
Oula Puonti, Koen {Van Leemput}, Guilherme~B. Saturnino, Hartwig~R. Siebner, Kristoffer~H. Madsen, and Axel Thielscher.
\newblock Accurate and robust whole-head segmentation from magnetic resonance images for individualized head modeling.
\newblock {\em NeuroImage}, 219:117044, 2020.

\bibitem{Wolters2004}
C.~H. Wolters, L.~Grasedyck, and W.~Hackbusch.
\newblock {Efficient computation of lead field bases and influence matrix for the {FEM}-based {EEG} and {MEG} inverse problem}.
\newblock {\em Inverse Problems}, 20(4):1099--1116, 2004.

\bibitem{ram06}
C.~Ramon, P.~H. Schimpf, and J.~Haueisen.
\newblock Influence of head models on {EEG} simulations and inverse source localizations.
\newblock {\em Biomed. Eng. Online}, 5(10), 2006.

\bibitem{MalteHoltershinken2024slords}
Malte~B. Höltershinken, Tim Erdbrügger, and Carsten~H. Wolters.
\newblock sloreta is equivalent to single dipole scanning, 2024.

\bibitem{Sohrabpour2015Electrodenum}
Abbas Sohrabpour, Yunfeng Lu, Pongkiat Kankirawatana, Jeffrey Blount, Hyunmi Kim, and Bin He.
\newblock Effect of {EEG} electrode number on epileptic source localization in pediatric patients.
\newblock {\em Clinical neurophysiology}, 126(3):472--480, 2015.

\bibitem{Montoya2021Electrodenum}
Jair Montoya-Martínez, Jonas Vanthornhout, Alexander Bertrand, and Tom Francart.
\newblock Effect of number and placement of {EEG} electrodes on measurement of neural tracking of speech.
\newblock {\em PloS one}, 16(2):e0246769--, 2021.

\bibitem{Moritz2022Electrodenum}
Moritz Tacke, Katharina Janson, Katharina Vill, Florian Heinen, Lucia Gerstl, Karl Reiter, and Ingo Borggraefe.
\newblock Effects of a reduction of the number of electrodes in the {EEG} montage on the number of identified seizure patterns.
\newblock {\em Scientific reports}, 12(1):4621--7, 2022.

\bibitem{SolerAndres2022ElectrodenumTargeted}
Andres Soler, Luis~Alfredo Moctezuma, Eduardo Giraldo, and Marta Molinas.
\newblock Automated methodology for optimal selection of minimum electrode subsets for accurate {EEG} source estimation based on genetic algorithm optimization.
\newblock {\em Scientific reports}, 12(1):11221--18, 2022.

\bibitem{Horrillo-Maysonnial2023ElectrodenumTargeted}
A.~Horrillo-Maysonnial, T.~Avigdor, C.~Abdallah, D.~Mansilla, J.~Thomas, N.~von Ellenrieder, J.~Royer, B.~Bernhardt, C.~Grova, J.~Gotman, and B.~Frauscher.
\newblock Targeted density electrode placement achieves high concordance with traditional high-density {EEG} for electrical source imaging in epilepsy.
\newblock {\em Clinical neurophysiology}, 156:262--271, 2023.

\bibitem{HeshengLiu2005SsL}
Hesheng Liu, P.H Schimpf, Guoya Dong, Xiaorong Gao, Fusheng Yang, and Shangkai Gao.
\newblock Standardized shrinking {LORETA-FOCUSS (SSLOFO)}: a new algorithm for spatio-temporal {EEG} source reconstruction.
\newblock {\em IEEE transactions on biomedical engineering}, 52(10):1681--1691, 2005.

\bibitem{LahtinenJoonas2026StableSKF}
Joonas Lahtinen.
\newblock Stable {EEG} source estimation for standardized {K}alman filter using rate-of-change tracking.
\newblock {\em Computer methods and programs in biomedicine}, 278:109280--, 2026.

\bibitem{Dilshanie2026SKFParameters}
Dilshanie Prasikala, Joonas Lahtinen, Alexandra Koulouri, and Sampsa Pursiainen.
\newblock The effect of prior parameters on standardized {K}alman filter-based {EEG} source localization.
\newblock {\em Biomedical signal processing and control}, 121:110233--, 2026.

\bibitem{Lahtinen2025SVD}
Joonas Lahtinen and Alexandra Koulouri.
\newblock Enhanced localization and orientation estimations in focal {EEG} source imaging using {SVD}-based coordinate transform.
\newblock {\em Brain Topography}, 38(6):78, Oct 2025.

\end{thebibliography}

\end{document}